\newtheorem{theorem}{Theorem}
\newtheorem{lemma}[theorem]{Lemma}
\newtheorem{corollary}[theorem]{Corollary}
\newcommand{\comment}[1]{}
\def\l{\left}
\def\r{\right}
\renewcommand{\>}{\rangle}
\newcommand{\<}{\langle}
\newcommand{\ket}[1]{|#1\rangle}
\newcommand{\bra}[1]{\langle #1|}
\newcommand{\braket}[2]{\langle #1|#2\rangle}
\newcommand{\proj}[1]{\left|#1\right\>\!\left\<#1\right|}
\newcommand{\ot}{\otimes}
\newcommand{\eps}{\epsilon}
\newcommand{\ra}{\rightarrow}
\def\benum{\begin{enumerate}}
\def\eenum{\end{enumerate}}
\def\bit{\begin{itemize}}
\def\eit{\end{itemize}}
\def\mn{\medskip\noindent}
\newcommand{\be}{\begin{equation}}
\newcommand{\ee}{\end{equation}}
\newcommand{\tr}{\mathop{\mathrm{tr}}\nolimits}
\def\ba#1\ea{\begin{align}#1\end{align}}
\def\ban#1\ean{\begin{align*}#1\end{align*}}
\newcommand{\nn}{\nonumber\\}
\newcommand{\eq}[1]{(\ref{eq:#1})}
\newcommand{\fig}[1]{Fig.~\ref{fig:#1}}
\newcommand{\secref}[1]{Section~\ref{sec:#1}}
\newcommand{\lemref}[1]{Lemma~\ref{lem:#1}}
\newcommand{\thm}[1]{Theorem~\ref{thm:#1}}
\newcommand{\cor}[1]{Corollary~\ref{cor:#1}}
\DeclareMathOperator{\cyc}{cyc}
\DeclareMathOperator{\Par}{Par}
\DeclareMathOperator{\poly}{poly}
\def\bbC{{\mathbb{C}}}
\def\bbE{{\mathbb{E}}}
\def\bbR{{\mathbb{R}}}
\def\cS{{\cal S}}
\def\vs{{\vec{s}}}
\def\vsi{{\vec{\sigma}}}
\def\cm{{c_2^{\max}}}
\begin{document}
\title{Random tensor theory: extending random matrix theory to
  mixtures of random product states}
\author{A.~Ambainis}
\affiliation{Faculty of Computing, University of Latvia, Riga, Latvia}
\author{A.~W.~Harrow}
\affiliation{Department of Mathematics, University of Bristol,
  Bristol, U.K.}
\author{M.~B.~Hastings}
\affiliation{Microsoft Research, Station Q, CNSI Building,
University of California, Santa Barbara, CA, 93106}

\begin{abstract}
  We consider a problem in random matrix theory that is inspired by
  quantum information theory: determining the largest eigenvalue of a
  sum of $p$ random product states in $(\bbC^d)^{\ot k}$, where $k$
  and $p/ d^k$ are fixed while $d\ra\infty$.  When $k=1$, the
  Mar\v{c}enko-Pastur law determines (up to small corrections) not
  only the largest eigenvalue ($(1+\sqrt{p/d^k})^2$) but the smallest
  eigenvalue $(\min(0,1-\sqrt{p/d^k})^2)$ and the spectral density in
  between.  We use the method of moments to show that for $k>1$ the
  largest eigenvalue is still approximately $(1+\sqrt{p/d^k})^2$ and
  the spectral density approaches that of the Mar\v{c}enko-Pastur law,
  generalizing the random matrix theory result to the random tensor
  case.  Our bound on the largest eigenvalue has implications both for
  sampling from a particular heavy-tailed distribution and for a
  recently proposed quantum data-hiding and correlation-locking scheme
  due to Leung and Winter.

  Since the matrices we consider have neither independent entries nor
  unitary invariance, we need to develop new techniques for their
  analysis.  The main contribution of this paper is to give three
  different methods for analyzing mixtures of random product states: a
  diagrammatic approach based on Gaussian integrals, a combinatorial
  method that looks at the cycle decompositions of permutations and a
  recursive method that uses a variant of the Schwinger-Dyson
  equations.
\end{abstract}
\maketitle

\def\D{d}

\section{Introduction and related work}
\subsection{Background}
A classical problem in probability is to throw $p$ balls into $\D$ bins
and to observe the maximum occupancy of any bin.  If we set the ratio
$x=p/\D$ to a constant and take $\D$ large, this maximum occupancy is
$O(\ln \D / \ln \ln \D)$ with high probability (in fact, this bound is
tight, but we will not discuss that here).  There are two natural ways
to prove this, which we call the large deviation method and the trace
method.  First, we describe the large deviation method.  If the
occupancies of the bins are $z_1,\ldots,z_\D$, then each $z_i$ is
distributed approximately according to a Poisson distribution with
parameter $x$; i.e.  $\Pr[z_i=z] \approx x^z/e^xz!$.  Choosing $z \gg
\ln \D / \ln \ln \D$ implies that $\Pr[z_i\geq z]\ll 1/\D$ for each $i$.
Thus, the union bound implies that with high probability all of the $z_i$ are
$\leq O(\ln d / \ln \ln d)$.  More generally, the large deviation
method proceeds by: (1) representing a bad event (here, maximum
occupancy being too large) as the union of many simpler bad events
(here, any one $z_i$ being too large), then (2) showing that each
individual bad event is very unlikely, and (3) using the union bound
to conclude that with high probability none of the bad events occur.
This method has been used with great success throughout classical and
quantum information theory~\cite{HLSW04,HLW06,Ledoux}. 

This paper will discuss a problem in quantum information theory where
the large deviation method fails.  We will show how instead a
technique called the trace method can be effectively used.  For the
problem of balls into bins, the trace method starts with the bound
$\max z_i^m \leq z_1^m + \ldots + z_d^m$, where $m$ is a large
positive integer.  Next, we take the expectation of both sides and use
convexity to show that
$$\bbE[\max z_i] \leq \l(\bbE[\max z_i^m]\r)^{\frac{1}{m}}
\leq \D^{\frac{1}{m}} \l(\bbE[z_1^m]\r)^{\frac{1}{m}}.$$ Choosing $m$
to minimize the right-hand side can be shown to yield the optimal $\ln
\D / \ln \ln \D + O(1)$ bound for the expected maximum occupancy.  In
general, this approach is tight up to the factor of $d^{1/m}$.

The quantum analogue of balls-into-bins problem is to choose $p$ random 
unit
vectors $\ket{\varphi_1},\ldots,\ket{\varphi_p}$ from $\bbC^\D$ and to
consider the spectrum of the matrix
\be M_{p,d} = \sum_{s=1}^p {\varphi_s}, \label{eq:M-def}\ee
where we use $\varphi$ to denote $\proj{\varphi}$.
Again we are interested in the regime where $x=p/\D$ is fixed and
$\D\ra\infty$.  We refer to this case as the ``normalized
ensemble.'' We also consider a slightly modified version of the
problem in which the states $\ket{\hat{\varphi}_s}$ are drawn from
a complex Gaussian distribution with unit variance, so that
the expectation of $\braket{\hat{\varphi}_s}{\hat{\varphi}_s}$ is equal to
one. Call the ensemble in the modified problem the ``Gaussian
ensemble'' and define $\hat{M}_{p,d}=  \sum_{s=1}^p \hat{\varphi}_s$.
Note that $\hat{M}_{p,d}=\hat\Phi^\dag \hat\Phi$, where $\hat \Phi = \sum_{s=1}^p \ket\varphi\bra s$ is a $p\times
\D$ matrix where each entry is an i.i.d. complex Gaussian variable
with variance $1/\D$.  That is, $\hat \Phi = \sum_{s=1}^p 
\sum_{j=1}^d (a_{s,j}
+ ib_{s,j})\ket{s}\bra{j}$, with $a_{s,j},b_{s,j}$ i.i.d. real
Gaussians each with mean zero and variance $1/2d$.

What we call the Gaussian ensemble is more conventionally
 known as the Wishart distribution, and
has been extensively studied.  Additionally, we will see in
\secref{gaussian} that the normalized ensemble is nearly the same as
the Gaussian ensemble for large $\D$.  In either version of the
quantum problem, the larger space from which we draw vectors means
fewer collisions than in the discrete classical case.  The nonzero
part of the spectrum of $M$ has been well studied\cite{Verbaar94a,Verbaar94b,FZ97}, and
it lies almost entirely between $(1\pm\sqrt{x})^2$ as $\D\ra\infty$. 
This can be proven using a variety of techniques.  When $M$ is drawn
according to the Gaussian ensemble, its spectrum is described by
chiral random matrix theory\cite{Verbaar94a,Verbaar94b}.  This follows from the fact
that the spectrum of $M$ has the same distribution as the spectrum of
the square of the matrix \be
\label{eq:cm}
\begin{pmatrix}
0 & \hat\Phi \\
\hat\Phi^{\dagger} & 0
\end{pmatrix},
\ee
where $\hat\Phi$ is defined above.
 A variety of techniques have been used
to compute the spectrum\cite{FZ97,Anderson91a,Anderson91b,MP93}. 
The ability to use Dyson gas 
methods, or to perform exact integrals over the unitary group with a
Kazakov technique, has allowed even the detailed structure of the
eigenvalue spectrum near the edge to be worked out for this chiral
random matrix problem.   

A large-deviation approach for the $x\ll 1$ case was given in
\cite[appendix B]{LW-locking}.  In order to bound the spectrum of
$M_{p,d}$, they instead studied the Gram matrix $M_{p,d}' := \Phi
\Phi^\dag$, which has the same spectrum as $M$.  Next they considered
$\bra{\phi}M_{p,d}'\ket{\phi}$ for a random choice of $\ket{\phi}$.
This quantity has expectation 1 and, by Levy's lemma, is within $\eps$
of its expectation with probability $\geq 1-\exp(O(\D\eps^2))$.  On
the other hand, $\ket{\phi}\in \bbC^p$, which can be covered by an
$\eps$-net of size $\exp(O(p\ln 1/\eps))$.  Thus the entire spectrum
of $M_{p,d}'$ (and equivalently $M_{p,d}$) will be contained in $1\pm
O(\eps)$ with high probability, where $\eps$ is a function of $x$ that
approaches 0 as $x\ra 0$.

In this paper, we consider a variant of the above quantum problem in
which none of the techniques described above is directly applicable.
We choose our states $\ket{\varphi_s}$ to be product states in
$(\bbC^\D)^{\ot k}$; i.e.
$$\ket{\varphi_s} = \ket{\varphi_s^1} \ot \ket{\varphi_s^2} \ot
 \cdots \ot \ket{\varphi_s^k},$$
for $\ket{\varphi_s^1},\ldots,\ket{\varphi_s^k}\in \bbC^\D$.  We choose
the individual states $\ket{\varphi_s^a}$ again either uniformly from
all unit vectors in $\bbC^\D$ (the normalized product ensemble) or as 
Gaussian-distributed  vectors with $\bbE
[\braket{\hat{\varphi}_s^a}{\hat{\varphi}_s^a}]=1$ (the Gaussian
product ensemble).  The corresponding matrices are $M_{p,d,k}$ and
$\hat{M}_{p,d,k}$ respectively.  Note 
that $k=1$ corresponds to the case considered above; i.e. 
$M_{p,d,1} =M_{p,d}$ and  $\hat{M}_{p,d,1} =\hat{M}_{p,d}$.  We are
interested in the case when $k>1$ is fixed.  As above, we also fix the
parameter 
$x=p/\D^k$, while we take $\D\ra\infty$.  
And as above, we would like to
show that the spectrum lies almost entirely within the region
$(1\pm\sqrt{x})^2$ with high probability.

However, the Dyson gas and Kazakov
techniques\cite{FZ97,Anderson91a,Anderson91b,MP93} that were used for
$k=1$ are not available for $k>1$, which may be considered a problem
of random tensor theory.  The difficulty is that we have a matrix with
non-i.i.d. entries and with unitary symmetry only within the $k$
subsystems.  Furthermore, large-deviation techniques are known to work
only in the $x \gg 1$ limits.  Here, Ref.~\cite{Rudelson} can prove
that $\|M - xI\| \leq O(\sqrt{xk \log(d)})$ with high probability,
which gives the right leading-order behavior only when $x\gg \sqrt{k
  \log d}$.  (The same bound is obtained with different techniques by
Ref.~\cite{AW02}.)  The case when $x\gg 1$ is handled by
Ref.~\cite{ALPT09a}, which can bound $\|M - xI \| \leq O(\sqrt{x})$
with high probability when $k\leq 2$.   (We will discuss this paper
further in \secref{convex}.)

However, some new techniques will be needed to cover the case when $x\leq O(1)$.  Fortunately it turns out that the diagrammatic techniques for
$k=1$ can be modified to work for general $k$.  In \secref{matt}, we will use these
techniques to obtain an expansion in $1/\D$.  Second, the large
deviation approach of \cite{LW-locking} achieves a concentration bound of
$\exp(-O(\D\eps^2))$ which needs to overcome an $\eps$-net of size
$\exp(O(p\ln(1/\eps)))$.  This only functions when $p\ll \D$, but we
would like to take $p$ nearly as large as $\D^k$.  One approach when
$k=2$ is to use the fact that $\bra{\psi}M_{p,d,k}\ket{\psi}$ exhibits smaller
fluctuations when $\ket{\psi}$ is more entangled, and that most states
are highly entangled.  This technique was used in an unpublished
manuscript of Ambainis to prove that $\|M_{p,d,k}\|=O(1)$ with high probability when
$p=O(\D^2/\poly\ln(\D))$. However, the methods in this paper are
simpler, more general and achieve stronger bounds.

Our strategy to bound the typical value of the largest eigenvalue of
$M_{p,d,k}$ will be to use a trace method: we bound the expectation value 
of
the trace of a high power, denoted $m$, of $M_{p,d,k}$.  This yields an 
upper
bound on $\|M_{p,d,k}\|$ because of the following key inequality 
\be
\label{eq:tm}
\|M_{p,d,k}\|^m \leq 
{\rm tr}(M_{p,d,k}^m).
\ee
We then proceed to expand $\bbE[\tr M_{p,d,k}^m]$ (which we denote
$E_{p,d,k}^m$) as
\be
\label{eq:sum}
E_{p,d,k}^m = 
\bbE[\tr M_{p,d,k}^m] = 
\sum_{s_1=1}^p \sum_{s_2=1}^p ... \sum_{s_m=1}^p
E_d[s_1,s_2,...,s_m]^k
: = \sum_{\vs\in [p]^m} E_d[\vs]^k
\ee
where
\be
E_d[\vs]= \bbE\l[{\rm tr}(\varphi_{s_1} \varphi_{s_2}
\ldots \varphi_{s_m})\r],
\ee
and $[p]=\{1,\ldots,p\}$.
Similarly we define $\hat{E}_{p,d,k}^m = \bbE[\tr \hat{M}_{p,d,k}^M]$
and $\hat{E}_d[\vs]= \bbE\l[{\rm tr}(\hat\varphi_{s_1} \hat\varphi_{s_2}
\ldots \hat\varphi_{s_m})\r]$, and observe that they obey a relation
analogous to \eq{sum}.  We also define the normalized traces
$e_{p,d,k}^m = d^{-k}E_{p,d,k}^m$ and 
$\hat{e}_{p,d,k}^m = d^{-k}\hat{E}_{p,d,k}^m$, which will be useful
for understanding the eigenvalue density.

The rest of the paper presents three independent proofs that for
appropriate choices of $m$,  $E_{p,d,k}^m = (1 +
\sqrt{x})^{2m}\exp(\pm o(m))$.  This will  
imply that $\bbE[\|M_{p,d,k}\|] \leq (1 + \sqrt{x})^2 \pm o(1)$, which
we can combine with standard measure concentration results to give
tight bounds on the  probability that
$\|M_{p,d,k}\|$ is far from $(1+\sqrt{x})^2$.  We will also derive
nearly matching lower bounds on $E_{p,d,k}^m$ which show us that the
limiting spectral density of $M_{p,d,k}$ matches that of the Wishart
distribution (a.k.a. the $k=1$ case).
  The reason for the multiple proofs is to introduce new
techniques to problems in quantum information that are out of reach of
the previously used tools.  The large-deviation techniques used for
the $k=1$ case have had widely successful applicability to quantum
information and we hope that the methods introduced in this paper will
be useful in the further exploration of random quantum states and
processes.  Such random states, unitaries, and measurements play an important role in many
area of quantum information such as encoding quantum\cite{ADHW06},
classical\cite{HW-additivity,Hastings-additivity}, and
private\cite{private-super} 
information over quantum channels,
in other data-hiding schemes\cite{HLSW04}, in quantum
expanders\cite{Hastings-expander1,BST08-expander}, and 
in general coding protocols\cite{YD07}, among other applications.

The first proof, in \secref{matt}, first uses the expectation over the
Gaussian ensemble to upper-bound the expectation over the normalized
ensemble. Next, it uses Wick's theorem to give a diagrammatic method
for calculating the expectations.  A particular class of diagrams,
called rainbow diagrams, are seen to give the leading order terms.
Their contributions to the expectation can be calculated exactly,
while for $m\ll \D^{1/2k}$, the terms from non-rainbow diagrams are
shown to be negligible.  In fact, if we define the 
generating function
\be \hat{G}(x,y) = \sum_{m\geq 0} y^m \hat{e}_{p,d,k}^m,
\label{eq:mom-gen-fun}\ee
then the methods of \secref{matt} can be used to calculate
\eq{mom-gen-fun} up to $1/\D$ corrections.  Taking the analytic
continuation of $G(x,y)$ gives an estimate of the eigenvalue
density across the entire spectrum of $M_{p,d,k}$.  More precisely,
since we can only calculate the generating function up to $1/d$ corrections,
we can use convergence in moments to show that the distribution of eigenvalues
weakly converges almost surely (\cor{measureconverge} below)
to a limiting distribution.  For this limiting distribution,
for $x<1$, the eigenvalue density of $M_{p,d,k}$ vanishes for
eigenvalues less than
$(1-\sqrt{x})^2$.  However, this calculation, in contrast to the
calculation of the largest eigenvalue, only tells us that the fraction
of eigenvalues outside $(1\pm\sqrt{x})^2$ approaches zero with high
probability, and cannot rule out the existence of a small number of
low eigenvalues.

The second proof, in \secref{aram}, is based on representation theory
and combinatorics.  It first repeatedly applies two simplification
rules to $E_d[\vs]$: replacing occurrences of
$\varphi_s^2$ with $\varphi_s$ and replacing $\bbE[\varphi_s]$ with $I/\D$
whenever $\varphi_s$ appears only a single time in a string.  Thus
$\vs$ is replaced by a (possibly empty) string $\vs'$ with no repeated
characters and with no characters occurring only a single time.  To
analyze $E_d[\vs']$, we express $\bbE[\varphi^{\ot n}]$ as a sum over
permutations and use elementary arguments to enumerate permutations
with a given number of cycles.  We find that the dominant contribution
(corresponding to rainbow diagrams from \secref{matt}) comes from the
case when $\vs'=\emptyset$, and also analyze the next leading-order
contribution, corresponding to $\vs'$ of the form 1212, 123213, 12343214, 1234543215, etc.
Thus we obtain an estimate for $E_{p,d,k}^m$ that is correct up to an
$o(1)$ additive approximation.

The third proof, in \secref{andris}, uses the Schwinger-Dyson
equations to remove one letter at a time from the string $\vs$.  This
leads to a simple recursive formula for $e_{p,d,k}^m$ that gives precise
estimates.  

All three proof techniques can be used to produce explicit
calculations of $E_{p,d,k}^m$.  Applying them for the first few values of $m$ yields
\begin{eqnarray*} E_{p,d,k}^1 &= &p \\
E_{p,d,k}^2 &= & p + \frac{(p)_2}{d^k} \\
E_{p,d,k}^3 &= & p + 3\frac{(p)_2}{d^k} +\frac{(p)_3}{d^{2k}} \\
E_{p,d,k}^4 &= & p + 6\frac{(p)_2}{d^k} + 6\frac{(p)_3}{d^{2k}}
+\frac{(p)_4}{d^{3k}} + 2^k\frac{(p)_2}{d^{k}(d+1)^k}\\
E_{p,d,k}^5 &= &  p + 10\frac{(p)_2}{d^k} + 20\frac{(p)_3}{d^{2k}}
+10\frac{(p)_4}{d^{3k}}+\frac{(p)_5}{d^{4k}} + 5\cdot 2^k\frac{(p)_2}{d^{k}(d+1)^k}\\
E_{p,d,k}^6 &= &  p + 15\frac{(p)_2}{d^k} + 50\frac{(p)_3}{d^{2k}}
+50\frac{(p)_4}{d^{3k}}+15\frac{(p)_5}{d^{4k}}+\frac{(p)_6}{d^{5k}} 
\\&& + 15\cdot 2^k\frac{(p)_2}{d^{k}(d+1)^k} 
+\frac{(p)_2(d+3)^k}{d^{2k}(d+1)^{2k}}
+ 6^k \frac{(p)_3}{d^k(d+1)^k(d+2)^k}
,\end{eqnarray*}
where $(p)_t = p!/(p-t)! = p(p-1)\cdots (p-t+1)$.
We see that $O(1)$ (instead of $O(d^k)$) terms start to appear when
$m\geq 4$.  The combinatorial significance of these will be discussed 
in \secref{irred-strings}.

\subsection{Statement of results}
Our main result is the following theorem.
\begin{theorem}\label{thm:trace}
Let $\beta_m(x) = \sum_{\ell=1}^m N(m,\ell)x^\ell,$
where $N(m,\ell)=\frac{1}{m}\binom{m}{\ell-1}\binom{m}{\ell}$ are known
as the Narayana numbers. Then,
\be \l(1-\frac{m^2}{p}\r) \beta_m\!\l(\frac{p}{d^k}\r)\leq
 \frac{1}{d^k}\bbE[\tr(M_{p,d,k}^m)] \leq
 \exp\l(\frac{3m^{k+4}}{xd^{1/k}}\r)\beta_m\!\l(\frac{p}{d^k}\r),
 \label{eq:main-e-bound}\ee
where $\exp(A):=e^A$ and the lower bound holds only when $m<\sqrt{p}$.

Thus, for all $m\geq 1$, $k\geq 1$, $x>0$ and $p=x d^k$,
$$\lim_{d\ra \infty} e_{p,d,k}^m =\beta_m(p/d^k),$$
where we have used the notation
$e_{p,d,k}^m  = \frac{1}{d^k}\bbE[\tr(M_{p,d,k}^m)]$.
\end{theorem}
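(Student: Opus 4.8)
The plan is to use the \emph{method of moments}: by \eq{sum}, $\bbE[\tr(M_{p,d,k}^m)]=\sum_{\vs\in[p]^m}E_d[\vs]^k$ with $E_d[\vs]=\bbE[\tr(\varphi_{s_1}\cdots\varphi_{s_m})]$, and since the $|\varphi_s\rangle$ are i.i.d., $E_d[\vs]$ depends only on the \emph{pattern} of $\vs$, namely the set partition $\pi$ of $\{1,\ldots,m\}$ recording which of the $s_i$ coincide. So I would first regroup the sum over patterns: a pattern with $\ell=\ell(\pi)$ blocks is realized by exactly $(p)_\ell$ tuples, hence $\bbE[\tr(M_{p,d,k}^m)]=\sum_\pi(p)_{\ell(\pi)}E_d[\pi]^k$. (One may optionally pass first to the Gaussian ensemble, since comparing $\bbE[\varphi^{\ot n}]=\frac{1}{d(d+1)\cdots(d+n-1)}\sum_{\tau\in S_n}P_\tau$ with $\bbE[\hat\varphi^{\ot n}]=\frac{1}{d^n}\sum_{\tau\in S_n}P_\tau$ term by term gives $E_{p,d,k}^m\le\hat E_{p,d,k}^m$; but the argument works directly on the normalized ensemble.)

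The second step is to evaluate $E_d[\pi]$. Inserting $\bbE[(|\varphi\rangle\langle\varphi|)^{\ot n}]=\frac{1}{d(d+1)\cdots(d+n-1)}\sum_{\tau\in S_n}P_\tau$ blockwise and contracting against the cyclic structure of the trace gives, up to an irrelevant relabeling, $E_d[\pi]=\bigl(\prod_{j=1}^{\ell}d(d+1)\cdots(d+n_j-1)\bigr)^{-1}\sum_\tau d^{c(\tau\gamma_m)}$, where $n_1,\ldots,n_\ell$ are the block sizes, $\gamma_m$ is the $m$-cycle, $c(\cdot)$ is the number of cycles, and $\tau$ runs over the permutations of $\{1,\ldots,m\}$ whose cycles lie inside the blocks of $\pi$. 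Since $c(\tau)\ge\ell$ and $c(\tau)+c(\gamma_m)+c(\tau\gamma_m)\le m+2$, the highest power of $d$ that occurs is $d^{1-\ell}$; the combinatorial heart of the argument is that this highest power is actually attained exactly when $\pi$ is \emph{non-crossing}, whereas for a crossing $\pi$ the leading term is only $O(d^{-\ell})$, a factor $d^{-1}$ smaller (with an $m$-dependent constant). For non-crossing $\pi$ I would moreover show $E_d[\pi]=d^{1-\ell}$ \emph{exactly}: such a $\pi$ always has a block whose positions are cyclically consecutive, and applying $\varphi_s^n=\varphi_s$ and then $\bbE[\varphi_s]=I/d$ lets one peel that block off, decreasing $m$ and $\ell$ each by one (base case: the empty word gives $E_d[\emptyset]=\tr(I)=d$). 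As the non-crossing partitions of $\{1,\ldots,m\}$ with $\ell$ blocks number exactly the Narayana number $N(m,\ell)$, the non-crossing patterns contribute $\sum_\ell N(m,\ell)(p)_\ell\,d^{-k\ell}$ to $e_{p,d,k}^m$, which is $\beta_m(p/d^k)$ up to replacing each $(p)_\ell$ by $p^\ell$.

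The lower bound in \eq{main-e-bound} now follows immediately: every $E_d[\pi]$ is non-negative, so discarding all crossing patterns and using $(p)_\ell\ge p^\ell(1-\binom{\ell}{2}/p)\ge p^\ell(1-m^2/p)$ gives $e_{p,d,k}^m\ge(1-m^2/p)\beta_m(p/d^k)$, the bound being non-trivial exactly in the stated range $m<\sqrt p$. For the upper bound I would bound the non-crossing part by $\beta_m(p/d^k)$ (using $(p)_\ell\le p^\ell$) and then control the crossing, higher-genus patterns. The delicate point is that the $d$-suppression has to be weighed both against the number of patterns (at most a Bell number) and, more seriously, against the combinatorial weights inside $\sum_\tau d^{c(\tau\gamma_m)}$, which get raised to the $k$-th power: these are the $2^k$, $6^k$, and higher powers already visible in $E_{p,d,k}^4$ and $E_{p,d,k}^6$ above. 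The way to make this go through is a genus expansion, so that genus-$g$ contributions carry a suppression $d^{-g}$ and, for each fixed $g$, their number is only polynomial (not factorial) in $m$; tracking how the $k$-th power inflates the per-genus weights then yields a multiplicative error of the stated shape $\exp(3m^{k+4}/(x d^{1/k}))$ and fixes the admissible range of $m$. This bookkeeping is the main obstacle: the crude bound ``there are $m!$ choices of $\tau$, each a factor $d^{-1}$ below the leading term'' is far too lossy (it would license only $m\ll\log d/\log\log d$), so one genuinely needs the genus-by-genus count of how many $\tau$ contribute at each order, and the exponent $1/k$ on $d$ is simply what survives that estimate and need not be sharp.

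Finally, the displayed limit follows by sandwiching: for fixed $m$ and $x=p/d^k$, both $m^2/p=m^2/(x d^k)$ and $m^{k+4}/(x d^{1/k})$ tend to $0$ as $d\to\infty$, so the two sides of \eq{main-e-bound} both converge to $\beta_m(x)$, giving $\lim_{d\to\infty}e_{p,d,k}^m=\beta_m(x)$.
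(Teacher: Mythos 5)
Your setup and your lower bound are sound and essentially reproduce the paper's combinatorial route (Section~\ref{sec:aram}): grouping $\vs$ by its coincidence pattern, using $\bbE[\varphi^{\ot n}]=\frac{1}{d(d+1)\cdots(d+n-1)}\sum_{\tau}P_\tau$, peeling interval blocks off a non-crossing pattern to get $E_d[\pi]=d^{1-\ell}$ exactly, and counting non-crossing patterns with $\ell$ blocks by $N(m,\ell)$ --- this is precisely Lemma~\ref{lem:complete-reduce}, and your deduction of the lower bound from $(p)_\ell\geq p^\ell(1-m^2/p)$ and positivity of all terms is the paper's argument. The final limiting statement by sandwiching is also fine.

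The upper bound, however, is not proved; it is only described. The entire difficulty of Theorem~\ref{thm:trace} lies in bounding $\sum_{\pi\ \text{crossing}}(p)_{\ell(\pi)}E_d[\pi]^k$, and your proposal stops at ``a genus expansion \ldots yields a multiplicative error of the stated shape.'' Two concrete things are missing. First, the quantitative count: you need, for each deficiency level, a bound on the number of patterns \emph{and} on the inflated weights $\bigl(\sum_\tau d^{c(\tau\gamma_m)}\bigr)^k$ that is polynomial in $m$ with controlled degree; this is exactly the content of the reconnection argument leading to \eq{ratiobnd2}--\eq{ratiobnd3}, of the letter-graph counting in Lemmas~\ref{lem:irred-strings}--\ref{lem:mixed-strings}, and of the recursion in Lemma~\ref{lem:main-tensor} --- none of which you supply or replace. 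Second, the specific error factor $\exp(3m^{k+4}/(xd^{1/k}))$ is not ``what survives'' a genus expansion: a direct genus count would naturally produce corrections in powers of $d^{-k}$ or $d^{-1}$ (compare the $2^k(p)_2/(d^k(d+1)^k)$ term in $E_{p,d,k}^4$, or the bound \eq{tighter-ub}), whereas the $d^{-1/k}$ in the theorem arises from a particular arithmetic–geometric-mean step in the Schwinger--Dyson route (Lemma~\ref{lem:tensor}), which splits a single factor of $d^{-1}$ across the $k$ tensor factors. Moreover, a genus expansion is a convergent-series argument and only closes for $m$ small compared to a power of $d$, whereas the stated upper bound holds for all $m$; the paper gets this unconditionally from the inductive recurrence of Lemma~\ref{lem:main-tensor}. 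So the statement you are asked to prove is not reached by the sketch as written: the lower bound is complete, the upper bound has a genuine gap at its central step.
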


Variants of the upper bound are proven separately in each of the next three
sections, but the formulation used in the Theorem is proven in \secref{andris}.
Since the lower bound is simpler to establish, we prove it only in
\secref{aram}, although the techniques of 
Sections~\ref{sec:matt} and \ref{sec:andris} would also give nearly the same bound.

For the data-hiding and correlation-locking scheme proposed in \cite{LW-locking}, it is important that $\|M\|=1 + o(1)$ whenever $x=o(1)$.   In fact, we will show that $\|M\|$ is very likely to be close to $(1+\sqrt{x})^2$, just as was previously known for Wishart matrices.  First we observe that for
 large $m$, $\beta_m(x)$ is roughly $(1+\sqrt{x})^{2m}$. 

\begin{lemma}\label{lem:beta-comp}
\be \frac{x}{2m^2(1+\sqrt{x})^3} (1+\sqrt{x})^{2m}
 \leq \beta_m(x) \leq (1+\sqrt{x})^{2m}\ee
\end{lemma}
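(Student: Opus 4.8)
The plan is to sandwich $\beta_m(x)$ between two explicit functions of $(1+\sqrt x)^{2m}=\sum_{j=0}^{2m}\binom{2m}{j}(\sqrt x)^{j}$ using only elementary binomial identities, the key being a term-by-term comparison of each Narayana number $N(m,\ell)=\frac1m\binom{m}{\ell-1}\binom{m}{\ell}$ with a nearby central binomial coefficient.

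For the upper bound I would first prove $N(m,\ell)\le\binom{2m}{2\ell}$ for $1\le\ell\le m$.  This splits into two one-line facts: $\binom{m}{\ell-1}=\frac{\ell}{m-\ell+1}\binom{m}{\ell}\le m\binom{m}{\ell}$, since $\ell\le m(m-\ell+1)$ is equivalent to $\ell(m+1)\le m(m+1)$; hence $N(m,\ell)\le\binom{m}{\ell}^{2}$, and $\binom{m}{\ell}^{2}\le\binom{2m}{2\ell}$ is immediate from Vandermonde's identity $\binom{2m}{2\ell}=\sum_{i}\binom{m}{i}\binom{m}{2\ell-i}$ by keeping only the (nonnegative) $i=\ell$ term.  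Then, because $x>0$, all the terms with odd $j$ and the $j=0$ term are nonnegative, so
\[
  \beta_m(x)=\sum_{\ell=1}^{m}N(m,\ell)\,x^{\ell}\le\sum_{\ell=1}^{m}\binom{2m}{2\ell}x^{\ell}\le\sum_{j=0}^{2m}\binom{2m}{j}(\sqrt x)^{j}=(1+\sqrt x)^{2m}.
\]

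For the lower bound I would apply the same Vandermonde expansion to the neighbouring coefficient $\binom{2m}{2\ell-1}$, this time keeping the two terms $i=\ell-1$ and $i=\ell$: this gives $\binom{2m}{2\ell-1}\ge 2\binom{m}{\ell-1}\binom{m}{\ell}=2m\,N(m,\ell)$.  Summing $N(m,\ell)\ge\frac1{2m}\binom{2m}{2\ell-1}$ against $x^{\ell}$, using $x^{\ell}=\sqrt x\,(\sqrt x)^{2\ell-1}$ and the standard identity $\sum_{j\text{ odd}}\binom{2m}{j}t^{j}=\tfrac12\bigl[(1+t)^{2m}-(1-t)^{2m}\bigr]$ with $t=\sqrt x$, one gets
\[
  \beta_m(x)\ \ge\ \frac{\sqrt x}{4m}\left[(1+\sqrt x)^{2m}-(1-\sqrt x)^{2m}\right].
\]
To convert the bracket back into a clean multiple of $(1+\sqrt x)^{2m}$, write $(1-\sqrt x)^{2m}=r^{2m}(1+\sqrt x)^{2m}$ with $r=\frac{|1-\sqrt x|}{1+\sqrt x}\in[0,1)$, use $1-r^{2m}\ge 1-r$, and note $1-r=\frac{2\min(1,\sqrt x)}{1+\sqrt x}$ (a trivial split on whether $x\le 1$).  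Together with $\sqrt x\cdot\min(1,\sqrt x)\ge\frac{x}{1+\sqrt x}$ this yields $\beta_m(x)\ge\frac{x}{2m(1+\sqrt x)^{2}}(1+\sqrt x)^{2m}=\frac{x(1+\sqrt x)^{2m-2}}{2m}$, which exceeds $\frac{x(1+\sqrt x)^{2m-3}}{2m^{2}}$ because $m(1+\sqrt x)\ge 1$.  (Alternatively one could reduce the case $x>1$ to $x\le1$ using the symmetry $N(m,\ell)=N(m,m+1-\ell)$, i.e.\ $\beta_m(x)=x^{m+1}\beta_m(1/x)$.)

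The lemma is loose by design, so I do not expect a serious obstacle once the binomial surrogates $\binom{2m}{2\ell}$ and $\binom{2m}{2\ell-1}$ are identified: everything then collapses to Vandermonde plus the elementary bound $1-r^{2m}\ge 1-r$.  The main thing to get right is making the lower-bound manipulation produce a constant, and powers of $(1+\sqrt x)$ and of $m$, that are \emph{uniform} over all $x>0$ and all $m\ge1$ — in particular covering the small-$x$ corner, where $(1-\sqrt x)^{2m}$ is comparable to $(1+\sqrt x)^{2m}$ and the factor $x$ in the prefactor is genuinely needed.
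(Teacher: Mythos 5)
Your proof is correct, and your upper bound is word-for-word the paper's argument ($N(m,\ell)\leq\binom{m}{\ell}^2\leq\binom{2m}{2\ell}$, then fill in the missing terms of $(1+\sqrt{x})^{2m}$). The lower bound, however, takes a genuinely different route. The paper keeps the \emph{even}-index surrogate: it shows $\binom{2m}{2\ell}/\binom{m}{\ell}^2\leq 2\ell$, deduces $N(m,\ell)\geq\binom{2m}{2\ell}/2m^2$, and then recovers a $\frac{x}{(1+\sqrt{x})^3}$ fraction of $(1+\sqrt{x})^{2m}$ from the even-index partial sum via the log-concavity-type inequality $\binom{2m}{2\ell+1}\leq\binom{2m}{2\ell}+\binom{2m}{2\ell+2}$. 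You instead use the \emph{odd}-index surrogate, with the clean two-term Vandermonde bound $\binom{2m}{2\ell-1}\geq 2\binom{m}{\ell-1}\binom{m}{\ell}=2mN(m,\ell)$, which lets you evaluate the resulting odd-index sum in closed form as $\frac{1}{2}\bigl[(1+\sqrt{x})^{2m}-(1-\sqrt{x})^{2m}\bigr]$; the remaining work is the elementary estimate $1-r^{2m}\geq 1-r$ with $r=|1-\sqrt{x}|/(1+\sqrt{x})$, and your case analysis there is correct and uniform in $x>0$. Your route buys a quantitatively stronger constant, $\beta_m(x)\geq\frac{x}{2m(1+\sqrt{x})^2}(1+\sqrt{x})^{2m}$, i.e.\ you lose only a factor of order $m$ where the paper loses $m^2$ (this would marginally improve the $O(\ln d/\sqrt{d})$ error term in the lower bound of Corollary~\ref{cor:eig-ub}, though not its order); the paper's version avoids the closed-form odd-part identity and the case split on $x\lessgtr 1$ but pays for it in the prefactor. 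Your final reduction from $\frac{x}{2m(1+\sqrt{x})^{2}}$ to the stated $\frac{x}{2m^{2}(1+\sqrt{x})^{3}}$ via $m(1+\sqrt{x})\geq 1$ is valid, so the lemma as stated follows.
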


The proof is deferred to \secref{large-dev}.

Taking $m$ as large as possible in \thm{trace} gives us tight
bounds on the typical 
behavior of $\|M_{p,d,k}\|$.
\begin{corollary}\label{cor:eig-ub}
With $M_{p,d,k}$ and $x$ defined as above,
$$(1+\sqrt{x})^2 - O\l(\frac{\ln(d)}{\sqrt{d}}\r) \leq\bbE[\|M_{p,d,k}\|] 
\leq (1+\sqrt{x})^2 + O\l(\frac{\ln(\D)}{\D^{\frac{1}{2k}}}\r)$$
and the same bounds hold with $\ M_{p,d,k}$ replaced by $\hat M_{p,d,k}$.
\end{corollary}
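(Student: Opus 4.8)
The plan is to bracket $\bbE[\|M_{p,d,k}\|]$ between the two sides of \thm{trace} by feeding its moment estimates into the trace--power inequality \eq{tm}: for the upper bound combined with convexity, for the lower bound combined with the elementary inequality $\tr(M_{p,d,k}^m)\le d^k\|M_{p,d,k}\|^m$ (valid since $M_{p,d,k}\succeq 0$ has at most $d^k$ eigenvalues) together with a measure-concentration input. The only freedom is the choice of $m$, which I optimize at the end, and the argument for $\hat M_{p,d,k}$ is identical since \thm{trace} and \lemref{beta-comp} apply verbatim to the Gaussian ensemble.

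For the upper bound, take expectations in \eq{tm}: $\bbE[\|M_{p,d,k}\|^m]\le\bbE[\tr M_{p,d,k}^m]=d^k e_{p,d,k}^m$. Using the Gaussian-dominates-normalized comparison and the sharper moment estimate of \secref{matt}, valid throughout the range $m\ll d^{1/(2k)}$ where the non-rainbow diagrams are negligible (so $e_{p,d,k}^m\le\hat e_{p,d,k}^m\le 2\beta_m(x)$ there), followed by \lemref{beta-comp} to replace $\beta_m(x)$ by $(1+\sqrt x)^{2m}$, gives $\bbE[\|M_{p,d,k}\|^m]\le 2d^k(1+\sqrt x)^{2m}$. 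Jensen's inequality for $t\mapsto t^{1/m}$ then yields $\bbE[\|M_{p,d,k}\|]\le (2d^k)^{1/m}(1+\sqrt x)^2=(1+\sqrt x)^2\exp((k\ln d+\ln 2)/m)$, and choosing $m$ as large as the validity range of \secref{matt} permits, $m\asymp d^{1/(2k)}$, makes the exponent $O(\ln d/d^{1/(2k)})$, which is the claimed bound. (The weaker error factor $\exp(3m^{k+4}/(xd^{1/k}))$ in \thm{trace} would only allow $m\lesssim d^{1/(k(k+4))}$, so the improved exponent $1/(2k)$ genuinely requires the diagrammatic analysis of \secref{matt}.)

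For the lower bound, $\tr(M_{p,d,k}^m)\le d^k\|M_{p,d,k}\|^m$ together with \thm{trace} gives $\bbE[\|M_{p,d,k}\|^m]\ge e_{p,d,k}^m\ge (1-m^2/p)\beta_m(x)$ for $m<\sqrt p$, and \lemref{beta-comp} upgrades this to $(\bbE[\|M_{p,d,k}\|^m])^{1/m}\ge (1+\sqrt x)^2\bigl(1-O_x(\ln m/m)\bigr)$ once $m\le\tfrac12\sqrt p$. Here convexity runs the wrong way, so this lower bound on the $m$-th root of the $m$-th moment does not by itself bound the mean; to bridge the gap I invoke measure concentration, which gives that $\|M_{p,d,k}\|$ (and $\|\hat M_{p,d,k}\|$) concentrates about its mean at scale $\sigma=O_{k,x}(1/\sqrt d)$. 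For the Gaussian ensemble this is Gaussian concentration applied to $\|\hat\Phi\|=\|\hat M_{p,d,k}\|^{1/2}$, whose gradient in the underlying variance-$1/d$ Gaussians has operator norm $O_k(1)$ on the event that $\|\hat\varphi_s^a\|\le 2$ for all $s,a$ (of probability $1-e^{-\Omega(d)}$); for the normalized ensemble it follows by the sandwich $\|\hat M_{p,d,k}\|/\max_s\|\hat\varphi_s\|^2\le\|M_{p,d,k}\|\le\|\hat M_{p,d,k}\|/\min_s\|\hat\varphi_s\|^2$ with $\|\hat\varphi_s\|^2=1\pm\tilde O(1/\sqrt d)$ uniformly in $s$. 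Given such concentration, for $m\ll\sqrt d$ one checks that $\bbE[\|M_{p,d,k}\|^m]\le(\bbE\|M_{p,d,k}\|)^m\bigl(1+O(m\sigma)\bigr)$ — the moment is governed by the $O(\sigma)$-window around the mean, while the sub-Gaussian tail and a crude a priori bound $\|M_{p,d,k}\|\le\poly(d)$ absorb the rest — hence $\bbE\|M_{p,d,k}\|\ge(\bbE[\|M_{p,d,k}\|^m])^{1/m}\bigl(1-O(1/\sqrt d)\bigr)$. Combining with the moment lower bound and taking $m\asymp\sqrt d$ (which is $\le\tfrac12\sqrt p$ once an $x$-dependent constant is absorbed, since $p=xd^k$ with $k\ge1$) gives $\bbE\|M_{p,d,k}\|\ge(1+\sqrt x)^2-O_{k,x}(\ln d/\sqrt d)$.

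I expect the concentration step to be the crux. The moment method only ever produces $(\bbE[\|M\|^m])^{1/m}$, which overshoots $\bbE\|M\|$ by roughly one fluctuation scale, so an independent concentration estimate is unavoidable for the lower bound; and obtaining the correct $1/\sqrt d$ scale for the operator norm of a sum of $p$ product projectors — rather than the useless $\sqrt p$ that a naive Lipschitz or Azuma argument would give — is what forces one to exploit the multilinear product structure of $M_{p,d,k}$ (and, for the normalized ensemble, its closeness to the Gaussian ensemble). Everything else reduces to the one-parameter optimization over $m$.
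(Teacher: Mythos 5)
Your proposal is correct and its overall skeleton coincides with the paper's: the upper bound is \eq{eig-from-trace} fed with the diagrammatic moment bound of \secref{matt}, which is valid out to $m\sim d^{1/2k}$ (this is exactly \eq{diag-eig-bound}; you are right that the error factor in \eq{main-e-bound} alone would only reach $m\sim d^{1/k(k+4)}$ and give a weaker exponent), and the lower bound is the moment lower bound of \thm{trace} plus \lemref{beta-comp}, bridged to $\bbE[\|M_{p,d,k}\|]$ by a concentration estimate at scale $d^{-1/2}$ with $m\asymp\sqrt{d}$, exactly as in \secref{large-dev}. The one genuine divergence is how that concentration is obtained. The paper (\lemref{large-dev}) works directly with the normalized ensemble: it models the randomness as an $\ell^1$-over-$k$ product of $\ell^2$-over-$p$ products of spheres $S^{2d-1}$, invokes Ledoux's concentration for such products (which loses nothing in $p$), and notes that the resulting map to $\|M_{p,d,k}\|$ is $1$-Lipschitz, giving \eq{norm-LD} and then the same integration-by-parts computation you sketch. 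You instead propose Gaussian concentration for $\|\hat\Phi\|$ — a degree-$k$ multilinear, hence only locally Lipschitz, function of the underlying Gaussians — followed by a sandwich back to the normalized ensemble; this is workable and close in spirit to the paper's own \eq{M-op-ineq}, but it attacks the harder ensemble first and requires the usual extension/conditioning argument on the event $\|\hat\varphi_s^a\|\leq 2$. Consequently your closing remark that ``a naive Lipschitz argument would give a useless $\sqrt{p}$'' is not quite right: concentration on the $\ell^2$ product of $p$ spheres is dimension-free in $p$, and the paper's proof is precisely such a Lipschitz argument. (The only delicate point there, which the paper passes over quickly, is that the triangle inequality literally gives Lipschitzness in the $\ell^1$ metric over the $p$ factors; upgrading to the $\ell^2$ metric uses Cauchy--Schwarz and costs a factor of order $\sqrt{\|M\|+\|M'\|}$, which is $O(1)$ on the relevant event.)
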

\begin{proof}
A weaker version of the upper bound can be established by setting $m\sim d^{1/k(k+4)}$ in
\be \bbE[\|M_{p,d,k}\|] \leq 
\bbE[\|M_{p,d,k}\|^m]^{1/m} \leq
d^{\frac{k}{m}} (e_{p,d,k}^m)^{\frac{1}{m}},
\label{eq:eig-from-trace}\ee where the first
inequality is from the convexity of $x\mapsto x^m$.  In fact, the version stated here is proven in \eq{diag-eig-bound} at the end of \secref{matt}.

The lower bound will be proven in \secref{large-dev}.
\end{proof}

Next, the reason we can focus our analysis on the expected value of $\|M_{p,d,k}\|$ is because $\|M_{p,d,k}\|$ is extremely unlikely to be far from its mean.  Using standard measure-concentration arguments (detailed in \secref{large-dev}), we can prove:
\begin{lemma}\label{lem:large-dev}
For any $\eps>0$,
\be \Pr\l(\l|\|M_{p,d,k}\| - \bbE[\|M_{p,d,k}\|]\r| \geq \eps\r) \leq
2 \exp(-(d-1)\eps^2/k).
\label{eq:norm-LD}\ee
For any $0<\eps\leq 1$,
\be \Pr\l(\l|\|\hat M_{p,d,k}\| - \bbE[\|M_{p,d,k}\|]\r| \geq 
\eps\bbE[\|M_{p,d,k}\|] + \delta \r) \leq
2pke^{-\frac{d\eps^2}{4k^2}} + 2e^{-\frac{(d-1)\delta^2}{4k}}
\label{eq:gaussian-LD}\ee
\end{lemma}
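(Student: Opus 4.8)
The plan is to obtain both bounds from concentration of measure on the sphere; the one subtlety is that $\|M_{p,d,k}\|$ is not itself Lipschitz with an $O(1)$ constant in the underlying vectors, whereas its square root is, so one controls $\|M_{p,d,k}\|^{1/2}$ and then squares.

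\emph{Part 1 (normalized ensemble).} Write $M_{p,d,k}=\Phi^\dagger\Phi$, where $\Phi=\sum_{s=1}^p\ket{s}\bra{\varphi_s}$ is the $p\times d^k$ matrix whose $s$-th row is $\bra{\varphi_s}=\bigotimes_{a=1}^k\bra{\varphi_s^a}$, so that $\|M_{p,d,k}\|^{1/2}=\|\Phi\|$ is the top singular value of $\Phi$. First I would show that, as a function of the $pk$ unit vectors $\ket{\varphi_s^a}\in S^{2d-1}$, the map $(\ket{\varphi_s^a})\mapsto\|\Phi\|$ is $\sqrt{k}$-Lipschitz for the product metric on $(S^{2d-1})^{pk}$: replacing one $\ket{\varphi_s^a}$ by $\ket{\psi}$ changes only the $s$-th row of $\Phi$, and a telescoping argument (the remaining $k-1$ factors being unit vectors) bounds that change in Frobenius norm by $\|\ket{\varphi_s^a}-\ket{\psi}\|$; since distinct values of $s$ perturb orthogonal rows, summing the squared changes and applying Cauchy--Schwarz over the $k$ factors of each $s$ gives $\|\Phi(X)-\Phi(Y)\|_F^2\le k\operatorname{dist}(X,Y)^2$, while $\bigl|\,\|\Phi(X)\|-\|\Phi(Y)\|\,\bigr|\le\|\Phi(X)-\Phi(Y)\|_F$. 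Levy's lemma on $(S^{2d-1})^{pk}$---whose concentration function is that of a single $(2d-1)$-sphere, by tensorization of the log-Sobolev inequality---then gives $\Pr\bigl(\bigl|\,\|M_{p,d,k}\|^{1/2}-\bbE\|M_{p,d,k}\|^{1/2}\,\bigr|\ge t\bigr)\le 2\exp(-(d-1)t^2/k)$. To deduce \eq{norm-LD}, observe that $\|M_{p,d,k}\|\ge\langle\varphi_1|M_{p,d,k}|\varphi_1\rangle\ge 1$ deterministically, while $\bbE\|M_{p,d,k}\|\le(1+\sqrt x)^2+o(1)$ by the upper bound in \cor{eig-ub}; hence $\|M_{p,d,k}\|^{1/2}$ takes values in a fixed interval bounded away from $0$, its variance is $O(k/d)$ (integrate the tail just obtained), so its mean, its median and $(\bbE\|M_{p,d,k}\|)^{1/2}$ all agree up to $o(1)$, and since $y\mapsto y^2$ has bounded Lipschitz constant on that interval the concentration passes to $\|M_{p,d,k}\|$ about $\bbE\|M_{p,d,k}\|$, which is \eq{norm-LD}.

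\emph{Part 2 (Gaussian ensemble).} Now the vectors are unnormalized and $\|\Phi\|$ is no longer globally Lipschitz in the underlying Gaussians (its Lipschitz constant involves the random factor-norms), so I would truncate first. Write $\ket{\hat\varphi_s^a}=r_s^a\ket{\varphi_s^a}$ with $r_s^a=\|\hat\varphi_s^a\|$ and $\ket{\varphi_s^a}$ a uniform unit vector independent of $r_s^a$; then $(r_s^a)^2$ is $\frac{1}{2d}$ times a $\chi^2_{2d}$ variable. By the standard $\chi^2$ tail estimate and a union bound over the $pk$ factors, with probability at least $1-2pk\exp(-d\eps^2/4k^2)$ every $r_s^a$ lies within $1\pm\Theta(\eps/k)$, so each radial factor $\|\hat\varphi_s\|^2=\prod_a(r_s^a)^2$ lies within $1\pm\eps$. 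On that event $\hat M_{p,d,k}=\sum_s\|\hat\varphi_s\|^2\proj{\varphi_s}$ satisfies $(1-\eps)M_{p,d,k}\preceq\hat M_{p,d,k}\preceq(1+\eps)M_{p,d,k}$---since $(1+\eps)M_{p,d,k}-\hat M_{p,d,k}$ and $\hat M_{p,d,k}-(1-\eps)M_{p,d,k}$ are nonnegative combinations of the rank-one projectors $\proj{\varphi_s}$---and, conditionally on the radii, the directions $\ket{\varphi_s}$ have exactly the normalized-ensemble law; hence $\bigl|\,\|\hat M_{p,d,k}\|-\|M_{p,d,k}\|\,\bigr|\le\eps\|M_{p,d,k}\|$. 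Intersecting with the Part 1 event $\bigl|\,\|M_{p,d,k}\|-\bbE\|M_{p,d,k}\|\,\bigr|\le\delta$ and using the triangle inequality gives $\bigl|\,\|\hat M_{p,d,k}\|-\bbE\|M_{p,d,k}\|\,\bigr|\le\eps\bbE\|M_{p,d,k}\|+O(\delta)$; a union bound over the two failure events then yields \eq{gaussian-LD} after rescaling $\delta$ and adjusting constants.

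\emph{Main obstacle.} The crux is the Lipschitz estimate underlying Part 1. Perturbing $M_{p,d,k}=\sum_s\proj{\varphi_s}$ one vector at a time and bounding $\bigl\|\sum_s(\proj{\varphi_s(X)}-\proj{\varphi_s(Y)})\bigr\|$ term by term costs a factor $\sqrt{p}$, which would ruin the exponent once $p$ is comparable to $d^k$; passing to $\Phi$ and using that distinct $s$ act on orthogonal rows is exactly what removes this loss, leaving only the $\sqrt{k}$ from the $k$ tensor factors. The remaining steps---translating concentration of $\|M\|^{1/2}$ into that of $\|M\|$, trading medians for means, and truncating the Gaussian radii---are routine, and are where the precise constants $k$, $4k$ and $4k^2$ in the statement are produced.
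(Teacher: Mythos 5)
Your Part 2 is essentially the paper's own argument: decompose $\ket{\hat\varphi_s^a}$ into a radius times a uniform unit vector, apply $\chi^2$ tails and a union bound over the $pk$ radii at scale $\Theta(\eps/k)$, deduce the operator sandwich $(1-\eps)M_{p,d,k}\preceq \hat M_{p,d,k}\preceq (1+O(\eps))M_{p,d,k}$, and combine with \eq{norm-LD}. Part 1, however, takes a genuinely different route. The paper works on $((S^{2d-1})^p_{\ell^2})^k_{\ell^1}$ and asserts directly that $(\ket{\varphi_s^i})\mapsto\|M_{p,d,k}\|$ is $1$-Lipschitz "by the triangle inequality"; you instead prove that $\|M_{p,d,k}\|^{1/2}=\|\Phi\|$ is $\sqrt{k}$-Lipschitz on the full $\ell^2$ product (using that distinct $s$ perturb orthogonal rows of $\Phi$) and then square. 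Your route is the more defensible one on the key point: the naive triangle-inequality bound on $\|M(X)-M(Y)\|_F$ is an $\ell^1$ sum over the $p$ states, and converting it to the $\ell^2$ metric the paper actually uses costs a factor $\sqrt{p}$ — exactly the loss you flag — so the square-root trick (or an equivalent local-Lipschitz argument) is really needed, and your proposal supplies it.

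The one genuine discrepancy is quantitative. Concentration of $\|M\|^{1/2}$ at scale $t$ translates into concentration of $\|M\|$ at scale $\eps$ only with $t\approx \eps/(2\|M\|^{1/2})$, so squaring degrades the exponent by a factor of roughly $4\lambda_+$ (and your median-to-mean and variance steps cost further constants). You therefore obtain \eq{norm-LD} with $2\exp(-c(d-1)\eps^2/(\lambda_+ k))$ rather than the stated $2\exp(-(d-1)\eps^2/k)$; the same constant propagates into the second term of \eq{gaussian-LD}. This is harmless for every downstream use in the paper (\cor{eig-LD} and the lower bound of \cor{eig-ub} only need sub-Gaussian concentration at rate $\Theta(d\eps^2)$ for fixed $x$ and $k$), but you should state the weakened constant rather than claim the lemma verbatim — or else add the observation that on the event $\|M\|\le \lambda_+ + O(1)$, which itself has the right probability, the local Lipschitz constant of $y\mapsto y^2$ is bounded, which is implicitly what the paper is relying on.
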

Combined with \cor{eig-ub} we obtain:
\begin{corollary}\label{cor:eig-LD}
$$\Pr\l(\l|\|M_{p,d,k}\| - \lambda_+]\r| \geq O\l(\frac{\ln
  \D}{\D^{1/2k}}\r) + \eps\r) \leq
2 \exp(-d\eps^2/2).$$
A similar, but more cumbersome, bound also exists for $\|\hat
M_{p,d,k}\|$.
\end{corollary}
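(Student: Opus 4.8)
The plan is to combine the estimate for the mean given by \cor{eig-ub} with the measure-concentration inequality \eq{norm-LD} of \lemref{large-dev} through a single application of the triangle inequality; write $\lambda_+ := (1+\sqrt{x})^2$. By \cor{eig-ub}, $\bbE[\|M_{p,d,k}\|]$ lies in the interval $[\lambda_+ - O(\ln d/\sqrt d),\ \lambda_+ + O(\ln d/d^{1/2k})]$, and since $k\geq 1$ forces $1/2k\leq 1/2$ the max-side error term is the larger one, so
$$\bigl|\,\bbE[\|M_{p,d,k}\|] - \lambda_+\,\bigr| \;\leq\; c(d) := O\!\l(\frac{\ln d}{d^{1/2k}}\r).$$

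First I would observe that the event $\bigl\{\,\bigl|\,\|M_{p,d,k}\| - \lambda_+\,\bigr| \geq c(d) + \eps\,\bigr\}$ is contained in the event $\bigl\{\,\bigl|\,\|M_{p,d,k}\| - \bbE[\|M_{p,d,k}\|]\,\bigr| \geq \eps\,\bigr\}$: if $\|M_{p,d,k}\|$ were within $\eps$ of its own mean, then by the previous display and the triangle inequality it would be within $c(d)+\eps$ of $\lambda_+$. Passing to probabilities,
$$\Pr\!\l(\,\bigl|\,\|M_{p,d,k}\| - \lambda_+\,\bigr| \geq c(d) + \eps\,\r) \;\leq\; \Pr\!\l(\,\bigl|\,\|M_{p,d,k}\| - \bbE[\|M_{p,d,k}\|]\,\bigr| \geq \eps\,\r),$$
and now \eq{norm-LD} bounds the right-hand side by $2\exp(-(d-1)\eps^2/k)$, which for fixed $k$ is of the order $2\exp(-\Omega(d\eps^2))$ asserted in the statement (for $k=1$ it is $2\exp(-(d-1)\eps^2)$, dominating the displayed $2\exp(-d\eps^2/2)$). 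The Gaussian case runs identically except that \eq{norm-LD} is replaced by \eq{gaussian-LD} (valid for $0<\eps\leq 1$): this turns the purely additive deviation into the additive-plus-multiplicative form and adds the extra $2pk\,e^{-d\eps^2/4k^2}$ term, which is exactly the ``more cumbersome'' bound referred to; crucially \cor{eig-ub} already gives the mean estimate for $\hat M_{p,d,k}$ as well, so $\bbE[\|M_{p,d,k}\|]$ may be used as the centering point there too.

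There is no genuine obstacle here -- the corollary is a routine union of one mean estimate with one concentration bound. The few points that need care are: (i) the correction $c(d)$ must be the max-side error $O(\ln d/d^{1/2k})$ from \cor{eig-ub}, not the smaller lower-side error $O(\ln d/\sqrt d)$; (ii) since \eq{norm-LD} is already two-sided, the lower tail $\|M_{p,d,k}\| \ll \lambda_+$ needs no separate treatment; and (iii) in the Gaussian version one should keep the centering at $\bbE[\|M_{p,d,k}\|]$, exactly as \lemref{large-dev} and \cor{eig-ub} are phrased, so that the containment-of-events step goes through verbatim.
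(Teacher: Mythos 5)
Your proposal is correct and is exactly the argument the paper intends: the paper gives no separate proof of this corollary, simply stating that it follows by combining \cor{eig-ub} with \lemref{large-dev}, which is precisely your triangle-inequality containment of events. Your parenthetical caveat that for $k\geq 2$ the exponent $-(d-1)\eps^2/k$ from \eq{norm-LD} matches the displayed $-d\eps^2/2$ only up to a constant factor is a fair observation, but that looseness is already present in the paper's own statement of the corollary.
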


Note that for the $k=1$ case, the exponent can be replaced by
$O(-d\eps^{3/2})$, corresponding to typical fluctuations on the order
of $O(d^{-2/3})$ \cite{Johnstone-PCA}. It is plausible that fluctuations of
this size would also hold in the $k>1$ case as well, but we do not
attempt to prove that in this paper.

Our asymptotic estimates for $e_{p,d,k}^m$ also imply that the
limiting spectral density of $M_{p,d,k}$ is given by the
Mar\v{c}enko-Pastur law, just as was previously known for the $k=1$
case.  Specifically, let $\lambda_1,\ldots,\lambda_{R}$ be the non-zero eigenvalues of $M_{p,d,k}$, with $R = \text{rank} M_{p,d,k}$.  Generically $R = \min(p, d^k)$ and the eigenvalues are all distinct.  Define the eigenvalue density to be 
$$\rho(\lambda)= \frac{1}{R} \sum_{i=1}^{R} \delta(\lambda_i-
\lambda),$$
 then
\begin{corollary}
\label{cor:measureconverge}
In the limit of large $d$ at fixed $x$, $\rho(\lambda)$ weakly
converges almost surely to 
$$
\frac{\sqrt{(\lambda_+-\lambda)(\lambda-\lambda_-)}}{2\pi x
    \lambda }I(\lambda_- \leq \lambda \leq \lambda_+)$$
    for any fixed $k$ and for both the normalized and Gaussian ensembles.
\end{corollary}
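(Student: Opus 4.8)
The plan is a method-of-moments argument, promoted to an almost-sure statement by a concentration/Borel--Cantelli step. Write $\mu^{(d)}_m:=\int\lambda^m\rho(\lambda)\,d\lambda$ for the $m$th moment of the random measure $\rho$. Since $\lambda_1,\dots,\lambda_R$ are the \emph{nonzero} eigenvalues, for every integer $m\ge1$ we have $\mu^{(d)}_m=\frac1R\sum_{i=1}^R\lambda_i^m=\frac1R\tr(M_{p,d,k}^m)$, the zero eigenvalues being invisible to this for $m\ge1$. Take as target the measure $\rho_\infty$ with density $\frac{\sqrt{(\lambda_+-\lambda)(\lambda-\lambda_-)}}{2\pi\min(1,x)\,\lambda}$ on $[\lambda_-,\lambda_+]$, where $\lambda_\pm:=(1\pm\sqrt x)^2$; for $x\le1$ this is exactly the density in the statement (and $\min(1,x)=x$). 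I will show that for each fixed $m$ one has $\mu^{(d)}_m\to\beta_m(x)/\min(1,x)$ almost surely, that these are the moments of $\rho_\infty$, and --- since $\rho_\infty$ has compact support and is therefore determined by its moments --- conclude almost-sure weak convergence of $\rho$ to $\rho_\infty$.

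First, the moments in expectation. Generically the $p$ product states are linearly independent when $p\le d^k$ and span $(\bbC^d)^{\ot k}$ when $p\ge d^k$ (the exceptional configurations lie in a proper real-algebraic subvariety, hence have probability zero under both the normalized and the Gaussian product ensembles), so almost surely $R=\min(p,d^k)$ and $R/d^k\to\min(1,x)$. Feeding this into \thm{trace}, which gives $d^{-k}\bbE[\tr M_{p,d,k}^m]\to\beta_m(x)$ at fixed $m$, yields $\bbE[\mu^{(d)}_m]\to\beta_m(x)/\min(1,x)$. That $\int_{\lambda_-}^{\lambda_+}\lambda^m\,\frac{\sqrt{(\lambda_+-\lambda)(\lambda-\lambda_-)}}{2\pi\lambda}\,d\lambda=\beta_m(x)$ --- so $\rho_\infty$ indeed has the claimed moments --- is the classical identification of the Narayana polynomial $\beta_m$ as the $m$th Mar\v{c}enko--Pastur moment; if desired one verifies it directly via the substitution $\lambda=1+x-2\sqrt x\cos\theta$, which reduces the integral to elementary trigonometric integrals and reproduces $\beta_m(x)=\sum_\ell N(m,\ell)x^\ell$.

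Now the almost-sure upgrade. Fix $m$, put $B_0:=\lambda_++1$, let $\psi_m(t):=\min(t,B_0)^m$ (a bounded function of total variation $B_0^m$ agreeing with $t^m$ on $[0,B_0]$), and set $g_m:=\frac1R\tr\psi_m(M_{p,d,k})$. On the event $\mathcal E_d:=\{\|M_{p,d,k}\|\le B_0\}$, which by \lemref{large-dev} and \cor{eig-ub} has probability $\ge1-2e^{-\Omega(d)}$, we have $g_m=\mu^{(d)}_m$. Now $g_m$ is a function of the $pk$ independent vectors $\{\ket{\varphi_s^a}\}$; altering one of them perturbs $M_{p,d,k}$ by a Hermitian matrix of rank $\le2$, which by eigenvalue interlacing changes $\tr\psi_m(M_{p,d,k})$ by at most $2B_0^m$ and hence $g_m$ by at most $2B_0^m/R=O(d^{-k})$. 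McDiarmid's bounded-difference inequality then gives $\Pr(|g_m-\bbE g_m|>t)\le2\exp(-\Omega(d^kt^2))$ for the normalized ensemble (the Gaussian ensemble is handled identically, with Gaussian concentration in place of McDiarmid). Moreover $\bbE g_m\to\beta_m(x)/\min(1,x)$: indeed $|\bbE g_m-\bbE\mu^{(d)}_m|\le(B_0^m+\|M_{p,d,k}\|^m)\Pr(\mathcal E_d^c)$, and since $\|M_{p,d,k}\|$ is at most $\mathrm{poly}(d)$ on $\mathcal E_d^c$ (indeed $\le\tr M_{p,d,k}=p$ for the normalized ensemble) while $\Pr(\mathcal E_d^c)=e^{-\Omega(d)}$, this tends to $0$. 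As the tails are summable over $d=1,2,3,\dots$ for each fixed $t>0$, Borel--Cantelli gives $g_m\to\beta_m(x)/\min(1,x)$ almost surely; and almost surely $\mathcal E_d$ holds for all large $d$ (Borel--Cantelli once more), so almost surely $\mu^{(d)}_m=g_m\to\beta_m(x)/\min(1,x)$. Intersecting these probability-one events over all $m$ and using moment-determinacy of $\rho_\infty$ gives $\rho\Rightarrow\rho_\infty$ almost surely, for both ensembles.

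The substantive work is in the last step, not the first. \thm{trace} all but hands us convergence of $\bbE[\mu^{(d)}_m]$, but moments are unbounded functionals, so to control $\rho$ pathwise one must (i) truncate the test functions $t^m$ using the previously established large-deviation bound on $\|M_{p,d,k}\|$, (ii) verify that the truncation is harmless in expectation, and (iii) use an \emph{exponential} concentration estimate --- a mere $O(d^{-k})$ variance bound is not summable when $k=1$, whereas the bounded-difference and Gaussian tails are. The Gaussian product ensemble costs a little extra bookkeeping, since there the individual norms $\|\hat\varphi_s^a\|$ are themselves unbounded, but this is routine given \lemref{large-dev}.
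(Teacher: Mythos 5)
Your argument is correct and is precisely the ``standard argument'' that the paper invokes without detail (it simply cites \cite{another-moment} and \thm{trace}): convergence of the expected normalized traces to the Narayana polynomials $\beta_m(x)$, identification of $\beta_m(x)$ as the Mar\v{c}enko--Pastur moments, an exponential concentration bound for each truncated linear spectral statistic obtained from the bounded-difference (rank-two perturbation) inequality together with the large-deviation control of $\|M_{p,d,k}\|$ from \lemref{large-dev} and \cor{eig-ub}, Borel--Cantelli, and moment-determinacy of the compactly supported limit. Your one substantive deviation is in fact a correction: since $\rho$ is normalized by $R=\min(p,d^k)$, the limiting density must carry $2\pi\min(1,x)\lambda$ rather than $2\pi x\lambda$ in the denominator when $x>1$ (as written, the stated density has total mass $1/x<1$ in that regime), and your $\rho_\infty$ fixes this while agreeing with the statement for $x\leq 1$.
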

Here $\lambda_{\pm}= (1\pm\sqrt{x})^2$ and $I(\lambda_- \leq \lambda
\leq \lambda_+)=1$ if $\lambda_- \leq \lambda
\leq \lambda_+$ and 0 otherwise.

This corollary follows from \thm{trace} using standard
arguments\cite{another-moment}.  We believe, but are unable to prove,
that in the $x\leq 1$ case, 
the probability of
any non-zero eigenvalues existing below $\lambda_- - \eps$ vanishes for any
$\eps>0$ 
in the limit of large $d$ at fixed $x$, just as is known when $k=1$.

\subsection{Applications}
\subsubsection{Data hiding}

One of the main motivations for this paper was to analyze the proposed data-hiding and correlation-locking scheme of \cite{LW-locking}.  In this section, we will briefly review their scheme and 
explain the applicability of our results.  

Suppose that $p = d
\log^c(d)$ for some constant $c>0$, and we consider the $k$-party
state $\rho = \frac{1}{p}\sum_{s=1}^p \varphi_s$.  We can think of $s$
as a message of $(1+o(1))\log d$ bits that is ``locked'' in the shared
state.  In \cite{LW-locking} it was proved that any LOCC (local operations and classical communication) protocol that uses a
constant number of rounds cannot produce an output with a
non-negligible amount of mutual information with $s$, and \cite{LW-locking} also proved that
the parties cannot recover a non-negligible amount of mutual information with each other that would not be
revealed to an eavesdropper on their classical communication so that the state cannot be used to
produce a secret key.  (They also
conjecture that the same bounds hold for an unlimited number of
rounds.)   However, if $c\log\log(d) + \log(1/\eps)$ bits of $s$ are
revealed then each party is left with an unknown state from a set of
$\eps d$ states in $d$ dimensions. Since these states are randomly
chosen, it is possible for each party to correctly identify the
remaining bits of $s$ with probability $1-O(\eps)$~\cite{Ashley-distinguish}.

On the other hand, the bounds on the eigenvalues of $\rho$ established by our
\cor{eig-LD} imply that the scheme of Ref.~\cite{LW-locking} can
be broken by a separable-once-removed  quantum measurement\footnote{This refers to a POVM (positive operator valued measure) in which all but one of
the measurement operators are product operators.}: specifically the
measurement given by completing $\{\frac{p}{\|\rho\|}\varphi_s\}_s$
into a valid POVM.  We hope that our bounds will also be of
use in proving their conjecture about LOCC distinguishability with an
unbounded number of rounds.  If this conjecture is established then it
will imply a dramatic separation between the strengths of LOCC and
separable-once-removed quantum operations, and perhaps could be strengthened
to separate the strengths of LOCC and separable operations.

\subsubsection{Sampling from heavy-tailed distributions}
\label{sec:convex}
A second application of our result is to convex geometry.
The matrix $M$ can be thought of as the empirical
covariance matrix of a collection of random product vectors.
These random product vectors have unit norm, and the distribution has $\psi_r$ norm on the order of $1/\sqrt{d^k}$ iff $r$ satisfies $r\leq 2/k$. 
 Here the $\psi_r$ norm is defined  (following \cite{ALPT09b}) for $r>0$ and for a scalar random variable $X$ as
\be  \| X \|_{\psi_r} = \inf \{C > 0 : \bbE[ \exp (|X| / C)^r] \leq
2\}
\label{eq:psir-def}\ee
and for a random vector $\ket\varphi$ is defined in terms of its linear forms:
$$ \| \varphi \|_{\psi_r} = \sup_{\ket\alpha} \| \braket{\alpha}{\varphi} \|_{\psi_r},$$
where the $\sup$ is taken over all unit vectors $\alpha$.
Technically, the $\psi_r$ norm is not a norm for $r<1$, as it does
not satisfy the triangle inequality.  To work with an actual norm, we
could replace \eq{psir-def} with $\sup_{t \geq 1} \bbE[|X|^t]^{1/t} /
t^{1/r}$, which similarly captures the tail dependence.   We mention
also that the $\psi_2$ norm has been called the subgaussian moment and
the $\psi_1$ norm the subexponential moment.

Thm.~3.6 of Ref.~\cite{ALPT09a} proved that when $M$ is a sum of vectors from a distribution with bounded $\psi_1$ norm and $x\gg 1$ then $M$ is within $O(\sqrt{x}\log x)$ of  $xI$ with high probability.
And as we have stated, Refs.~\cite{Rudelson,AW02} can prove
that $\|M - xI\| \leq O(\sqrt{xk \log(d)})$ with high probability, even without assumptions on $r$, although the bound is only meaningful when $x\gg \sqrt{k
  \log d}$.
In the case when $x\ll 1$ and $1\leq r\leq 2$ (i.e. $k\leq 2$), Thm 3.3 of Ref.~\cite{ALPT09b} proved that $M$ is within $O(\sqrt{x}\log^{1/r}(1/x))$ of a rank-$p$ projector.   Aubrun has conjectured that their results should hold for $r>0$ and any distribution on $D$-dimensional unit vectors with  $\psi_r$ norm $\leq O(1/\sqrt{D})$. If true, this would cover the ensembles that we consider.

Thus, our main result bounds the spectrum of $M$ in a
setting that is both more general than that of \cite{ALPT09a,ALPT09b} (since we allow general $x>0$ and $k\geq 1$, implying that $r=2/k$ can be arbitrarily close to 0) and more specific (since we do not consider only products of uniform random vectors, and not general ensembles with bounded $\psi_r$ norm).  
Our results can be viewed as evidence in support of Aubrun's conjecture.

\subsection{Notation}
For the reader's convenience, we collect here the notation used
throughout the paper.   This section omits variables that are used
only in the section where they are defined.

\mn
\begin{tabular}{|l|l|}
\hline
Variable & Definition \\ \hline
$d$ &  local dimension of each subsystem.\\
$k$ &  number of subsystems.  \\
$p$ &  number of random product states chosen\\
$x$ &  $p/d^k$. \\
$\ket{\varphi_s^i}$ & unit vector chosen at random from $\bbC^d$
for $s=1,\ldots,p$ and $i=1,\ldots,k$. \\
$\ket{\hat{\varphi_s^i}}$ & Gaussian vector from $\bbC^d$ with
$\bbE[\braket{\varphi_s^i}{\varphi_s^i}]=1$. \\
$\varphi$ & $\proj{\varphi}$ (for any state $\ket{\varphi}$)\\
$\ket{\varphi_s}$ & $\ket{\varphi_s^1} \ot \cdots \ot
\ket{\varphi_s^k}$\\
$\ket{\hat\varphi_s}$ & $\ket{\hat\varphi_s^1} \ot \cdots \ot
\ket{\hat\varphi_s^k}$\\
$M_{p,d,k}$ & $\sum_{s=1}^p \varphi_s$ \\
$\lambda_{\pm}$ & $ (1\pm\sqrt{x})^2$ \\
$E_{p,d,k}^m$ & $\bbE[\tr M_{p,d,k}^m]$ \\
$e_{p,d,k}^m$ & $\frac{1}{d^k}\bbE[\tr M_{p,d,k}^m]$ \\
$E_d[\vs]$ & $\bbE[\tr(\varphi_{s_1}\cdots \varphi_{s_m})]$, where 
$\vs = (s_1,\ldots,s_m)$ \\
$G(x,y)$ & $\sum_{m\geq 0} y^m e_{p,d,k}^m$ \\
$\beta(x)$ & $\sum_{\ell=1}^m N(m,\ell) x^\ell$ \\
$N(m,\ell)$ & Narayana number: $\frac{1}{m}\binom{m}{\ell-1}\binom{m}{\ell} = 
\frac{1}{\ell}
\binom{m}{\ell-1}\binom{m-1}{\ell-1}=\frac{m!m-1!}{\ell!\ell-1!m-\ell!m-\ell+1!}$
(and $N(0,0)=1$)\\
$F(x,y)$ & $\sum_{0\leq\ell\leq m<\infty} N(m,\ell) x^\ell y^m$\\
\hline
\end{tabular}

\mn
We also define $\ket{\hat\varphi_s}$, $\hat{M}_{p,d,k}$, $\hat
E_{p,d,k}$, $\hat{G}(x,y)$ and so on by replacing $\ket{\varphi_s^i}$
with $\ket{\hat\varphi_s^i}$.

\subsection{Proof of large deviation bounds}
\label{sec:large-dev}

In this section we prove \lemref{beta-comp}, \lemref{large-dev} and the lower bound of
\cor{eig-ub}.  First we review some terminology and basic results from
large deviation theory, following Ref.~\cite{Ledoux}.  Consider a set $X$
with an associated measure $\mu$ and distance metric $D$.  If
$Y\subseteq X$ and $x\in X$ then define $D(x,Y) := \inf_{y\in Y}
D(x,y)$.  For any $\eps\geq 0$ define $Y_\eps := \{x\in X: D(x,Y)\leq
\eps\}$.  Now define the concentration function $\alpha_X(\eps)$ for
$\eps\geq 0$ to be
$$\alpha_X(\eps) := \max \{ 1 - \mu(Y_\eps) : \mu(Y) \geq 1/2\}.$$
Say that $f:X\ra \bbR$ is $\eta$-Lipschitz if $|f(x)-f(y)| \leq \eta
D(x,y)$ for any $x,y\in X$.  If $m$ is a median value of $f$
(i.e. $\mu(\{x : f(x)\leq m\})=1/2$) then we can combine these
definitions to obtain the concentration result \be \mu(\{x: f(x) \geq
m + \eta \eps\}) \leq \alpha_X(\eps)
\label{eq:gen-concentration}.\ee
Proposition 1.7 of Ref.~\cite{Ledoux} proves that \eq{gen-concentration}
also holds when we take $m=\bbE_{\mu}[f]$.

Typically we should think of $\alpha_X(\eps)$ as decreasingly
exponentially with $\eps$.  For example, Thm 2.3 of \cite{Ledoux}
proves that $\alpha_{S^{2d-1}}(\eps) \leq e^{-(d-1)\eps^2}$, where
$S^{2d-1}$ denotes the unit sphere in $\bbR^{2d}$, $\mu$ is the
uniform measure and we are using the Euclidean distance.

To analyze independent random choices, we define the $\ell_1$ direct
product $X^n_{\ell^1}$ to be the set of $n$-tuples $(x_1,\ldots,x_n)$
with distance measure $D_{\ell^1}((x_1,\ldots,x_n),(y_1,\ldots,y_n))
:= D(x_1,y_1) + \ldots + D(x_n,y_n)$.  Similarly define $X^n_{\ell^2}$
to have distance measure
$D_{\ell^2}((x_1,\ldots,x_n),(y_1,\ldots,y_n)) := \sqrt{D(x_1,y_1)^2 +
  \ldots + D(x_n,y_n)^2}$.

Now we consider the normalized ensemble.  Our random matrices are
generated by taking $pk$ independent draws from $S^{2d-1}$,
interpreting them as elements of $\bbC^{d}$ and then constructing
$M_{p,d,k}$ from them.  We will model this as the space
$((S^{2d-1})^p_{\ell^2})^k_{\ell^1}$.  First, observe that Thm 2.4 of
\cite{Ledoux} establishes that
 $$\alpha_{(S^{2d-1})^p_{\ell^2}}(\eps) \leq e^{-(d-1)\eps^2} .$$
Next, Propositions 1.14 and 1.15 of \cite{Ledoux} imply that 
$$\alpha_{((S^{2d-1})^p_{\ell^2})^k_{\ell^1}}(\eps) \leq e^{-(d-1)\eps^2/k} .$$
Now we consider the map $f:((S^{2d-1})^p_{\ell^2})^k_{\ell^1} \ra
\bbR$ that is defined by $f(\{ \ket{\varphi_s^i}\}_{s=1,\ldots,p \atop
  i=1,\ldots,k}) = \| M_{p,d,k}\|$, with $M_{p,d,k}$ defined as usual
as $M_{p,d,k} = \sum_{s=1}^p \varphi_s^1 \ot \cdots \ot \varphi_s^k$.
To analyze the Lipschitz constant of $f$, note that the function $M\ra
\|M\|$ is 1-Lipschitz if we use the $\ell_2$ norm for matrices
(i.e. $D(A,B) = \sqrt{\tr(A-B)^\dag(A-B)}$)~\cite{HJ}.  Next, we can
use the triangle inequality to show that the defining map from
$((S^{2d-1})^p_{\ell^2})^k_{\ell^1}$ to $M_{p,d,k}$ is also
1-Lipschitz.  Thus, $f$ is 1-Lipschitz.  Putting this together we
obtain the proof of \eq{norm-LD}.

Next, consider the Gaussian ensemble.  any Gaussian vector
$\ket{\hat\varphi_s^i}$ can be expressed as $\ket{\hat\varphi_s^i} =
\sqrt{r_{s,i}}\ket{\varphi_s^i}$, where $\ket{\varphi_s^i}$ is a
random unit vector in $\bbC^d$ and $r_{s,i}$ is distributed according to
$\chi_{2d}^2/2d$. Here $\chi_{2d}^2$ denotes the chi-squared
distribution with $2d$ degrees of freedom; i.e. the sum of the squares
of $2d$ independent Gaussians each with unit variance.

The normalization factors are extremely likely to be close to 1.
First, for any $t<d$ one can compute
$$\bbE[e^{tr_{s,i}}] = (1-t/d)^{-d}.$$
Combining this with Markov's inequality implies that 
$\Pr[r_{s,i}\geq 1+\eps] = \Pr[e^{tr_{s,i}}\geq e^{t(1+\eps)}]  \leq 
(1-t/d)^{-d}e^{-t(1+\eps)}$ for any $t>0$.  We will set $t=d\eps/(1+\eps)$ and
then find that
\be \Pr[r_{s,i}\geq 1+\eps] \leq e^{-d(\eps-\ln(1+\eps))}
\leq e^{-\frac{d\eps^2}{4}},
\label{eq:csi-ub}\ee
where the second inequality holds when $\eps \leq 1$.
Similarly we can take $t=-d\eps/(1-\eps)$ 
to show that
$$\Pr[r_{s,i}\leq 1-\eps] \leq e^{d(\eps + \ln(1-\eps))}
\leq e^{-\frac{d\eps^2}{2}}.$$
Now we use the union bound to argue that with high probability {
  none} of the $r_{s,i}$ are far from 1.  In particular the
probability that {\em any} $r_{s,i}$ differs from 1 by more than $\eps/k$ is
$\leq 2pke^{-d\eps^2/4 k^2}$.

In the case that all the $r_{s,i}$ are close to 1, we can then obtain the
operator inequalities
\be (1-\eps)M_{p,d,k}\leq \hat{M}_{p,d,k} \leq (1+2\eps)M_{p,d,k}.
\label{eq:M-op-ineq}\ee
(For the upper bound we use $(1+\eps/k)^k \leq e^\eps \leq 1+2\eps$
for $\eps\leq 1$.)
This establishes that $\|\hat{M}_{p,d,k}\|$ is concentrated around the
expectation of  
$\bbE[\|{M}_{p,d,k}\|]$, as claimed in \eq{gaussian-LD}.

One application of these large deviation bounds is to prove the lower
bound in \cor{eig-ub}, namely that $(1+\sqrt{x})^2 - O\l(\frac{\ln
  \D}{\D^{1/2k}}\r) \leq\bbE[\|M_{p,d,k}\|]$.  First observe that
\thm{trace} and \lemref{beta-comp} imply that
$$\frac{\l(1-\frac{m^2}{p}\r)x}{2m^2(1+\sqrt{x})^3} \lambda_+^m \leq
e_{p,d,k}^m.$$ 
On the other hand, $d^{-k}\tr M \leq \|M\|$ and so $e_{p,d,k}^m \leq
\bbE[\|M_{p,d,k}\|^m]$.   Define $\mu := \bbE[\|M_{p,d,k}\|]$.  Then
\ban e_{p,d,k}^m & \leq \bbE[\|M_{p,d,k}\|^m] \\
& = \int_0^\infty d\lambda \Pr[\|M_{p,d,k}\| \geq \lambda] m
\lambda^{m-1}
& \text{using integration by parts} \\
& \leq \mu^m + m\int_0^\infty d\eps (\mu + \eps)^{m-1} \Pr[\|M_{p,d,k}\|
\geq \mu + \eps] \\
& \leq \mu^m + m\int_0^\infty d\eps (\mu + \eps)^{m-1} 
e^{-\frac{(d-1)\eps^2}{k}}
& \text{from \eq{norm-LD}} 
\\ & \leq \mu^m\l(1 + m 
\int_0^\infty d\eps 
\exp\l((m-1)\eps
-\frac{(d-1)\eps^2}{k}\r)\r)
& \text{using $1+\eps/\mu \leq 1+\eps \leq e^\eps$}
\\ & \leq \mu^m\l(1 + m 
\int_{-\infty}^\infty d\eps 
\exp\l(-\frac{d-1}{k}\l(\eps - \frac{k(m-1)}{2(d-1)}\r)^2
 + \frac{k^2(m-1)^2}{4(d-1)}\r)\r)
& \text{completing the square} 
\\ & \leq \mu^m\l(1 + m
\sqrt{\frac{2\pi k}{d-1}}
\exp\l(\frac{k^2(m-1)^2}{4(d-1)}\r)\r)
& \text{performing the Gaussian integral}
\ean
Combining these bounds on $e_{p,d,k}^m$ and taking the $m^{\text{th}}$
root we find that
$$\mu \geq \lambda_+ \l(\frac{\l(1-\frac{m^2}{p}\r)x}
{2m^2(1+\sqrt{x})^3\l(1  + m\sqrt{\frac{2\pi k}{d-1}}
\exp\l(\frac{k^2(m-1)^2}{4(d-1)}\r)\r)}\r)^{\frac{1}{m}}$$
Assuming that $m^2 \leq p/2$ and $m^2k^2 \leq d$, we find that 
$\mu \geq \lambda_+\l(1 - O\l(\frac{\ln(m)}{m}\r)\r)
= \lambda_+\l(1 - O\l(\frac{\ln(d)}{\sqrt{d}}\r)\r)$, which yields
the lower bound on $\bbE[\|M\|]$ stated in \cor{eig-ub}.  We omit the similar, but more tedious, arguments that can be used to lower-bound $\bbE[\|\hat M\|]$.

We conclude the section with the proof of \lemref{beta-comp}.
\begin{proof}
For the upper bound, note that $N(m,\ell)\leq \binom{m}{\ell}^2 \leq
\binom{2m}{2\ell}$ and so 
$$\sum_{\ell=1}^m N(m,\ell) x^\ell \leq
\sum_{\ell'=2}^{2m} \binom{2m}{\ell'} \sqrt{x}^{\ell'} 
 = (1 + \sqrt{x})^{2m}.$$

For the lower bound, first observe that
$$\frac{\binom{2m}{2\ell}}{\binom{m}{\ell}^2} = \frac{2m(2m-1)\cdots
  (2m-2\ell+1)}{m\cdot m\cdot (m-1)\cdot(m-1)\cdots(m-\ell+1)
\cdot(m-\ell+1)} \cdot \frac{\ell!^2}{(2\ell)!} \leq
2^{2\ell}\cdot \frac{\ell!^2}{(2\ell)!} \leq 2\sqrt{\ell} \leq 2\ell.$$
This implies that 
\be N(m,\ell) = \frac{\ell}{m(m-\ell+1)} \binom{m}{\ell}^2 
\geq \binom{2m}{2\ell} / 2m^2.
\label{eq:Nara-lb}\ee
Next, we observe that $\binom{2m}{2\ell+1} \leq \binom{2m}{2\ell} + \binom{2m}{2\ell +2}$, and so by comparing coefficients, we see that 
$$\frac{(1+\sqrt{x})^3}{x} \sum_{\ell=1}^m \binom{2m}{2\ell} \geq (1+\sqrt{x})^{2m}.$$
Combining this with \eq{Nara-lb} completes the proof of the Lemma.
\end{proof}

\section{Approach 1: Feynman diagrams}\label{sec:matt}

\subsection{Reduction to the Gaussian ensemble}
\label{sec:gaussian}
We begin by showing how all the moments of the normalized ensemble are
always upper-bounded by the moments of the Gaussian ensemble.
A similar argument was made in
\cite[Appendix B]{BHLSW03}.  In both cases, the principle is that Gaussian
vectors can be thought of as normalized vectors together with some
small fluctuations in their overall norm, and that by convexity the
variability in norm can only increase the variance and other higher
moments.

\begin{lemma}\label{lem:normal-gaussian}
\bit\item[(a)] For all $p,d,k,m$ and all strings $\vs\in[p]^m$, 
\be e^{-\frac{m^2}{2d}}{\hat{E}_d[\vs]} \leq {E_d[\vs]}
\leq {\hat{E}_d[\vs]}.\label{eq:Ed-gauss-ineq}\ee
\item[(b)] For all $p,d,k,m$, 
\be e^{-\frac{m^2k}{2d}}{\hat{E}_{p,d,k}^m} \leq 
E_{p,d,k}^m\leq \hat{E}_{p,d,k}^m.\label{eq:Epdk-gauss-ineq}\ee
\eit
\end{lemma}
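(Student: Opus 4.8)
The plan is to strip the fluctuating norm off each Gaussian vector via its polar decomposition. As recalled in \secref{large-dev}, a $\bbC^d$ Gaussian with $\bbE[\braket{\hat\varphi}{\hat\varphi}]=1$ factors as $\ket{\hat\varphi_s}=\sqrt{r_s}\ket{\varphi_s}$, where $\ket{\varphi_s}$ is uniform on the unit sphere, $r_s=\braket{\hat\varphi_s}{\hat\varphi_s}\sim\chi^2_{2d}/2d$, and --- the one extra fact I need --- $r_s$ is \emph{independent} of $\ket{\varphi_s}$ (radius and direction of a $U(d)$-invariant law are independent). Thus $\hat\varphi_s=r_s\varphi_s$, so in $\hat\varphi_{s_1}\cdots\hat\varphi_{s_m}$ every value $s$ occurring in $\vec s$ pulls out a scalar $r_s^{n_s}$, with $n_s$ its multiplicity and $\sum_s n_s=m$. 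Because the $r_s$ (over distinct $s$) are mutually independent and independent of all the directions,
$$\hat E_d[\vec s]=\Big(\textstyle\prod_s\bbE[r_s^{n_s}]\Big)E_d[\vec s].$$

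The moment factors are elementary. From $\bbE[(\chi^2_{2d})^n]=2^n d(d+1)\cdots(d+n-1)$ one gets $\bbE[r_s^n]=\prod_{j=0}^{n-1}(1+j/d)$; the lower bound $\bbE[r_s^n]\ge1=(\bbE r_s)^n$ is Jensen's inequality (the ``convexity'' in the remark preceding the lemma), while $1+x\le e^x$ together with $\sum_{j<n}j=\binom n2$ gives $\bbE[r_s^n]\le e^{n(n-1)/2d}$. Multiplying over $s$ and using $\sum_s n_s(n_s-1)\le(\sum_s n_s)^2=m^2$ yields $1\le\prod_s\bbE[r_s^{n_s}]\le e^{m^2/2d}$. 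The remaining ingredient is $E_d[\vec s]\ge0$: the most natural argument in this section is that $\hat E_d[\vec s]\ge0$ --- expand each $\ket{\hat\varphi_s}$ in a fixed orthonormal basis and apply Wick's theorem, so that $\hat E_d[\vec s]$ becomes a sum over pairings of terms $d^{-m}d^{\#\mathrm{loops}}\ge0$ with coefficient $1$ --- whence $E_d[\vec s]=\hat E_d[\vec s]/\prod_s\bbE[r_s^{n_s}]\ge0$ as well. (Alternatively, $\bbE[\varphi^{\ot n}]$ is a positive multiple of $\sum_{\pi\in S_n}P_\pi$, and contracting such permutation operators around a trace produces $d$ to a nonnegative integer power.) Combining, $E_d[\vec s]\le\hat E_d[\vec s]\le e^{m^2/2d}E_d[\vec s]$, which is \eq{Ed-gauss-ineq}.

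Part (b) then reduces to (a) subsystem by subsystem: since the trace of a tensor product factorizes and the $k$ copies are independent, $\hat E_{p,d,k}^m=\sum_{\vec s\in[p]^m}\hat E_d[\vec s]^k$ and likewise $E_{p,d,k}^m=\sum_{\vec s}E_d[\vec s]^k$, which is exactly the relation in \eq{sum}. Raising $0\le E_d[\vec s]\le\hat E_d[\vec s]\le e^{m^2/2d}E_d[\vec s]$ to the power $k\ge1$ and summing over $\vec s$ gives \eq{Epdk-gauss-ineq}.

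I expect the only step requiring a genuine (if short) argument to be the nonnegativity of $E_d[\vec s]$ --- established via the Wick expansion above or the symmetric-subspace/permutation formula; everything else is arithmetic with the $\chi^2_{2d}$ norm.
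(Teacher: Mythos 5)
Your proof is correct and follows essentially the same route as the paper: both decompose the Gaussian vector as a radius times a uniform direction, so that $\hat E_d[\vs]=\bigl(\prod_s \bbE[r_s^{\mu_s}]\bigr)E_d[\vs]$, then bound the product of radial moments between $1$ and $e^{m^2/2d}$, and use the nonnegativity of $E_d[\vs]$ (which you justify via Wick/permutation operators, and which the paper simply asserts for the Gaussian integral) to raise the termwise inequalities to the $k$th power in part (b). In fact your moment computation $\bbE[r_s^{n}]=\prod_{j=0}^{n-1}(1+j/d)$ is the correct one and yields exactly the constant $e^{-m(m-1)/(2d)}\ge e^{-m^2/(2d)}$ claimed in the lemma, whereas the paper's displayed prefactor $\prod_s d!\,d^{\mu_s}/(d+\mu_s)!=\prod_s\prod_{j=1}^{\mu_s}(1+j/d)^{-1}$ has its product index shifted by one (a slip traceable to the normalization of the inserted radial integral) and only gives $e^{-m(m+1)/(2d)}$.
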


\begin{proof}
First note that
\be
E_d[\vs]
=\Bigl( \prod_{s=1}^p \int_{|\varphi_s|^2=1} {\rm d}\mu(\varphi_s) \Bigr)
\langle \varphi_{s_1},\varphi_{s_2} \rangle
\langle \varphi_{s_2},\varphi_{s_3} \rangle ...
\langle \varphi_{s_m},\varphi_{s_1} \rangle.
\ee
where the integral is over $\ket{\varphi_s}\in\bbC^d$ constrained to
$\braket{\varphi_s}{\varphi_s}=1$.

Next, 
for a given choice of $s_1,...,s_m$, let $\mu_s(s_1,...,s_m)$ denote the
number of times the letter $s$ appears.  For example, for
$s_1,...,s_m=1,2,2,1,3$ we have $\mu_1=2,\mu_2=2,\mu_3=1$.
Then,
let us introduce variables $r_s$ and use
$\int_0^\infty {\rm d}r_s \exp(-\D r_s^2/2)
r_s^{2\D+2\mu_s-1}
\frac{\D^{\D+\mu_s}}{(\D+\mu_s)!}=1$ to write
\begin{eqnarray}
E_d[s_1,s_2,...,s_m]
\label{fd}
&=& \Bigl( \prod_{s=1}^p \int_{|\varphi_s|^2=1} {\rm d}\mu(\varphi_s) 
\int_0^\infty {\rm d}r_s e^{-\frac{\D r_s^2}{2}}
r_s^{2\D-1}
\frac{\D^{\D+\mu_s}}{(\D+\mu_s)!}
\Bigr)
r_{s_1} \langle \varphi_{s_1},\varphi_{s_2} \rangle 
r_{s_2} \langle \varphi_{s_2},\varphi_{s_3} \rangle ...
r_{s_m} \langle \varphi_{s_m},\varphi_{s_1} \rangle \\ \nonumber
&=&
\Bigl( \prod_{s=1}^p 
\frac{\D!\D^{\mu_s}}{(\D+\mu_s)!}
(\frac{\D}{2\pi})^\D
\int {\rm d}\hat{\varphi}_s 
\exp(-\D |\hat{\varphi}_s|^2/2) \Bigr)
\langle \hat{\varphi}_{s_1},\hat{\varphi}_{s_2} \rangle
\langle \hat{\varphi}_{s_2},\hat{\varphi}_{s_3} \rangle ...
\langle \hat{\varphi}_{s_m},\hat{\varphi}_{s_1} \rangle,
\\ \nonumber & = &
\Bigl( \prod_{s=1}^p 
\frac{\D!\D^{\mu_s}}{(\D+\mu_s)!}\Bigr)
\hat{E}_d[\vs]
\end{eqnarray}
where the integral on the second line is over all
$\ket{\hat{\varphi}_s}\in\bbC^d$, 
with
\be
\ket{\hat{\varphi}_s} = r_s \ket{\varphi_s}.
\ee

Then, since the integral
\be
\hat{E}_d[\vs]=
\label{eq:gauss-int}
\Bigl( \prod_{s=1}^p 
\l(\frac{\D}{2\pi}\r)^\D
\int {\rm d}\hat{\varphi}_s 
\exp(-\D |\hat{\varphi}_s|^2/2) \Bigr)
\langle \hat{\varphi}_{s_1},\hat{\varphi}_{s_2} \rangle
\langle \hat{\varphi}_{s_2},\hat{\varphi}_{s_3} \rangle ...
\langle \hat{\varphi}_{s_m},\hat{\varphi}_{s_1} \rangle
\ee
is positive, and
\be
1\geq \prod_{s=1}^p \frac{\D!\D^{\mu_s}}{(\D+\mu_s)!}\geq 
\frac{1}{\l(1+\frac{1}{d}\r)\cdots\l(1+\frac{m}{d}\r)} \geq
e^{-\frac{m(m+1)}{2d}},
\label{eq:gaussian-LB}
\ee
we establish \eq{Ed-gauss-ineq}.

Since $E_{p,d,k}^m$ (resp. $\hat{E}_{p,d,k}^m$) is a sum over
$E_d[\vs]^m$ (resp. $\hat{E}_d[\vs]^m$), each of which is nonnegative,
we also obtain \eq{Epdk-gauss-ineq}.  This completes the proof of the
lemma.\end{proof}

From now on, we focus on this sum:
\be
\label{eq:gauss-int2}
\hat{E}_{p,d,k}^m = 
\sum_{s_1=1}^p \sum_{s_2=1}^p ... \sum_{s_m=1}^p
\Bigl[
\Bigl( \prod_{s=1}^p 
(\frac{\D}{2\pi})^\D
\int {\rm d}\hat{\varphi}_s 
\exp(-\D |\hat{\varphi}_s|^2/2) \Bigr)
\langle \hat{\varphi}_{s_1},\hat{\varphi}_{s_2} \rangle
\langle \hat{\varphi}_{s_2},\hat{\varphi}_{s_3} \rangle ...
\langle \hat{\varphi}_{s_m},\hat{\varphi}_{s_1} \rangle\Bigr]^k
\ee
We introduce a diagrammatic way of evaluating this sum.

\subsection{Diagrammatics}
This section now essentially follows standard techniques in field
theory and random matrix theory as used in \cite{FZ97}.  The main
changes are: first, for $k=1$, our diagrammatic notation will be the
same as the usual ``double-line'' notation, while for $k>1$ we have a
different notation with multiple lines.  Second, the recursion
relation (\ref{recrel}) is usually only evaluated at the fixed point
where it is referred to as the the ``Green's function in the
large-$d$ approximation'' (or more typically the large-$N$
approximation), 
while we study how the number of diagrams changes as the number of
iterations $a$ is increased in order to verify that the sum of
Eq.~(\ref{greensdef}) is convergent.  Third, we only have a finite
number, $2m$, of vertices, so we are able to control the corrections
which are higher order in $1/\D$ or $1/p$.  In contrast,
Ref. \cite{FZ97}, for example, considers Green's functions which are
sums of diagrams with an arbitrary numbers of vertices.

Integrating Eq.~(\ref{eq:gauss-int}) over $\hat{\varphi}_s$ generates
$\prod_s \mu_s!$ diagrams, as shown in Fig.~1.  Each diagram
is built by starting with one incoming directed line on the left and one outgoing line on
the right, with $m$ successive pairs of vertices as shown in
Fig.~1(a).  We then join the lines coming out of the vertices
vertically, joining outgoing lines with incoming lines, to make all
possible combinations such that, whenever a pair of lines are joined
between the $i$-th pair of vertices and the $j$-th pair of vertices,
we have $s_i=s_j$.  Finally, we join the rightmost outgoing line to the
leftmost incoming line; then the resulting diagram forms a number of closed loops.
The value of Eq.~(\ref{eq:gauss-int}) is equal to the sum over such
diagrams of $\D^{l-m}$, where $l$ is the number of closed loops in the
diagram.  Two example diagrams with closed loops are shown in Fig.~1(b).

\begin{figure}[tb]
\centerline{
\includegraphics[scale=0.4]{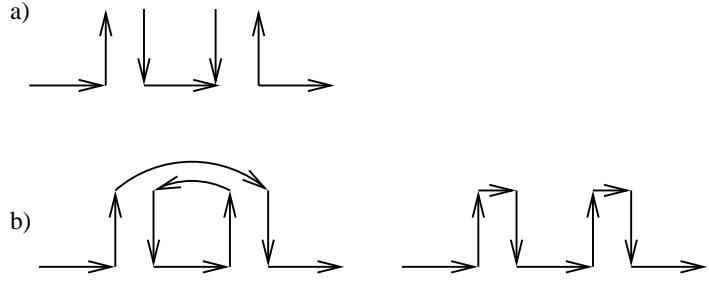}}
\caption{a) Vertices for $\hat{E}_d[s_1,s_2]$.
b) Example diagrams for $\hat{E}_d[s_1,s_2]$ with $s_1=s_2$.  The diagram on the
left has $l=m=2$ while the diagram on the right (which is also present for $s_1\neq s_2$)  has $l=1$.}
\end{figure}

\begin{figure}[tb]
\centerline{
\includegraphics[scale=0.4]{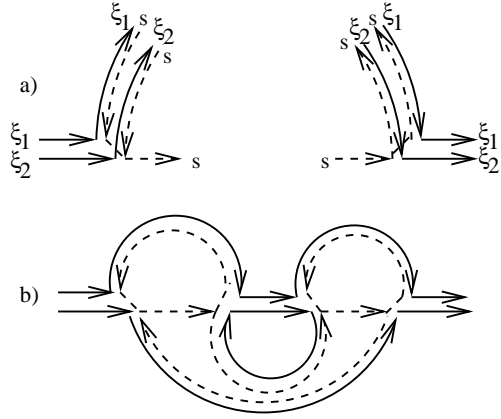}}
\caption{a) Vertices for diagrams.  The left vertex corresponds to
  $\hat\Phi^\dag$ and the right vertex corresponds to $\hat\Phi$.
b) An example diagram with $m=2$ and $k=2$.  There are $l_n=1+2=3$ 
loops of
solid lines and $l_p=1$ disconnected objects of dashed lines.
\label{fig:diagram}}
\end{figure}

Similarly, the sum of Eq.~(\ref{eq:gauss-int2}) can also be written
diagrammatically.  There are $k$ incoming lines on the left and $k$
outgoing lines on the right.  We have $m$ successive pairs of vertices
as shown in Fig.~\ref{fig:diagram}(a).  Each vertex has now $k$ {\it pairs} of lines
connected vertically: either the solid lines in the pairs are outgoing
and the dashed lines are incoming or vice-versa, depending on whether
the vertex has incoming solid lines on the horizontal or outgoing.  We
label the incoming solid lines by indices $\xi_1,...,\xi_k\in[d]$, which we
refer to as color indices, and then alternately assign to lines along
the horizontal axis either a single index of the form $s\in[p]$ for the
dashed lines, which we refer to as flavor indices, or $k$ different
color indices of the form $\xi_1,...,\xi_k\in[d]$ for the solid lines.  Each
of the $k$ lines in a set of $k$ parallel solid lines is also labelled
by a ``copy index'', with the top line labelled as copy $1$, the
second as copy $2$, and so on, up to copy $k$.

Each of the $k$ pairs of lines coming from a vertex is labelled with
a color index $\xi$ and a flavor index $s$, as well as a copy index.
The copy index on a vertical solid line is the same as the copy index
of the solid line it connects to on the horizontal, so a given
vertex has $k$ distinct copy indices, ranging from $1...k$.
Each diagram consists of a way of joining different pairs of
vertical lines, subject to the rule that when we join two
vertical lines, both have the same copy index;
thus, if a given vertical line comes from the $k'$-th row, $1\leq k' \leq k$,
then it must join to a line which also comes from the $k'$-th row.

The value of a diagram is equal to $\D^{-m}$ times the number of
possible assignments of values to the indices, such that whenever two
lines are joined they have the same indices.  The solid lines break up
into some number $l_n$ different closed loops; again, when counting
the number of closed loops, we join the solid lines leaving on the
right-hand side of the diagram to those entering on the left-hand side
of the diagram.  Since all solid lines in a loop have the same copy
index, we have $l_n=l_{n,1}+l_{n,2}+...+l_{n,k}$, where $l_{n,k'}$ is
the number of loops of solid lines with copy index $k'$.  The dashed
lines $s$ come together in vertices where $k$ different lines meet.
Let $l_p$ denote the number of different disconnected sets of dashed
lines.  Then, the value of a diagram is equal to \be \D^{-mk} \D^{l_n}
p^{l_p}.  \ee Note, we refer to disconnected sets of lines in the case
of dashed lines; this is because multiple lines meet at a single
vertex; for $k=1$ these sets just become loops.  An example diagram is
shown in Fig.~\ref{fig:diagram}(b) for $k=2$.

Let $c_m^k(l_n,l_p)$ equal the number of diagrams with given $l_n,l_p$ 
for
given $m,k$.
Then,
\be
\label{sumw}
\hat{E}_{p,d,k}^m
=\sum_{l_n\geq 1} \sum_{l_p\geq 1} c_m^k(l_n,l_p) 
\D^{-mk} \D^{l_n} p^{l_p}.
\ee

\subsection{Rainbow Diagrams}
An important set of diagrams are the so-called ``rainbow diagrams'', which
will be the dominant contributions to the sum (\ref{sumw}).
We define these rainbow diagrams with the following iterative
construction.

We define a group of $k$ solid lines or a single dashed line to
be an open rainbow diagram as shown in Fig.~\ref{fig:rainbow}(a).  
We also define any diagram which can
be contructed as in Fig.~\ref{fig:rainbow}(b,c)
to be a open rainbow diagram, where the $k$ solid lines or one dashed
line with a filled circle may be
replaced by any open rainbow diagram.  We say that the rainbow diagrams
in Fig.~\ref{fig:rainbow}(b,c) has solid and dashed external lines respectively.

In general all open rainbow diagrams can be constructed from the
iterative process described in Fig.~\ref{fig:rainbow}(b,c), with one ``iteration''
consisting of replacing one of the filled circles in Fig.~\ref{fig:rainbow}(b,c) with
one of the diagrams in Fig.~\ref{fig:rainbow}.  The diagrams in Fig.~\ref{fig:rainbow}(a) require zero
iterations, and each iteration adds one vertex.  For example, in
Fig.~4(a,b) we show the two diagrams with solid external lines which
require two iterations to construct for $k=1$.  We define a rainbow
diagram to be any open rainbow diagram where we assume that the right
outgoing edge and left incoming edge are solid lines and are
connected.

\begin{figure}[tb]
\centerline{
\includegraphics[scale=0.4]{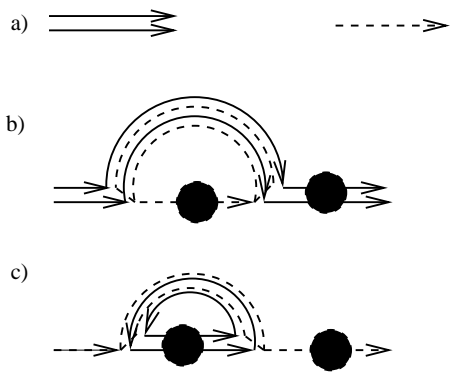}}
\caption{Iterative construction of rainbow diagrams for
$k=2$.  The solid lines with a filled circle denotes any open rainbow
diagram as does the dashed line with a filled circle.
\label{fig:rainbow}}
\end{figure}

\begin{figure}[tb]
\centerline{
\includegraphics[scale=0.4]{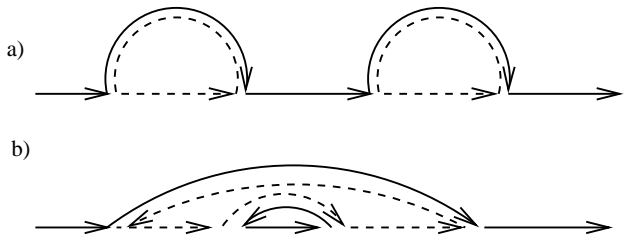}}
\caption{(a,b) Rainbow diagrams which require two iterations for $k=1$.}
\end{figure}

\subsection{Combinatorics of Diagrams and Number of Loops}
We now go through several claims about the various diagrams.  The goal 
will
be to count the number of diagrams for given $l_n,l_p$.
First, we claim that for the rainbow diagrams
\be
\label{saturate}
l_n+k l_p=(m+1)k,
\ee
as may be directly verified from the construction.
Next, we claim that for any diagram
\be
\label{sumb}
l_n+k l_p \leq (m+1)k.
\ee
From Eq.~(\ref{saturate}) the rainbow diagrams saturate this
bound (\ref{sumb}).  We claim that it suffices to show Eq.~(\ref{sumb})
for $k=1$ in order to show Eq.~(\ref{sumb}) for all $k$.  To see this,
consider any diagram for $k>1$.  Without loss
of generality, suppose $l_{n,1}\geq l_{n,k'}$ for all $1\leq k' \leq k$.
Then, $l_n+k l_p\leq k (l_{n,1}+p)$.
We then remove
all the solid lines on the horizontal with copy indices $2...k$, as
well as all pairs of lines coming from a vertex with copy indices
$2...k$.  Having done this, both the solid and the dashed lines form
closed loops, since only two dashed lines meet at each vertex.  The
new diagram is a diagram with $k=1$.  The
number of loops of solid lines is $l_{n,1}$, while the number of
loops of dashed lines in the new diagram,
$l_p'$, is greater than or equal to $l_p$ since we have removed dashed
lines from the diagram.  Thus, if we can show Eq.~(\ref{sumb}) for $k=1$,
it will follow that $l_{n,1}+l_p'\leq (m+1)$ and so $l_n+kl_p \leq (m+1)k$.

To show Eq.~(\ref{sumb}) for $k=1$, we take the given diagram, and
make the replacement as shown between the left and right half of
Fig.~5(a): first we straighten the diagram out as shown in the middle
of Fig.~5(a), then we replace the double line by a wavy line connected
the solid and dashed lines.  Finally, we take the point where the
solid line leaves the right-hand side of the diagram and connects to
the solid line entering the left-hand side and put a single dot on
this point for reference later as shown in Fig.~5(b,c).  Having done
this, the diagram consists of closed loops of solid or dashed lines,
with wavy lines that connect solid to dashed lines, and with one of
the closed loops of solid lines having a dot on it at one point.

This procedure gives an injective mapping from diagrams written as in
Fig.~2 to diagrams written as in Fig.~5.  However, this mapping is not
invertible; when we undo the procedure of Fig.~5(a), we find that some
diagrams can only be written as in Fig.~2 if there are two or more
horizontal lines.  The diagrams which are the result of applying this
procedure to a diagram as in Fig.~2 with only one horizontal line are
those that are referred to in field theory as contributions to the
``quenched average,'' while the sum of all diagrams, including those
not in the quenched average, is referred to as the ``annealed
average''.  To determine if a diagram is a contribution to the
quenched average, start at the dot and then follow the line in the
direction of the arrow, crossing along a wavy line every time it is
encountered, and continuing to follow solid and dashed lines in the
direction of the arrow, and continuing to cross every wavy line
encountered.  Then, a diagram is a contribution to the quenched
average if and only if following the lines in this manner causes one
to traverse the entire diagram before returning to the starting point,
while traversing wavy lines in both directions.  As an example,
consider the diagram of Fig.~5(c): this diagram is not a contribution
to the quenched average, as can be seen by traversing the diagram, or
by re-drawing the diagram as in Fig.~5(d) which requires two
horizontal solid lines\footnote{Such annealed diagrams are
  contributions to the average of the product of two (or more) traces
  of powers of $\hat M_{p,d,k}$.}.  If a diagram is a contribution to the
quenched average, then traversing the diagram in this order (following
solid, dashed, and wavy lines as above) corresponds to traversing the
diagram writen as in Fig.~2 from left to right.

Since all diagrams are positive, we can bound the
sum of diagrams which are contributions to the quenched average
by bounding the sum of all diagrams as in Fig.~5.
The number of wavy lines is equal to $m$.
The diagram is connected so therefore the number of solid plus dashed
loops, which is equal to $l_{n,1}+l_p'$, is at most
equal to the number of wavy lines plus one.  Therefore, Eq.~(\ref{sumb})
follows.  From this construction, 
the way to saturate Eq.~(\ref{sumb}) is to make a diagram which is a tree 
whose
nodes are closed loops of dashed and solid lines and whose edges are 
wavy
lines; that is, a diagram such that the removal of any wavy line breaks
the diagram into two disconnected pieces.  These trees are the
same as the rainbow diagrams above.
In Fig.~5(b) we show the two different trees which correspond
to the rainbow diagrams of Fig.~4.

\begin{figure}[tb]
\centerline{
\includegraphics[scale=0.4]{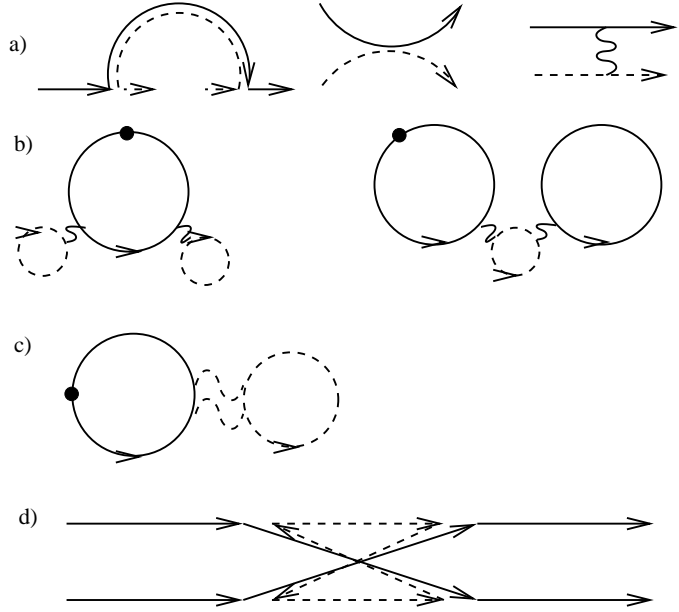}}
\caption{(a) Deformation of diagram.  (b)  Deformation of diagrams
in Fig.~4(a,b). (c) Example of a diagram which contributes to the annealed
average but not the quenched average. (d) Same diagram as in (c).}
\end{figure}

Next, we consider the diagrams which are not rainbow diagrams.
First, we consider the case $k=1$.  Let $d=m+1-l_n-l_p\geq 0$.  If
$d>0$, then the diagram is not a rainbow diagram.  However, if $d>0$, 
using
the construction above with closed loops connected by wavy lines, there
are only $l_n+l_p$ loops connected by more than $l_n+l_p-1$ wavy lines; 
this
implies that the diagram is not a tree (using the notation of
Fig.~5) or a rainbow diagram (using the notation of
Fig.~4), and hence it is possible to
remove $d$ different lines and arrive at a diagram which is a rainbow
diagram.  Thus, all diagrams with $2m$ vertices and
$d>0$ can be formed by taking rainbow diagrams with $2m-d$ vertices and
adding $d$ wavy lines; these wavy lines can be added in at most
$[m(m-1)]^{d}$ different ways.  Thus, for $k=1$ we have
\begin{eqnarray}
\label{ratiobnd1}
m+1-l_n-l_p=d>0 \, \rightarrow \,
c_m^1(l_n,l_p)&\leq & c_{m-d}^1(l_n,l_p) m^{2d}
\end{eqnarray}

We now consider the number of diagrams which are not rainbow diagrams
for $k>1$.  We consider all diagrams, including those
which contribute to the annealed average, but we write the diagrams as in
Fig.~2, possibly using multiple horizontal lines.  Consider
first a restricted class of diagrams: those diagrams for which, for
every vertex with $k$ pairs of lines leaving the vertex, all $k$ of those
pairs of lines connect with pairs of lines at the {\it same} vertex.  This
is not the case for, for example, the diagram of Fig.~\ref{fig:diagram}(b), where of the
two pairs of lines leaving the leftmost vertex, the top pair reconnects at
the second vertex from the left, while the bottom pair reconnects at the
rightmost vertex.  However, for a diagram in this restricted class, the
counting of diagrams is exactly the same as in the case $k=1$, since the
diagrams in this restricted class are in one-to-one correspondence with
those for $k=1$.  So, the number of diagrams in this restricted class,
$c_{m,r}^k$, obeys
\begin{eqnarray}
\label{ratiobnd}
(m+1)k-l_n-kl_p=d>0 \, \rightarrow \,
c_{m,r}^k(l_n,l_p)&\leq & c_{m-d,r}^k(l_n,l_p) [m(m-1)]^{d/k}
\end{eqnarray}

Now, we consider a diagram which is not in this restricted class.
Locate any vertex with incoming solid lines on the horizontal and
an outgoing dashed line on the horizontal, such that
not all pairs of lines leaving this vertex
reconnect at the same vertex.  Call this vertex
$v_1$.  Then, find any other vertex
to which a pair of lines leaving vertex $v_1$ reconnects.
Call this vertex $v_2$.  Let there be $l$ pairs of lines leaving vertex $v_1$
which do not conenct to $v_2$, and similarly $l$ pairs of lines entering 
$v_2$
which do not come from $v_1$, with $1\leq l \leq k-1$.  Label these
pairs of
lines $L^1_{1},L^1_2,...,L^1_k$ and $L^2_1,L^2_2,...,L^2_k$, respectively.
Let these lines connect to pairs of lines $M^1_1,M^1_2,...,M^1_k$ and
$M^2_1,M^2_2,...,M^2_k$ respectively.
Let $v_3$ be the vertex just to the right of $v_1$, so that the dashed
line entering $v_1$ comes from $v_3$, and similarly let $v_4$ be the 
vertex
just to the left of $v_2$, so that the dashed line leaving $v_2$ goes into
$v_4$, as shown in \fig{reconnect}(a).
Then, we determine
if there is a way to
re-connect pairs of lines
so that now $L^1_{l'}$ connects to $L^2_{l'}$
and $M^1_{l'}$ connects to $M^2_{l'}$ for all $l'$ in some subset of
$\{1,...,l\}$ such that the diagram splits into exactly two disconnected
pieces.  If there is, then we find the smallest subset of
$\{1,...,l\}$ with this property (making an arbitrary choice if there is
more than one such subset) and make those reconnections.  Let
${\cal V}_1,{\cal V}_2$ denote the two disconnected subsets of vertices
after the reconnections.  By making these reconnections, then, we
are reconnecting precisely the pairs of lines which originally connected
vertices in set ${\cal V}_1$ to those in ${\cal V}_2$; if there are
$l_c$ such lines, then we increase $l_n$ by $l_c\geq 1$.
Thus, we increase $l_p$ by one and
also increases $l_n$ by at least $1$.  We then modify the diagram to
rejoin the two pieces: the dashed line leaving to the right of
vertex $v_1$ connects to it some vertex $w_1$ in the same piece, and 
there
is some other dashed line in the other piece which connects two vertices
$v_1',w_1'$; we re-connect these dashed lines so that $v_1$ connects to
$w_1'$ and $v_1'$ connects to $w_1$.  This reduces $l_p$ by $1$ back to
its original value and
makes the diagram connected.
Thus, we succeed in increasing $l_n+kl_p-mk$ by at least $1$.

\begin{figure}[tb]
\centerline{
\includegraphics[scale=0.4]{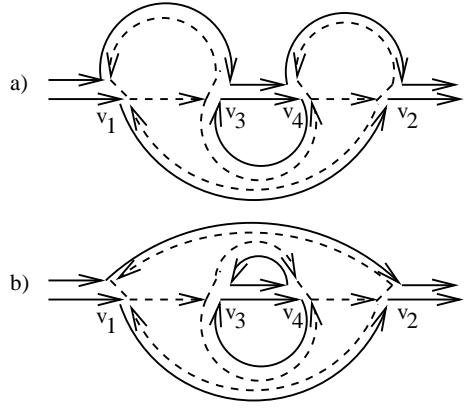}}
\caption{(a) Diagram of Fig.~\ref{fig:diagram}(b) with vertices
  $v_1,v_2,v_3,v_4$ marked, 
for a particular choice of $v_1,v_4$.  (b) Result of applying re-connection
procedure to diagram.
\label{fig:reconnect}}
\end{figure}

On the other hand, if no such subset exists, we re-connect all pairs
of lines for all $1\leq l' \leq l$, as shown in Fig.~\ref{fig:reconnect}(b).  The resulting
diagram must be connected (if not, then there would have been a subset
of lines which could be re-connected to split the line into exactly two
disconnected pieces).  Then,
there are two cases: the first case is when the dashed line leaving
$v_2$ does not connect to $v_1$ (so that $v_2\neq v_3$ and $v_1\neq
v_4$) {\it and} it is possible to re-connect the dashed lines joining
$v_1$ to $v_3$ and $v_2$ to $v_4$ so that now $v_2$ is connected to
$v_1$ and $v_3$ is connected to $v_4$ without breaking the diagram
into two disconnected pieces.  In this first case, we then also make
this re-connection of dashed lines, which increases $l_p$ by one,
while keeping the diagram connected.  However, in this case, the
initial re-connection of pairs of lines may have reduced $l_n$ by at
most $l$.  Thus, in this case $k l_p+l_n-mk$ is increased by at least
$1$.  The second case is when either $v_2$ connects to $v_1$ already
or it is not possible to make the re-connection of dashed lines
without splitting the diagram into two pieces.  This is the case in
Fig.~\ref{fig:reconnect}(b).  In this case, however, 
$l_n$ 
must have increased 
by at least $1$ by
the initial re-connection of pairs of lines\footnote{To see why
  $l_n$ must have been increased by at least one when reconnecting
  pairs of lines, in the case where making the reconnection of the
  dashed line would split the diagram into two disconnected pieces,
  let ${\cal V}_1,{\cal V}_2$ denote the vertices in these two
  disconnected pieces.  Then, by reconnecting the pairs of lines,
  there are no longer any solid lines joining ${\cal V}_1$ to ${\cal
    V}_2$, so $l_n$ increases by $l'\geq 1$.}  and thus again we
increase $k l_p+l_n-mk$ by at least $1$.

Repeating this procedure, we ultimately arrive
at a diagram in the restricted class above.
At each step, 
we succeed in reducing $d=(m+1)k-l_n-k l_p$ by
at least unity, either 
by increasing $l_n$ by at least
$1$ and $l_p$ by $2$, or by increasing $l_p$ by $1$ and reducing $l_n$ 
by
at most $k-1$.
Given a diagram in the restricted class, we can further reduce $d$ following
Eq.~(\ref{ratiobnd}).
Then, any diagram can
be found by starting with a diagram in the restricted class and
undoing this procedure; at each step in undoing the procedure
we have at most $m^2 (m-1)^{2(k-1)}$ choices (there are at most
$m^2$ choices for $v_1,v_2$,
and then we must re-connect at most $2(k-1)$ pairs of lines).
Thus,
for $(m+1)k-l_n-k l_p=d>0$
we have
\begin{eqnarray}
\label{ratiobnd2}
c_m^k(l_n,l_p) & \leq  &
m^{2k} c_{m}^k(l_n+1,l_p) \\ \nonumber 
&&+m^{2k}
c_m^k(l_n-k+1,l_p+1) \\ \nonumber
&&+m^2
c_{m-1}^k(l_n,l_p).
\end{eqnarray}
This implies that for $d>0$,
\begin{eqnarray}
\label{ratiobnd3}
\sum_{l_n,l_p}^{l_n+l_p=m+1-d} c_m^k(l_n,l_p) d^{l_n} p^{l_p}
& \leq  &
\delta \sum_{l_n,l_p}^{l_n+l_p=m+1-(d-1)} c_m^k(l_n,l_p) d^{l_n} p^{l_p},
\end{eqnarray}
with
\be
\delta=
\frac{m^{2k}}{\D}
+\frac{m^{2k}\D^{k-1}}{p}
+\frac{m^2}{\D^k}=
(1+x^{-1})\frac{m^{2k}}{\D}
+\frac{m^2}{\D^k}.
\ee

\subsection{Bound on Number of Rainbow Diagrams}
Finally, we provide a bound on the number of rainbow diagrams.  Let us
define $S_v^a(l_n,l_p)$ to equal to the number of open rainbow
diagrams with solid lines at the end, with $v$ vertices, $l_n$ loops
of solid lines (not counting the loop that would be formed by
connected the open ends), and $l_p$ disconnected sets of dashed lines,
which may be constructed by at most $a$ iterations of the process
shown in Fig.~\ref{fig:rainbow}.  Similarly, define $D_v^a(l_n,l_p)$ to equal to the
number of open rainbow diagrams with dashed lines at the end, with $v$
vertices, $l_n$ loops of solid lines (not counting the loop that would
be formed by connected the open ends), and $l_p$ disconnected sets of
dashed lines, which may be constructed by at most $a$ iterations of
the process shown in Fig.~\ref{fig:rainbow}.  These open rainbow diagrams obey
$l_n/k+l_p=m$.  Define the generating function\footnote{The limit as
  $a\rightarrow \infty$ of this generating functional is equal to, up
  to a factor $1/z$ in front, the Green's function in the large-$\D$
  limit usually defined in field theory.}
\begin{eqnarray}
\label{greensdef}
G_s^{(a)}(z,\D,p)&=&
\sum_v \sum_{l_n} \sum_{l_p} z^{-v/2} \D^{-vk/2} \D^{l_n} p^{l_p} 
S_v^a(l_n,l_p),
\\ \nonumber
G_d^{(a)}(z,\D,p)&=&
\sum_v \sum_{l_n} \sum_{l_p} z^{-v/2} \D^{-vk/2} \D^{l_n} p^{l_p} 
D_v^a(l_n,l_p).
\end{eqnarray}
Then, we have the recursion relations, which come from Fig.~\ref{fig:rainbow}(b,c):
\begin{eqnarray}
\label{recrel}
G_s^{(a)}(z,\D,p)=1+z^{-1} x G_d^{(a-1)}(z,\D,p) G_s^{(a-1)}(z,\D,p), \\ 
\nonumber
G_d^{(a)}(z,\D,p)=1+z^{-1} G_s^{(a-1)}(z,\D,p) G_d^{(a-1)}(z,\D,p).
\end{eqnarray}
First, consider the case
$x\leq 1$.
From Eq.~(\ref{recrel}), $G_d^{(a)}(z,\D,p)=1+(G_s^{(a)}(z,\D,p)-1)/x$ for
all $a$, so that we have the recursion
$G_s^{(a)}(z,\D,p)=1+z^{-1} x G_s^{(a-1)}(z,\D,p) (1+(G_s^{(a-1)}(z,\D,p)-1)/
x)=
1+z^{-1}(x-1)G_s^{(a-1)}(z,\D,p)+z^{-1} G_s^{(a-1)}(z,\D,p)^2$.
The fixed points of this recursion relation are given by
\be
G_s(z,\D,p)\equiv
\frac{z^{-1}(1-x)+1\pm \sqrt{(z^{-1}(1-x)+1)^2-4z^{-1}}}{2z^{-1}}.
\label{eq:rainbow-fp}
\ee
Define
\be
z_0=\l(\frac{1+x-2\sqrt{x}}{(1-x)^2}\r)^{-1}=
(1+\sqrt{x})^2.
\ee
Then, for $z>z_0$, Eq.~(\ref{recrel}) has two real fixed points, while
at $z=z_0$, Eq.~(\ref{recrel}) has a single fixed point at
\be
G_s(z_0,\D,p)=
\frac{z_0}{2}\l[1+z_0^{-1}(1-x)\r] = 1+ \sqrt{x} = \sqrt{z_0} >1.
\ee
Since $G_s^{(0)}(z,\D,p)=G_d^{(0)}(z,\D,p)=1$ which is smaller than the
fixed point, we find that $G_s^{(a)}(z,\D,p)$ increases
montonically with $a$ and remains bounded above by
$G_s(z,\D,p)$.  All rainbow diagrams with $2m$ vertices can be found
after a finite number (at most $m$) iterations of Fig.~\ref{fig:rainbow}(b,c) so
\be
\label{rainbowsum}
\sum_{l_n,l_p}^{l_n+l_p=m+1} c_m^k(l_n,l_p) 
\D^{-mk} \D^{l_n} p^{l_p}\leq
p z_0^{m} 
G_s(z_0,\D,p).
\ee

Alternately, if $x\geq 1$, we use
$G_s^{(a)}(z,\D,p)=1+(G_d^{(a)}(z,\D,p)-1)x$, to 
get the recursion
$G_d^{(a)}(z,\D,p)=1+z^{-1} G_d^{(a)}(z,\D,p) (1+(G_d^{(a)}(z,\D,p)-1)x)=
1+z^{-1}(1-x)G_d^{(a-1)}(z,\D,p)+z^{-1}x G_d^{(a-1)}(z,\D,p)^2$.
Then, again for $z=z_0$ this recursion has a single fixed point
and $G_s^{(a)}(z,\D,p)$ increases monotonically with $a$ and remains
bounded by $G_s(z_0,\D,p)$.

\subsection{Sum of All Diagrams}
We now bound the sum of all diagrams (\ref{sumw}) using the bound
on the sum of rainbow diagrams (\ref{rainbowsum}) and Eq.~(\ref{ratiobnd3}).
\begin{eqnarray}
\label{allbnd}
\sum_{j\geq 0}
\sum_{l_n,l_p}^{l_n+l_p=m+1-j} c_m^k(l_n,l_p) 
\D^{-mk} \D^{l_n} p^{l_p}
& \leq &
p z_0^{m} 
G_s(z_0,\D,p)
\sum_{j\geq 0}
\delta^j.
\end{eqnarray}
Then, if $\delta<1$ we have
\be
\label{finalbnd}
\hat{E}_{p,d,k}^m
\leq
 \frac{p}{1-\delta} z_0^{m} 
{G}_s(z_0,\D,p) = 
 \frac{p}{1-\delta} z_0^{m+\frac{1}{2}} 
\ee
We can pick
$m$ of order $d^{1/2k}$ and still have $\delta \leq 1/2$.  Then we can use $\bbE[\|\hat{M}_{p,d,k}\|] \leq (\hat{E}_{p,d,k}^m)^{1/m}$ to bound
\ba
\bbE[\|\hat{M}_{p,d,k}\|] &\leq 
(1+\sqrt{x})^2
\cdot \exp\l(\frac{\ln(2p\sqrt{z_0})}{m}\r)
\nn
& =
(1+\sqrt{x})^2+
O\l(\frac{k \ln(\D)}{\D^{\frac{1}{2k}}}\r),
\label{eq:diag-eig-bound}
\ea
as claimed in \cor{eig-ub}. 
We are assuming in the $O()$ notation in this bound that $x=\Theta(1)$.

\section{Approach 2: combinatorics and representation theory}
\label{sec:aram}

This section gives a second proof of \thm{trace} that uses facts about
symmetric subspaces along with elementary combinatorics.  The
fundamentals of the proof resemble those of the last section in many
ways, which we will discuss at the end of this section.  However, the
route taken is quite different, and this approach also suggests
different possible extensions.

Recall that we would like to estimate
$$E_{p,d,k}^m = \sum_{\vs\in[p]^m} E_d[\vs]^k.$$
Our strategy will be to repeatedly reduce the string $\vs$ to simpler 
forms.  Below we will describe two simple methods
for reducing $\vs$ into a possibly shorter string $R(\vs)$ such that
$E_d[\vs]$ equals $E_d[R(\vs)]$, up to a possible multiplicative factor
of $1/d$ to some power.  Next we will consider two important special
cases.  First are the {\em completely reducible} strings: $\vs$ for
which the reduced string $R(\vs)$ is the empty string.  These are
analogous to the rainbow diagrams in \secref{matt} and their
contribution can be calculated exactly (in
\secref{complete-reduce}).  The second special case is 
when $\vs$ is {\em irreducible}, meaning that $R(\vs)=\vs$; that is,
neither simplification steps can be applied to $\vs$.  These strings
are harder to analyze, but fortunately make a smaller contribution to
the final sum.
In \secref{irred-strings}, we use representation theory to give upper
bounds for $E_d[\vs]$ for irreducible strings $\vs$, and thereby to
bound the overall contribution from irreducible strings.  Finally, we
can describe a general string as an irreducible string punctuated with
some number of 
repeated letters (defined below) and completely reducible strings.
The overall sum can then be bounded using a number of methods; we will
choose to use a generating function
approach, but inductively verifying the final answer would also be
straightforward.

{\em Reducing the string:}  Recall that $E_d[\vs]=\tr
\varphi_{s_1}\cdots \varphi_{s_m}$, where each $\ket{\varphi_s}$ is a
unit vector randomly chosen from $\bbC^\D$.  We will use the following
two reductions to simplify $\vs$.
\benum
\item {\em Remove repeats.} Since $\varphi_a$ is a pure state,
  $\varphi_a^2=\varphi_a$ and we can replace every instance of {\em aa}
  with {\em a} in $\vs$ without changing $E_d[\vs]$.  Repeatedly
  applying this means that if $s_i=s_{i+1}=\cdots=s_j$, then
  $E_d[\vs]$ is unchanged by deleting positions $i+1,\ldots,j$.
  Here we identify position $i$ with $m+i$ for all $i$, so that
  repeats can wrap around the end of the string: e.g. the string
  11332221 would become 321.  

\item {\em Remove unique letters.}  Since $\bbE[\varphi_a]=I/\D$ for
  any $a$, we can replace any letters which appear only once with
  $I/\D$.  
  Thus, if $s_i\neq s_j$ for all $j\neq i$ then $E_d[\vs]=E_d[\vs']/\D$,
  where $\vs'\in [p]^{m-1}$ is obtained from $\vs$ by deleting the
  position $i$.  Repeating this process results in a string where
  every letter appears at least twice and with a multiplicative factor
  of $1/\D$ for each letter that has been removed.  Sometimes the
  resulting string will be empty, in which case we say
  $E_d[\emptyset]=\tr I = \D$.  Thus for strings of length one,
  $E_d[a] = E_d[\emptyset]/\D = \D/\D = 1$.
\eenum

We will repeatedly apply these two simplification steps until no
further simplifications are possible.  Let $R(\vs)$ denote the
resulting (possibly empty) string.  Recall from above that
when $R(\vs)=\emptyset$, we say $\vs$ is completely reducible, and
when $R(\vs)=\vs$, we say $\vs$ is irreducible.  The sums over these
two special cases are described by the following two Lemmas.

\begin{lemma}\label{lem:complete-reduce}
\be
\frac{1}{d^k}
\sum_{\substack{\vs\in[p]^m \\ R(\vs)=\emptyset}} E_d[\vs]^k
= \sum_{\ell=1}^m N(m,\ell)\frac{(p)_\ell}{d^{kl}}
\leq \beta_m\l(\frac{p}{d^k}\r)
\leq 
\lambda_+^{m}
.\label{eq:Nara-sum}
\ee

\end{lemma}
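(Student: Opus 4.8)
The plan is to turn the identity into a count of non-crossing set partitions, for which the Narayana formula is classical. The two trailing inequalities are immediate: since $(p)_\ell=p(p-1)\cdots(p-\ell+1)\le p^\ell$ we get $\sum_\ell N(m,\ell)(p)_\ell d^{-k\ell}\le\sum_\ell N(m,\ell)(p/d^k)^\ell=\beta_m(p/d^k)$, and $\beta_m(p/d^k)\le\lambda_+^m$ is precisely the upper bound of \lemref{beta-comp}. So everything reduces to the first equality.

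First I would compute $E_d[\vs]$ exactly for a completely reducible $\vs$: the claim is $E_d[\vs]=d^{\,1-\ell}$, where $\ell$ is the number of distinct letters occurring in $\vs$. The rule ``remove repeats'' ($aa\to a$) changes neither $E_d[\vs]$ nor the set of letters present, while the rule ``remove a unique letter'' replaces the corresponding $\varphi_a$ by $\bbE[\varphi_a]=I/d$, hence multiplies $E_d[\vs]$ by $1/d$ and decreases the number of distinct letters by exactly one. So the number of distinct letters is monotone and falls from $\ell$ to $0$ (the empty string, with $E_d[\emptyset]=\tr I=d$) using precisely $\ell$ applications of the second rule, each contributing a factor $1/d$. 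Thus $E_d[\vs]=d\cdot d^{-\ell}=d^{1-\ell}$ and $\tfrac1{d^k}E_d[\vs]^k=d^{-k\ell}$, independently of the order of reductions.

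Next I would group the strings by shape. Call two positions of $\vs$ equivalent iff they carry the same letter; this gives a partition $\ker(\vs)$ of the cyclically ordered set $\{1,\dots,m\}$, and whether $\vs$ is completely reducible depends only on $\ker(\vs)$. A fixed such partition with $\ell$ blocks is realized by exactly $(p)_\ell$ strings, so writing $c(m,\ell)$ for the number of completely reducible partitions of $\{1,\dots,m\}$ with $\ell$ blocks,
\be
\frac1{d^k}\sum_{\substack{\vs\in[p]^m\\ R(\vs)=\emptyset}}E_d[\vs]^k=\sum_{\ell=1}^m c(m,\ell)\,\frac{(p)_\ell}{d^{k\ell}},
\ee
and it remains to show $c(m,\ell)=N(m,\ell)$. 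The route I would take is: (i) a partition of $\{1,\dots,m\}$ is completely reducible iff it is non-crossing (a notion that is the same whether the ground set is taken cyclically or linearly ordered); then (ii) invoke the classical fact that the number of non-crossing partitions of an $m$-set into $\ell$ blocks is the Narayana number $N(m,\ell)$. For the ``only if'' half of (i), each reduction preserves (non-)crossingness of the kernel partition — ``remove a unique letter'' only deletes an isolated singleton block, and ``remove repeats'' only identifies two cyclically adjacent elements of a common block, and neither operation can create or destroy a crossing quadruple — so if $\vs$ reduces to the (non-crossing) empty partition then $\ker(\vs)$ was non-crossing to begin with (this also shows that the condition $R(\vs)=\emptyset$ is unambiguous). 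For the ``if'' half, every nonempty non-crossing partition admits a reduction, because it has either a singleton block or two cyclically adjacent elements in a common block; since reductions strictly shorten the string while preserving non-crossingness, iterating reaches $\emptyset$.

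The main obstacle is the structural fact just invoked: a nonempty non-crossing partition of a cycle always has a singleton block or a cyclically adjacent monochromatic pair. I would prove this by contradiction — if neither occurs then every block has size $\ge2$ and no edge of the cycle is monochromatic, so among all monochromatic pairs $\{i,j\}$ pick one of minimum cyclic distance (necessarily $\ge2$); the short arc from $i$ to $j$ then contains a position $k$, whose block (not a singleton) has another element $k'$, and if $k'$ lay outside that arc the quadruple $i,k,j,k'$ would be a crossing, so $k'$ lies on the arc too and $\{k,k'\}$ is a monochromatic pair strictly closer than $\{i,j\}$, a contradiction. The one subtlety to keep straight throughout is that the ``remove repeats'' rule wraps around the string (e.g.\ $11332221\to 321$), so the partition statements must be made for the cyclic ground set. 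If one would rather avoid non-crossing partitions altogether, an equivalent route is to derive a recursion for $c(m,\ell)$ directly by peeling off the block through a fixed position and check that it agrees with the standard recursion for $N(m,\ell)$ (equivalently, that $\sum_{m,\ell}c(m,\ell)x^\ell y^m$ satisfies the functional equation obeyed by $F(x,y)$); this is the inductive verification alluded to in the text.
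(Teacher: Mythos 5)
Your proposal is correct, and the counting step is done by a genuinely different (though closely related) route. Both proofs begin identically: $E_d[\vs]=d^{1-\ell}$ because each of the $\ell$ distinct letters must eventually be deleted by exactly one unique-letter removal (contributing $1/d$) while repeat-removals are free, and each equality pattern with $\ell$ blocks is realized by $(p)_\ell$ strings. The divergence is in showing that the number of completely reducible patterns with $\ell$ blocks is $N(m,\ell)$. The paper canonicalizes each string by first-occurrence order and constructs an explicit bijection with the pairs $(a_1,b_1),\ldots,(a_\ell,b_\ell)$ of \cite{Nara1}, proving injectivity via a greedy reconstruction algorithm whose correctness rests on the forbidden $j$-$i$-$j$-$i$ subsequence. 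You instead prove the equivalence ``completely reducible $\Leftrightarrow$ the kernel partition is non-crossing'' in both directions (crossings are invariant under both reduction rules; a nonempty non-crossing partition with no singleton block has a cyclically adjacent monochromatic pair, by your minimal-distance argument, which is correct) and then invoke Kreweras' classical count of non-crossing partitions. The paper only states this non-crossing connection as a remark (``Connection to free probability'') without using it. What your route buys: the condition $R(\vs)=\emptyset$ is shown to be an intrinsic, order-independent property of the string, a point the paper glosses over; and the appeal to free probability becomes the proof rather than an afterthought. What it costs: you must import the Kreweras/Narayana count as a black box, whereas the paper's bijection is essentially self-contained modulo the lattice-path count of \cite{Nara1}. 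The two trailing inequalities are handled the same way in both ($( p)_\ell\leq p^\ell$ and \lemref{beta-comp}).
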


We will prove this Lemma and discuss its significance in
\secref{complete-reduce}.  It will turn out that the completely
reducible strings make up the dominant contribution to $E_{p,d,k}^m$
when $m$ is not too large.  Since \eq{Nara-sum} is nearly 
independent of $k$ (once we fix $x$ and $p$), this means that 
$E_{p,d,k}^m$ is also nearly independent of $k$.  It remains only to
show that the sub-leading order terms do not grow too quickly with
$k$.  Note that this Lemma establishes the lower bound of \thm{trace}.

For the irreducible strings we are no longer able to give an exact
expression.  However, when $m$ is sufficiently small relative to $\D$
and $p$, we have the following nearly tight bounds.
\begin{lemma}\label{lem:irred-strings}
If $m<\min(d^{k/6}/2^{1+k/2},(p/5000)^{\frac{1}{2k+12}})$ then 
\be
\sum_{\substack{\vs\in[p]^m \\ R(\vs)=\vs}} E_d[\vs]^k
\leq
\frac{e^{\frac{km}{2d^{1/3}}}}
{\l(1 - \frac{5000m^{2k+12}}{p}\r)
\l(1-\frac{2^{2+k}m^2}{d^{\frac{k}{3}}}\r)}
x^{\frac{m}{2}}
\label{eq:irred-sum}\ee
Additionally, when $m$ is even, the left-hand side of \eq{irred-sum}
is $\geq x^{\frac{m}{2}} e^{-\frac{m^2}{2p}}$.
\end{lemma}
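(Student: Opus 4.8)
The plan is to derive an exact permutation-sum formula for $E_d[\vs]$, extract its dominant term using the structure that irreducibility forces, and then sum over all irreducible strings of length $m$ grouped by colour pattern. Write the distinct letters of $\vs$ as $1,\dots,q$, let $T_a\subseteq\{1,\dots,m\}$ be the positions carrying letter $a$, and $\mu_a=|T_a|$, so $\sum_a\mu_a=m$; irreducibility says exactly that $\mu_a\ge 2$ for every $a$ and that $s_i\neq s_{i+1}$ for every $i$ (indices cyclic). Using $\tr(A_1\cdots A_m)=\tr\big((A_1\ot\cdots\ot A_m)P_c\big)$ with $c=(1\,2\,\cdots\,m)$, the identity $\bbE[\varphi^{\ot n}]=\frac{1}{d(d+1)\cdots(d+n-1)}\sum_\pi P_\pi$ (the normalised symmetric-subspace projector expanded into permutation operators), independence of the $\varphi_a$, and $P_\sigma P_\tau=P_{\sigma\tau}$, $\tr P_\sigma=d^{\cyc(\sigma)}$, I get
\be
E_d[\vs]=\frac{1}{\prod_a d(d+1)\cdots(d+\mu_a-1)}\ \sum_{\pi\in G_{\vs}}d^{\cyc(\pi c)},\qquad G_{\vs}:=\prod_a\mathrm{Sym}(T_a),
\ee
with $\mathrm{Sym}(T_a)$ permuting $T_a$ and fixing everything outside it; note $\cyc(\pi c)=\cyc(c\pi)$, so the ordering is immaterial.

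The key observation is that irreducibility makes $\pi c$ fixed-point-free for every $\pi\in G_{\vs}$: $\pi c(i)=i$ would force $\pi(i+1)=i$, hence $s_{i+1}=s_i$. Hence every term has $\cyc(\pi c)\le\lfloor m/2\rfloor$, and the numerator equals $c_1(\vs)\,d^{\lfloor m/2\rfloor}$ plus strictly lower powers of $d$, where $c_1(\vs)$ counts the $\pi$ for which $\pi c$ is a fixed-point-free involution. I would bound $\#\{\pi\in G_{\vs}:\cyc(\pi c)\le\lfloor m/2\rfloor-t\}\le|G_{\vs}|\le\prod_a\mu_a!$, sum the resulting geometric series in $1/d$, and combine with $d^m\le\prod_a d(d+1)\cdots(d+\mu_a-1)\le d^m e^{m^2/2d}$, to obtain $E_d[\vs]\le c_1(\vs)\,d^{-\lfloor m/2\rfloor}(1+\text{small})$ for all irreducible $\vs$, and a matching one-sided lower bound in the case all $\mu_a=2$.

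Now I would sum. An irreducible string is a colour pattern — a set partition of $\{1,\dots,m\}$ into blocks of size $\ge2$, no block containing two cyclically adjacent positions — together with an injection of its $q$ blocks into $[p]$, of which there are $(p)_q\le p^q=x^q d^{kq}$. Raising the per-string bound to the $k$th power, patterns with $q$ blocks contribute at most $(\#\text{patterns})\cdot(\max c_1)^k\cdot x^q d^{kq}\cdot d^{-k\lfloor m/2\rfloor}\cdot(1+\text{small})$; since each block has size $\ge2$ we have $q\le\lfloor m/2\rfloor$, so $d^{k(q-\lfloor m/2\rfloor)}\le1$ and, when $q<\lfloor m/2\rfloor$, $\le d^{-k}$. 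Thus those patterns are negligible once $m$ is small enough — in particular all of them when $m$ is odd, which is why a matching lower bound is claimed only for even $m$. The surviving contribution is from the perfect-matching patterns ($q=m/2$, all $\mu_a=2$), equal to $x^{m/2}$ times a combinatorial factor built from the $c_1(P)$ times $1+O(\mathrm{poly}(m)/d)+O(\mathrm{poly}(m)/p)$; folding the error terms into the stated prefactor — the $m^{2k+12}/p$ piece absorbing both the $(p)_q$-versus-$p^q$ defect and the pattern count, the $d^{-k/3}$ and $d^{-1/3}$ pieces absorbing the $\prod(d+j)$-versus-$d^m$ corrections and the near-maximal-$\pi$ sum — gives \eq{irred-sum}. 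For the even-$m$ lower bound I would keep only the single ``nested'' matching pattern with blocks $\{j,m-j\}$ for $1\le j<m/2$ and $\{m/2,m\}$ — the strings $1\,2\cdots(m/2-1)\,(m/2)\,(m/2-1)\cdots 1\,(m/2)$ highlighted in the introduction — verify $c_1\ge1$ for it by exhibiting one factorisation, and use $(p)_{m/2}\ge p^{m/2}e^{-m^2/2p}$ to read off the claimed $\ge x^{m/2}e^{-m^2/2p}$.

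The main obstacle is the uniform control in the last step: bounding, simultaneously over all patterns, how fast the subleading powers of $d$ in the numerator can grow (equivalently, how many $\pi\in G_{\vs}$ come within $t$ transpositions of producing a fixed-point-free involution), and bounding appropriately the number of admissible matching patterns together with the corresponding $c_1(P)^k$. This is where I expect to need the Cayley-distance (genus) inequality $\cyc(\sigma)+\cyc(\tau)\le m+\cyc(\sigma\tau)$ and a careful enumeration of factorisations of the $m$-cycle $c$ as an element of $G_{\vs}$ times a fixed-point-free involution; keeping every estimate polynomial in $m$ — so the lemma can later be fed into \thm{trace} with $m$ growing like a small power of $d$ — rather than merely finite is the delicate part.
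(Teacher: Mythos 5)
Your opening is the same as the paper's: expand $E_d[\vs]$ via $\bbE[\varphi^{\ot n}]=\sum_\pi P_\pi/d(d{+}1)\cdots(d{+}n{-}1)$ into $\sum_{\pi\in\cS_\vs} d^{\cyc(C_m\pi)-m}$ (the paper's \eq{Eds-perm}), observe that irreducibility forbids fixed points of $C_m\pi$ so $\cyc(C_m\pi)\le m/2$, identify the extremal contribution with perfect-matching patterns, and get the even-$m$ lower bound from the ``doubly-linked loop'' strings. All of that is correct and matches \secref{irred-strings}.

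The gap is exactly the step you flag as ``the main obstacle'' and then do not carry out, and the mechanism you sketch for it does not work. You propose to control the subleading powers of $d$ by bounding $\#\{\pi:\cyc(\pi c)\le\lfloor m/2\rfloor-t\}\le|G_\vs|=\prod_a\mu_a!$ and summing a geometric series in $1/d$, concluding $E_d[\vs]\le c_1(\vs)d^{-\lfloor m/2\rfloor}(1+\text{small})$ for \emph{all} irreducible $\vs$. This fails in three ways. (i) For odd $m$ there is no fixed-point-free involution, so $c_1(\vs)=0$ and the claimed bound is false outright; the correction must be additive with a prefactor like $|G_\vs|$, not multiplicative in $c_1$. (ii) $|G_\vs|=\prod_a\mu_a!$ is superexponential in $m$ when some letter has large multiplicity (up to roughly $(m-2\ell+2)!$), whereas each lost cycle buys only one factor of $1/d$; since the lemma must hold for $m$ as large as a small power of $d$ (that is how it is used in \thm{trace}), $|G_\vs|/d$ is not small and the geometric series does not converge to $1+o(1)$. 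The suppression for high-multiplicity strings has to come from powers of $1/p$, not $1/d$: a string with only $\ell<m/2$ distinct letters carries $p^\ell$ rather than $p^{m/2}$, and the paper trades $\mu_1!\cdots\mu_\ell!\le(m-2\ell+2)!\,2^{\ell-1}\le m^{m-2\ell}2^{m/2}$ against $(m^{2k+12}/p)^{m/2-\ell}$, which is why the hypothesis $p>5000\,m^{2k+12}$ appears. (iii) The single inequality $\cyc(\pi c)\le\lfloor m/2\rfloor$ is too coarse: the paper needs the finer bound $\cyc(C_m\pi)\le(m+c_2(\pi))/3$ (cycles that are not $2$-cycles have length $\ge3$), tracks the three parameters $\ell$, $\hat\mu_2$, $c_2^{\max}$ simultaneously, and proves a graph-counting bound of $\ell!\,(2m)^{3m-2c_2^{\max}-4\hat\mu_2}$ strings per parameter triple, so that each unit of deficit from the extremal case costs only a factor polynomial in $m$ times $1/p$ or $d^{-k/3}$. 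None of this is supplied or replaceable by the Cayley-distance inequality alone. A smaller point: for the lower bound, exhibiting one $\pi$ with $\cyc(\pi c)=m/2$ only gives $E_d[\vs]\ge(d+1)^{-m/2}$ because of the $\prod_a d(d+1)\cdots$ normalization, which after raising to the $k$-th power loses a factor $e^{-km/2d}$ relative to the stated $x^{m/2}e^{-m^2/2p}$; the paper avoids this by evaluating the full sum over open/closed configurations exactly, which yields $E_d[\vs]\ge d^{-m/2}$ on the nose.
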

The proof is in \secref{irred-strings}.  Observe that when
$m\in o(d^{k/6})\cap o(p^{\frac{1}{2k+12}})$ and $m$ is even, we bound the 
sum 
on the LHS of \eq{irred-sum} by
$(1\pm o(1))x^{m/2}$.  We also mention that there is no factor of $1/d^k$ on the LHS, so that when $x=O(1)$ and $m$ satisfies the above condition, the contribution from irreducible strings is a $O(1/d^k)$ fraction of the contribution from completely reducible strings.

Next, we combine the above two Lemmas to bound all strings that are not covered by \lemref{complete-reduce}.
\begin{lemma}\label{lem:mixed-strings}
If $m<\min(d^{k/6}/2^{1+k/2},(p/5000)^{\frac{1}{2k+12}})$ then 
\be \sum_{\substack{\vs\in[p]^m \\ R(\vs)\neq \emptyset}} E_d[\vs]^k
\leq 
\frac{e^{\frac{km}{2d^{1/3}}}}
{\l(1 - \frac{5000m^{2k+12}}{p}\r)
\l(1-\frac{2^{2+k}m^2}{d^{\frac{k}{3}}}\r)}
m \lambda_+^{m+\frac{1}{2}}.
\label{eq:mixed-sum}\ee
\end{lemma}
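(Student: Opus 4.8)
The plan is to combine \lemref{irred-strings} and \lemref{complete-reduce} by decomposing each non-completely-reducible string into its irreducible core together with the material that the reductions strip away, and organizing the resulting sum by a generating function. Fix $\vs\in[p]^m$ with $\vsi:=R(\vs)\ne\emptyset$ and set $m'=|\vsi|$; since the shortest irreducible strings have length $4$, necessarily $4\le m'\le m$. Running the reductions backwards exhibits $\vs$ as the cyclic word $\vsi=\sigma_1\cdots\sigma_{m'}$ in which each letter $\sigma_i$ has been expanded into a block $B_i$ --- a word over $\{\sigma_i\}\cup(\text{fresh letters})$ that is itself reducible to the single letter $\sigma_i$, i.e.\ of the form $\sigma_i C_1\sigma_i C_2\cdots\sigma_i C_r$ with $r\ge1$ and $C_1,\dots,C_r$ completely reducible words on fresh letters. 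Because every fresh letter is eventually deleted by the ``remove unique letter'' rule, $E_d[\vs]=E_d[\vsi]\cdot d^{-u}$ with $u$ the total number of fresh letters, and $u$ together with the block lengths factorizes over the $m'$ blocks. Fixing a convention for charging boundary decorations to one of the two adjacent blocks makes $(\vsi,B_1,\dots,B_{m'})\mapsto\vs$ a bijection onto $\{\vs:R(\vs)=\vsi\}$; choosing such a convention is essential --- the naive ``drop a completely reducible blob into any gap'' recipe over-counts, and the resulting over-bound is too weak to prove the lemma.

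\textbf{Steps.} With this in hand, the decorations of one block are packaged by $\Psi(y)=\sum_{r\ge1}y^{r-1}F(x,y)^r=\frac{F(x,y)}{1-yF(x,y)}$, where $y$ tracks the extra length $|B_i|-1$, the $r$ copies of $\sigma_i$ give $y^{r-1}$, and the $r$ fresh completely reducible words $C_j$ give $F(x,y)^r$ with $F(x,y)=\sum_{j\ge0}\beta_j(x)y^j$ by the Narayana enumeration of \lemref{complete-reduce} (after relaxing $(p)_\ell\le p^\ell$ and dropping the disjointness of fresh letters, which only increases the count). Hence
\be
\sum_{\substack{\vs\in[p]^m\\ R(\vs)\ne\emptyset}}E_d[\vs]^k
\;\le\;\sum_{m'\ge4}\Bigl(\sum_{\substack{\vsi\in[p]^{m'}\\ R(\vsi)=\vsi}}E_d[\vsi]^k\Bigr)\,[y^{m-m'}]\,\Psi(y)^{m'}.
\ee
Into the inner bracket I feed \lemref{irred-strings}: every $m'\le m$ satisfies its hypothesis, and since its prefactor $C(m):=\frac{e^{km/2d^{1/3}}}{(1-5000m^{2k+12}/p)(1-2^{2+k}m^2/d^{k/3})}$ is increasing in $m$, the bracket is $\le C(m)\,x^{m'/2}$. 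Pulling $C(m)$ out leaves $[y^m]\sum_{m'\ge4}(\sqrt{x}\,y\,\Psi(y))^{m'}=[y^m]\frac{(\sqrt{x}\,y\,\Psi(y))^4}{1-\sqrt{x}\,y\,\Psi(y)}$. To bound this coefficient, recall (the algebra behind \lemref{beta-comp}, and the rainbow fixed point of \secref{matt}) that $F(x,y)$ is analytic on $|y|<1/\lambda_+$ with $F(x,1/\lambda_+)=1+\sqrt{x}$; substituting, $\sqrt{x}\,y\,\Psi(y)=\frac{\sqrt{x}\,yF(x,y)}{1-yF(x,y)}$ increases from $0$ at $y=0$ to exactly $1$ as $y\to1/\lambda_+$, so the displayed function is analytic on the disk of radius $1/\lambda_+$ with an inverse-square-root singularity on its boundary. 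A Cauchy estimate on the circle of radius $(1-1/m)/\lambda_+$ (legitimate since all coefficients are nonnegative) then gives $[y^m](\cdots)=O(\sqrt{m}\,\lambda_+^m)\le m\,\lambda_+^{m+1/2}$, and multiplying by $C(m)$ yields \eq{mixed-sum}.

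\textbf{Main obstacle.} The real work is the first step: setting up the core-plus-blocks decomposition correctly. The point to get right is that a core letter may legitimately recur inside a block --- for instance $1\,3\,1\,4\,2\,1\,2$ reduces to $1\,2\,1\,2$, with the prefix $1\,3\,1\,4$ collapsing to the single core letter $1$ once the unique letters $3,4$ are deleted --- so a block is not merely ``a gap filled with a fresh completely reducible string,'' and the attribution convention must be pinned down so that the induced map onto strings is a bijection rather than an over-count; otherwise the generating function acquires a pole strictly inside $|y|<1/\lambda_+$ and the bound exceeds $\lambda_+^m$. One must also reconcile the cyclic (trace) structure with the block decomposition and verify that the per-block count is exactly $F(x,y)$. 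Once the combinatorics is locked down the analytic tail is routine; alternatively, one could bypass the bijection by peeling letters one at a time in the spirit of \secref{andris}.
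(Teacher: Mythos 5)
Your overall route is the paper's: split each $\vs$ with $R(\vs)\neq\emptyset$ into its irreducible core (bounded via \lemref{irred-strings}), repeated core letters (which cost nothing), and completely reducible filler (the generating function $F(x,y)$ from \lemref{complete-reduce}), then extract the coefficient of $y^m$ at or near $y=1/\lambda_+$. Two points, however, are genuine gaps rather than routine details. First, your block decomposition $\vs=B_1\cdots B_{m'}$ with $B_i=\sigma_i C_1\sigma_i C_2\cdots\sigma_i C_r$ forces the linear string to begin with the surviving occurrence of $\sigma_1$; but the sum in \eq{mixed-sum} is over linear strings, and e.g.\ $\vs=(3,1,2,1,2,3)$ has $R(\vs)=(1,2,1,2)$ yet begins with a deleted letter, so it lies outside the image of your map and your displayed inequality under-counts, i.e.\ is not an upper bound as written. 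You flag ``reconcile the cyclic structure'' as the main obstacle but do not resolve it, and the obvious patch (summing over starting positions) costs an extra factor of order $m$, overshooting the stated bound. The paper avoids the issue by not grouping into per-letter blocks at all: it classifies each position as A (survives), B (deleted repeat of a core letter) or C (completely reducible filler) and models the linear string as $C^*\bigl((A|B)\,C^*\bigr)^n$ with $1\leq n\leq m$, which covers every linear arrangement, including an initial filler run and repeats that precede their surviving occurrence. Surjectivity (indeed over-counting) is all an upper bound needs, so your insistence on an exact bijection is a red herring; what matters is only that the per-pattern weight dominates the contribution and that the generating function stays polynomial in $m$ at $y_0$.

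Second, the final step. Your Cauchy estimate on the circle of radius $(1-1/m)/\lambda_+$ gives $O(\sqrt{m}\,\lambda_+^m)$ with a constant depending on $x$ through the square-root expansion of $F$ at $y=1/\lambda_+$, and the asserted inequality $O(\sqrt{m}\,\lambda_+^m)\leq m\lambda_+^{m+\frac{1}{2}}$ is not justified uniformly in $x$ and $m$. The paper's device is simpler and exact: since only $[y^m]$ is wanted and each step contributes at least one power of $y$, the sum over the number of steps may be truncated at $n\leq m$ without changing that coefficient; the truncated series is then finite at $y_0=1/\lambda_+$ (each step factor equals exactly $1$ there, so $G_m(x,y_0)=m(1+\sqrt{x})$), and nonnegativity of the coefficients gives $[y^m]G\leq G_m(x,y_0)/y_0^m=m\lambda_+^{m+\frac{1}{2}}$. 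The same truncation rescues your version verbatim, since your per-block factor $\sqrt{x}\,y\,\Psi(y)$ also equals $1$ at $y_0$ and $\sum_{m'=4}^{m}1\leq m$; I would replace the saddle-point estimate with it.
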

The proof is in \secref{all-strings}.

To simplify the prefactor in \eq{mixed-sum}, we assume that 
 $m<\min(2d^{1/3}/k, d^{k/6}/2^{2+k/2},(p/5000)^{\frac{1}{2k+12}}/2)$, so that the RHS of \eq{mixed-sum} becomes simply $\leq 12 m\lambda_+^{m+\frac{1}{2}}$.  By \lemref{beta-comp}, this is $\leq \frac{24m^3\lambda_+^2}{x} \beta_m(x)      $.
    Then we combine \lemref{complete-reduce} and \lemref{mixed-strings} to obtain the bound
\be e_{p,d,k}^m \leq 
\l(1 + \frac{24 m^3\lambda_+^2 }{p}\r)\beta_m(x)
\ee
which is a variant of the upper-bound in \thm{trace}.  It is tighter than \eq{main-e-bound}, but holds for a more restricted set of $m$.  If we express the upper bound in terms of $\lambda_+^m$ then we can skip \lemref{complete-reduce} and obtain simply
\be
e_{p,d,k}^m \leq \l(1 + \frac{12 m \sqrt{\lambda_+}}{d^k}\r) \lambda_+^m.
\label{eq:tighter-ub}\ee

\subsection{Completely reducible strings}
\label{sec:complete-reduce}

We begin by reviewing some facts about Narayana numbers
from \cite{Nara1, Nara2}.
The Narayana number
\be N(m,\ell)=\frac{1}{m}\binom{m}{\ell-1}\binom{m}{\ell} =
\frac{1}{\ell}\binom{m}{\ell-1}\binom{m-1}{\ell-1}
\label{eq:Nara-formula}\ee
counts the number of valid bracketings of $m$ pairs of parentheses in
which the sequence {\em ()} appears $\ell$ times.  A straightforward
combinatorial proof of \eq{Nara-formula} is in
\cite{Nara2}.  When we sum \eq{Nara-formula} over
$\ell$ (e.g. if we set $x=1$ in \eq{Nara-sum}) then we obtain the
familiar Catalan numbers $\frac{1}{m+1}\binom{2m}{m}$.


We can now prove \lemref{complete-reduce}.  The combinatorial
techniques behind the Lemma have been observed before\cite{Nara1,
  Nara2}, and have been applied to the Wishart distribution in
\cite{another-moment, forrester}.

\mn{\em Proof:}
For a string $\vs$ such that $R(\vs)=\emptyset$, let $\ell$ be the
number of distinct letters in $\vs$.  In the process of reducing $\vs$
to the empty string we will ultimately remove $\ell$ unique letters,
so that $E_d[\vs]^k=d^{k(1-\ell)}$.  It remains now only to count the
number of different $\vs$ that sastify
$R(\vs)=\emptyset$ and have $\ell$ distinct letters.

Suppose the distinct letters in $\vs$ are $S_1, S_2, \ldots, S_\ell
\in [p]$.  We order them so that the first occurrence of $S_i$ is
earlier than the first occurrence of $S_{i+1}$ for each $i$.  Let
$\vsi$ be the string obtained from $\vs$ by replacing each instance of
$S_i$ with $i$.  Then $\vsi$ has the first occurrences of
$1,2,\ldots,\ell$ appearing in increasing order and still satisfies
$R(\vsi)=\emptyset$ and $E_d[\vsi]^k=d^{k(1-\ell)}$.  Also, for each
$\vsi$, there are $p!/(p-\ell)! 
\leq p^\ell$ corresponding $\vs$.

It remains only to count the number of distinct $\vsi$ for a given
choice of $m$ and $\ell$.  We claim that this number is given by
$N(m,\ell)$.  Given $\vsi$, define $a_i$ to be the location of the
first occurrence of the letter $i$ for $i=1,\ldots,\ell$.  Observe
that 
\be 1 = a_1 < a_2 < \cdots < a_\ell \leq m.
\label{eq:a-constraint}\ee  Next, define $\mu_i$
to be the total number of occurrences of $i$ in $\vsi$, and define
$b_i = \sum_{j=1}^i \mu_j$ for $i=1,\ldots,\ell$.  Then 
\be 1\leq b_1 < b_2 < \cdots < b_\ell = m 
\label{eq:b-constraint}\ee
Finally, we have 
\be a_i \leq b_i \qquad\text{for each } i=1,\ldots,\ell.
\label{eq:ab-constraint}\ee

Ref.~\cite{Nara1} proved that the number of
$(a_1,b_1),\ldots,(a_\ell,b_\ell)$ satisfying \eq{a-constraint},
\eq{b-constraint} and \eq{ab-constraint} is $N(m,\ell)$.  Thus, we
need only prove that $\vsi$ is uniquely determined by
$(a_1,b_1),\ldots,(a_\ell,b_\ell)$.  The algorithm for finding $\vsi$
is as follows.
\begin{tabbing}
For $t=1,\ldots,m$. \= \\
\> If $t=a_i$ then set $s:=i$.\\
\> Set $\sigma_t := s$. \\
\> Set $\mu_s := \mu_s - 1$.\\
\> While ($\mu_s = 0$) set $s := s-1$.\\
\end{tabbing}
In other words, we start by placing 1's until we reach $a_2$.  Then we
start placing 2's until we've either placed $\mu_2$ 2's, in which case
we go back to placing 1's; or we've
reached $a_3$, in which we case we start placing 3's.  The general
rule is that we keep placing the same letter until we either encounter
the next $a_i$ or we run out of the letter we were using, in which
case we go back to the last letter we placed.

To show that $\vsi$ couldn't be constructed in any other way, first
note that we have $\sigma_{a_i}=i$ for each $i$ by definition.  Now
fix an $i$ and 
examine the interval between $a_i$ and $a_{i+1}$.  Since it is before
$a_{i+1}$, it must contain only letters in $\{1,\ldots,i\}$.  Using
the fact that $R(\vsi)=\emptyset$, we know that $\vsi$ cannot contain
the subsequence $j$-$i$-$j$-$i$ (i.e. cannot be of the form $\cdots j
\cdots i \cdots j \cdots i$).  
We now consider two cases.  

Case (1) is
that $\mu_i \geq a_{i+1}-a_i$.  In this case we must have $\sigma_t=i$
whenever $a_i<t<a_{i+1}$.  Otherwise, this would mean that some
$s\in\{1,\ldots,i-1\}$ appears in this interval, and since $s$ must
have appeared earlier as well ($s<i$ so $a_s<a_i$ and
$\sigma_{a_s}=s$), then no $i$'s can appear later in the string.
However, this contradicts the fact that $\mu_i \geq a_{i+1}-a_i$.
Thus if $\mu_i \geq a_{i+1}-a_i$ then the entire interval between
$a_i$ and $a_{i+1}$ must contain $i$'s.

Case (2) is that $\mu_i < a_{i+1}-a_i$.  This means that there exists
$t$ with $a_i<t<a_{i+1}$ and $\sigma_t\in\{1,\ldots,i-1\}$; if there is
more than one then take $t$ to be the lowest (i.e. earliest).  Note
that $\sigma_{t'}\neq i$ for all $t'>t$; otherwise we would have a
$\sigma_t$-$i$-$\sigma_t$-$i$ subsequence. Also, by definition
$\sigma_{t'}=i$ for $a_i\leq t' < t$.  Since this is the only place
where $i$ appears in the string, we must have $t=a_i + \mu_i$.  Once
we have placed all of the $i$'s, we can proceed inductively to fill
the rest of the interval with letters from $\{1,\ldots,i-1\}$.

In both cases, $\vsi$ is uniquely determined by $a_1,\ldots,a_\ell$
and $b_1,\ldots,b_\ell$ (or equivalently, $\mu_1,\ldots,\mu_\ell$).
This completes the proof of the equality in \eq{Nara-sum}.
\qed

Before continuing, we will mention some facts about Narayana numbers
that will later be useful.
Like the Catalan numbers, the Narayana
numbers have a simple generating function; however, since they have
two parameters the generating function has two variables.  If we
define
\be F(x,y) = \sum_{0\leq \ell\leq m< \infty} N(m,\ell) x^\ell y^m,
\label{eq:Nara-gf-def}\ee
then one can show\cite{Nara1,Nara2} (but note that \cite{Nara2} takes the sum over $m
\geq 1$) that
\be F(x,y) = \frac{ 1 + (1-x)y - \sqrt{1- 2(1+x)y + (1-x)^2y^2}}{2y}
\label{eq:Nara-gf}.\ee
We include a proof for convenience.  First, by convention $N(0,0)=1$.
Next, an arrangement of $m$ pairs of parentheses can start either with
{\em ()} or {\em ((}.  Starting with {\em ()} leaves $N(m-1,\ell-1)$ ways to
complete the string.  If the string starts with {\em ((} then suppose the {\em )} paired with 
the first {\em (} is the $i^{\text{th}}$ {\em )} in the string.  We
know that 
$2\leq i\leq m$ and that the first $2i$ characters must contain
exactly $i$
{\em (}'s and $i$ {\em )}'s.  Additionally, the $2i-1^{\text{st}}$ and $2i^{\text{th}}$ characters 
must both be {\em )}'s.
Let $j$ be the number of appearances of {\em ()} amongst these first $2i$ characters.  
Note that $j\leq \min(i-1,\ell)$, and that {\em ()} appears $\ell-i$ times in the last $2m-2i$ 
characters.  Thus there are 
$$\sum_{i=2}^m \sum_{j=1}^{\min(i-1,\ell)} N(i-1,j) N(m-i, \ell-j) = 
-N(m-1,\ell) +\sum_{i=1}^m \sum_{j=1}^{\min(i-1,\ell)} N(i-1,j) N(m-i, \ell-j)
$$
ways to complete a string starting with {\em ((}.
Together, these imply that
\be N(m,\ell) = N(m-1,\ell-1) - N(m-1,\ell) + 
\sum_{i=1}^m \sum_{j=1}^{\min(i-1,\ell)} N(i-1,j) N(m-i, \ell-j)
\label{eq:Nara-recur},\ee
which we can state equivalently as an identify for the generating
function \eq{Nara-gf-def}:
\be F = 1 + xy F + y (F^2 - F),
\label{eq:Nara-gf-recur}\ee
which has the solution \eq{Nara-gf}.  (The sign in front of the square
root can be  
established from $1 = N(0,0) = F(x,0)$.) 

{\em Connection to \secref{matt}:}
Observe that \eq{Nara-gf} matches \eq{rainbow-fp} once we make the
substitution $y=z^{-1}$.   Indeed it can be shown that rainbow diagrams have a one-to-one correspondence with valid arrangements of parentheses, and thus can be enumerated by the 
Narayana numbers in the same way.

{\em Connection to free probability:} Another set counted by the Narayana numbers is the   
set of noncrossing partitions of $[m]$ into $\ell$ parts.  The
non-crossing condition means that we never have $a<b<c<d$ with $a,c$
in one part of the partition and $b,d$ in another; it is directly
analogous to the property that $\vsi$ contains no subsequence of the
form $j$-$i$-$j$-$i$.

To appreciate the significance of this, we return to the classical
problem of throwing $p$ balls into $d$ bins.  The occupancy of a
single bin is $z = z_1+\ldots+z_p$ where $z_1,\ldots,z_p$ are
i.i.d. and have $\Pr[z_i=0] = 1-1/d$, $\Pr[z_i=1]=1/d$. 
One can readily verify that
$$\bbE[z^m] = \sum_{\ell=1}^m |\Par(m,\ell)| 
\frac{(p)_\ell}{d^\ell},$$
where $\Par(m,\ell)$ is the set of (unrestricted) partitions of $m$
into $\ell$ parts.

This is an example of a more general phenomenon in which convolution
of classical random variables involves partitions the same way that
convolution of free random variables involves non-crossing
partitions.  See Ref.~\cite{Speicher-NCPart} for more details.

\subsection{Irreducible strings}
\label{sec:irred-strings}

As with the completely reducible strings, we will break up the sum
based on the powers of $p$ and $d$ which appear.  However, while in
the last section $p$ and $1/d^k$ both depended on the single parameter
$\ell$, here we will find that some terms are smaller by powers of
$1/p$ and/or $1/d$.  Our strategy will be to identify three
parameters---$\ell$, $c_2$, and $\hat{\mu}_2$, all defined below---for
which the leading contribution occurs when all three equal $m/2$.  We
show that this contribution is proportional to $\sqrt{x}^m$ and that
all other values of $\ell$, $c_2$, and $\hat{\mu}_2$ make negligible
contributions whenever $m$ is sufficiently small.

Again, we will let $\ell$ denote the number of unique letters in
$\vs$.   We will also let $S_1,\ldots,S_\ell\in [p]$ denote these unique
letters.  However, we choose them so that $S_1<S_2<\cdots<S_\ell$,
which can be done in
\be\binom{p}{\ell}\leq \frac{p^\ell}{\ell!}
\label{eq:irred-p-factor}\ee
ways.  Again, we let $\vsi\in[\ell]^m$ be the string that results from
replacing all the instances of $S_i$ in $\vs$ with $i$.  However,
because of our different choice of $S_1,\ldots,S_\ell$, we no longer
guarantee anything about the ordering of $1,\ldots,\ell$ in $\vsi$.

We will also take $\mu_a$ be the frequency of $a$ in $\vsi$ for each
$a=1,\ldots,\ell$.  We also define $\hat{\mu}_b$ to be the number of
$a$ such that $\mu_a=b$.  Observe that 
\ba \ell &= \sum_b \hat{\mu}_b 
\label{eq:l-partition} \\
 m &= \sum_{a=1}^\ell \mu_a = \sum_b b \hat{\mu}_b.
\label{eq:m-partition}\ea
Also recall that since $R(\vsi)=\vsi$, $\vsi$ has no
repeats or unique letters.  Thus $\mu_a\geq 2$ for each $a$, or
equivalently $\hat{\mu}_1=0$.  This also implies that $\ell\leq m/2$.
Since \eq{irred-p-factor} is maximised when $\ell= \lfloor
\frac{m}{2}\rfloor$, we will focus on this case first and then show that
other values of $\ell$ have smaller contributions.  Moreover
\eq{m-partition} implies that
$\hat{\mu}_2\leq \ell$ and \eq{l-partition}, \eq{m-partition} and the
fact that $\hat{\mu}_1=0$ imply that $\hat{\mu}_2\geq
3\ell-m$.  Together we have
\be 3\ell-m \leq \hat{\mu}_2 \leq \ell.
\label{eq:mu2-bounds}\ee
  Thus  $\ell$ is close to $m/2$ if and only if $\hat{\mu}_2$
is as well.  This will be useful because strings will be easier to
analyze when almost all letters occur exactly twice.

We now turn to the estimation of $E_d[\vsi]$.
To analyze $E_d[\vsi] = \bbE [\tr
\varphi_{\sigma_1}\varphi_{\sigma_2}\cdots\varphi_{\sigma_m}]$, we 
first introduce the cyclic shift operator
$$C_m = \sum_{i_1,\ldots,i_m\in[\D]}
\ket{i_1,\ldots,i_m}\bra{i_2,\ldots,i_m,i_1}.$$
Then we use the identity
\be \tr [\varphi_{\sigma_1}\varphi_{\sigma_2}\cdots\varphi_{\sigma_m}] = 
\tr[ C_m (\varphi_{\sigma_1}\otimes \varphi_{\sigma_2} \ot
\cdots\otimes \varphi_{\sigma_m})].
\label{eq:cyclic-trace}\ee
Next, we take the expectation.  It is a well-known consequence of
Schur-Weyl duality (see e.g. Lemma 1.7 of \cite{matthias}) that
\be \bbE[\varphi^{\otimes t}] = \frac{\sum_{\pi\in\cS_t}
  \pi}{d(d+1)\cdots(d+t-1)}.
\label{eq:Schur-average}\ee
We will apply this to \eq{cyclic-trace} by inserting
\eq{Schur-average} in the appropriate locations as given by $\vsi$.
Let $\cS_\vsi:=\{\pi\in \cS_m : \sigma_i = \sigma_{\pi(i)} \forall i\in [m]\}$ be
the set of permutations that leaves $\vsi$ (or equivalently $\vs$)
invariant.  Then $|\cS_\vsi| = \mu_1!\cdots\mu_\ell!$ and
\begin{subequations}\label{eq:Eds-perm}
\ba
E_d[\vsi] &= \bbE[\tr C_m(\varphi_{\sigma_1}\otimes \varphi_{\sigma_2} \ot
\cdots\otimes \varphi_{\sigma_m})] \\
& = \tr C_m \frac{\sum_{\pi\in\cS_\vsi}\pi}{\prod_{i=1}^\ell
  d(d+1)\cdots(d+\mu_i-1)}\\
& \leq \tr C_m \frac{\sum_{\pi\in\cS_\vsi} \pi}{\prod_{i=1}^\ell d^{\mu_i}}\\
& = \frac{\sum_{\pi\in\cS_\vsi}\tr C_m \pi}{d^m}\\
& = \sum_{\pi\in\cS_\vsi} d^{\cyc(C_m\pi)-m}.
\ea \end{subequations}
This last equality follows from the fact that for any permutation
$\nu$ acting on $(\bbC^d)^{\ot m}$, we have that 
\be \tr \nu = d^{\cyc(\nu)},\label{eq:perm-trace}\ee
 where $\cyc(\nu)$ is the number of cycles of $\nu$.
(Eq.~(\ref{eq:perm-trace}) can be proven by first considering the case when
$\cyc(\nu)=1$ 
and then decomposing a general permutation into a tensor product of
cyclic permutations.)

To study $\cyc(C_m\pi)$, we introduce a graphical notation for
strings. For any string $\vsi$, define the {\em letter graph} $G$ to
be a directed graph with $\ell$ vertices such that for
$i=1,\ldots,\ell$, vertex $i$ has in-degree and out-degree both equal
to $\mu_i$.  (For brevity, we will simply say that $i$ has degree
$\mu_i$.) Thus there are a total of $m$ edges.  The edges leaving and
entering vertex $i$ will also be ordered.  To construct the edges in
$G$, we add an edge from $s_i$ to $s_{i+1}$ for $i=1,\ldots,m$, with
$s_{m+1}:=s_1$.  The ordering on these edges is given by the order we
add them in.  That is, if letter $a$ appears in positions
$i_1,i_2,\ldots$ with $i_1<i_2<\cdots$, then the first edge out of $a$
is directed at $s_{i_1+1}$, the second out-edge points at $s_{i_2+1}$,
and so on.  Likewise, $a$'s incoming edges (in order) come from
$s_{i_1-1}, s_{i_2-1}, \ldots$.

Now we think of the incoming and outgoing edges of a
vertex as linked, so that if we enter on the $j^{\text{th}}$ incoming
edge of a vertex, we also exit on the $j^{\text{th}}$ outgoing edge.
This immediately specifies a cycle through some or all of $G$.  If we
use the ordering specified in the last paragraph then the cycle is in
fact an Eulerian cycle (i.e. visits each edge exactly once) that
visits the vertices in the order $s_1,s_2,\ldots,s_m$.  Thus, from a
letter graph $G$ and a starting vertex we can 
reconstruct the string $\vsi$ that was used to generate $G$.

\begin{figure}
\begin{center}
\includegraphics{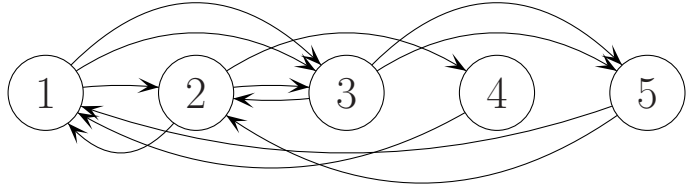}
\caption{The letter graph corresponding to the string 123241351352.
\label{fig:letter-graph}}
\end{center}
\end{figure}

The letter graph of $\vsi$ can also be used to give a cycle
decomposition of $C_m\pi$.  Any permutation $\pi\in\cS_{\vsi}$ can be
thought of as permuting the mapping between in-edges and out-edges for
each vertex.  The resulting number of edge-disjoint cycles is exactly
$\cyc(C_m\pi)$.  To see this, observe that $\pi$ maps $i_1$ to some
$i_2$ for which $\sigma_{i_1}=\sigma_{i_2}$ and then $C_m$ maps $i_2$
to $i_2+1$.  In $G$ these two steps simply correspond to following one of 
the
edges out of $i_1$.  Following the path (or the permutation) until it
repeats itself, we see that cycles in $G$ are equivalent to cycles in
$C_m\pi$.

We now use letter graphs to estimate \eq{Eds-perm}.
While methods for exactly enumerating cycle decompositions of
directed graphs do exist\cite{interlace},  for our purposes a crude upper bound will suffice.
Observe that because $\vsi$ contains no repeats, $G$ contains no
1-cycles.  Thus, the shortest cycles in $G$ (or equivalently, in
$C_m\pi$) have length 2.  Let $c_2(\pi)$ denote the number of 2-cycles
in $C_m\pi$ and $c_2^{\max} = \max_{\pi\in\cS_\vsi}c_2(\pi)$.
Sometimes we simply write $c_2$ instead of $c_2(\pi)$ when
 the argument is understood from context.
We now
observe that $c_2$ obeys bounds analogous to those in
\eq{mu2-bounds}.  In particular, $c_2^{\max} \leq \frac{m}{2}$, and for
any $\pi$,
\be \cyc(C_m\pi) \leq c_2(\pi) + \frac{m-2c_2(\pi)}{3} = 
\frac{m + c_2(\pi)}{3}
\label{eq:c2-bounds}.\ee

Since $c_2(\pi)\leq c_2^{\max} \leq m/2$, \eq{c2-bounds} implies that
$\cyc(C_m\pi)\leq m/2$.  Thus the smallest power of $1/d$ possible in
\eq{Eds-perm} is $\frac{m}{2}$.  When we combine this with
\eq{mu2-bounds}, we see that the leading-order contribution (in terms
of $p$ and $d$) is $O(x^{m/2})$, and that other terms are smaller by
powers of $1/p$ and/or $1/d$.  Additionally, this leading-order
contribution will have a particularly simple combinatorial factor.

{\em The leading-order term.}  Consider the case when $m$ is even and
$\ell=\hat{\mu}_2=c_2^{\max} = \frac{m}{2}$.  This corresponds to a graph
with $\ell$ vertices, each with in-degree and out-degree two.
Additionally, there is an ordering of the edges which organizes them into
$\ell$ 2-cycles.  Thus every vertex participates in exactly two
2-cycles.  Since the graph is connected, it must take the form of a
single doubly-linked loop.  Thus the letter graph of the leading-order
term is essentially unique.  See \fig{loop-graph} for an example when
$m=10$.  The only freedom here is the ordering of the vertices, which
can be performed in $\ell!$ ways.  Together with \eq{irred-p-factor},
this means the combinatorial contribution is simply
$\ell!\binom{p}{\ell}\leq p^\ell$. 

\begin{figure}
\begin{center}
\includegraphics{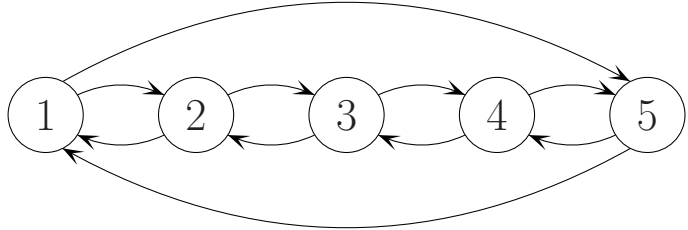}
\caption{Example of the letter graph for the case when $m=10$ and
$\ell=\hat{\mu}_2=c_2^{\max} = \frac{m}{2}$. The corresponding string is
1234543215. 
\label{fig:loop-graph}}
\end{center}
\end{figure}

Now we examine the sum in \eq{Eds-perm}.  Assume without loss
of generality that the
vertices $1,\ldots,\ell$ are connected in the cycle
$1-2-3-\cdots-\ell-1$.  Each vertex has two different configurations
corresponding to the two permutations in $\cS_2$.  In terms of the
letter graph these correspond to the two different ways that the two
incoming edges can be connected to the two outgoing edges.  Since
vertex $i$ has one edge both to and from each of $i\pm 1$, we can
either
\bit
\item[(a)] connect the incoming $i-1$ edge to the outgoing $i-1$ edge, and
  the  incoming $i+1$ edge to the outgoing $i+1$ edge (the closed
  configuration) ; or,
\item[(b)]  connect the incoming $i-1$ edge to the outgoing $i+1$ edge, and
  the  incoming $i+1$ edge to the outgoing $i-1$ edge (the open
  configuration).
\eit
These possibilities are depicted in \fig{rewiring}.

\begin{figure}
\begin{center}
\subfigure[closed configuration]
{\includegraphics[scale=0.8]{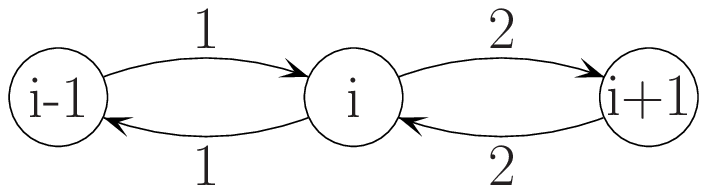}}
\hspace{2cm}
\subfigure[open configuration]
{\includegraphics[scale=0.8]{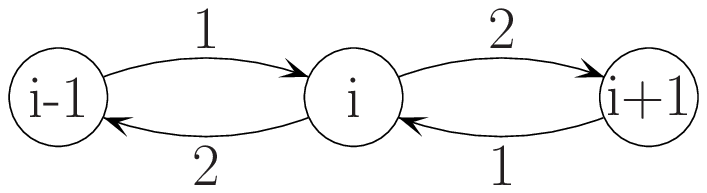}}
\end{center}
\caption{Vertex $i$ is connected to $i\pm 1$ by one edge in either
  direction.  These edges can be connected to each other in two ways,
  which are depicted in (a) and (b).  We call (a) a ``closed''
  configuration and (b) an ``open'' configuration.
\label{fig:rewiring}}
\end{figure}

Let $c$ denote the number of vertices in closed configurations.  These 
vertices
can be selected in $\binom{\ell}{c}$ ways.  If
$1\leq c\leq \ell$ then $c$ is also the number of cycles: to see this,
note that each closed configuration caps two cycles and each cycle
consists of a chain of open configurations that is capped by two closed
configurations on either end.  The exception is when $c=0$.  In this
case, there are two cycles, each passing through each vertex exactly
once.  Thus, the RHS of 
\eq{Eds-perm} evaluates (exactly) to
$$d^{2-m} + \sum_{c=1}^\ell \binom{\ell}{c} d^{c-m} =
d^{-\frac{m}{2}} \l[\l(1+\frac{1}{d}\r)^{\frac{m}{2}} +
d^{-\frac{m}{2}}(d^2 - 1)\r].$$
Combining everything, we find a contribution of
$x^{\frac{m}{2}}(1+o(1))$ as $d\ra\infty$.  In particular, when $m$ is
even this yields the lower bound claimed in \lemref{irred-strings}.
We now turn to the case 
when $c_2^{\max}$, $\ell$ and $\hat{\mu}_2$ are not all equal to $m/2$.

{\em The sum over all terms.}  Our method for handling arbitrary
values of $c_2^{\max}$, $\ell$ and $\hat{\mu}_2$ is to compare their
contribution with the leading-order term.  We find that if one of these
variables is decreased we gain combinatorial factors, but also need to
multiply by a power of $1/p$ or $1/d$.  The combinatorial factors will
turn out to be polynomial in $m$, so if $m$ is sufficiently small the
 contributions will be upper-bounded by a geometrically
decreasing series.  This process resembles (in spirit, if not in details) the process leading
to Eq.~(\ref{ratiobnd2}) in \secref{matt}.

Our strategy is to decompose the graph into a ``standard'' component
which resembles the leading-order terms and a ``non-standard''
component that can be organized arbitrarily.  The standard component
is defined to be the set of 2-cycles between degree-2 vertices.  When
$\ell=\hat{\mu}_2=c_2^{\max} = \frac{m}{2}$ the entire graph is in the
standard component, so when $\ell,\hat{\mu}_2,c_2^{\max} \approx
\frac{m}{2}$, the non-standard component should be small.  Thus, in
what follows, it will be helpful to keep in mind that the largest
contributions come from when
$\frac{m}{2}-\ell, \frac{m}{2}-\hat{\mu}_2, \frac{m}{2}-\cm$ are all
small, and so our analysis will focus on this case.

Begin by observing that there are $\ell-\hat{\mu}_2$ vertices with
degree greater than two.  Together these vertices have
$m-2\hat{\mu}_2$ in- and out-edges.  Thus, they (possibly together
with some of the degree-2 vertices) can participate in at most
$m-2\hat{\mu}_2$ 2-cycles.  Fix a permutation $\pi$ for which
$c_2(\pi) = c_2^{\max}$.  To account for all the 2-cycles, there must
be at least $c_2^{\max} 
- (m-2\hat{\mu}_2)$ 2-cycles between degree-2 vertices.  These
2-cycles amongst degree-2 vertices (the standard component) account
for $\geq 2c_2^{\max} - 2m + 4\hat{\mu_2}$ edges.  Thus the number of
non-standard edges entering and leaving the degree-2 vertices is $\leq
2\hat{\mu}_2 - (2c_2^{\max} - 2m + 4\hat{\mu_2}) = 2m-
2c_2^{\max}-2\hat{\mu}_2.$ Together we have $\leq 3m - 2c_2^{\max} -
4\hat{\mu_2}$ non-standard edges.

We now bound the number of ways to place the $m$ edges in $G$.  First,
we can order the degree-2 vertices in $\hat{\mu}_2!$ ways.  This
ordering will later be used to place the 2-cycles of the standard
component.  Next, we fix an arbitrary ordering for the
$\ell-\hat{\mu}_2$ vertices with degree larger than two.
We then place 
$$e_{\text{NS}} := 3m-2c_2^{\max} - 4\hat{\mu}_2$$
non-standard edges.  This can be done in $\leq m^{e_{\text{NS}}}$
ways.  One way to see this is that each non-standard edge has $m$
choices of destination, since we allow them to target specific
incoming edges of their destination vertex.  Call these destination
edges $\{I_1,\ldots,I_{e_{\text{NS}}}\}$.  These incoming edges
correspond to $e_{\text{NS}}$ outgoing edges, which we call
$\{O_1,\ldots,O_{e_{\text{NS}}}\}$, and which become the starting
points of the non-standard edges.  Without loss of generality we can
sort $\{O_1,\ldots,O_{e_{\text{NS}}}\}$ according to some canonical
ordering; let $\{O_1',\ldots,O_{e_{\text{NS}}}'\}$ be the sorted
version of the list.  Then we let $O_i'$ connect to $I_i$ for
$i=1,\ldots,e_{\text{NS}}$.  Since our ordering of
$\{I_1,\ldots,I_{e_{\text{NS}}}\}$ was arbitrary, this is enough to
specify any valid placement of the edges.  Additionally, our choices
of $\{I_1,\ldots,I_{e_\text{NS}}\}$ also determine the degrees
$\mu_1,\ldots,\mu_\ell$ since they account for all of the incoming
edges of the non-degree-2 vertices and out-degree equals in-degree.
  Note that nothing prevents
non-standard edges from being used to create 2-cycles between degree-2
vertices.  However we conservatively still consider such cycles to be
part of the non-standard component.

The remaining $m-e_{\text{NS}} = 2c_2^{\max} + 4\hat{\mu}_2 - 2m$ edges
(if this number is positive)
make up 2-cycles between degree-2 vertices, i.e. the standard
component.  Here we use the ordering of the degree-2 vertices.  After
the non-standard edges are placed, some degree-2 vertices will have
all of their edges filled, some will have one incoming and one
outgoing edge filled, and some will have none of their edges filled.
Our method of placing 2-cycles is simply to place them between
all pairs of neighbors (relative to our chosen ordering) whenever this
is possible.

We conclude that the total number of graphs is
\be \leq \hat{\mu}_2! m^{e_{\text{NS}}}
\leq\hat{\mu}_2! m^{3m-2c_2^{\max} - 4\hat{\mu}_2}
\leq\ell! m^{3m-2c_2^{\max} - 4\hat{\mu}_2}.
\label{eq:graph-count}\ee
In order to specify a string $\vsi$, we need to additionally choose a
starting edge.  However, if we start within the standard component,
the fact that we have already ordered the degree-2 vertices means that
this choice is already accounted for.  Thus we need only consider
\be e_{\text{NS}}+1 \leq 2^{e_{\text{NS}}} =
2^{3m-2c_2^{\max} - 4\hat{\mu}_2}
\label{eq:starting-positions}\ee
 initial edges, where we have used the fact that $1+a\leq 2^a$ for 
any integer $a$.
The total number of strings corresponding to given values of $\ell,
\hat{\mu}_2, \cm$ is then upper-bounded by the product of
\eq{starting-positions} and \eq{graph-count}:
\be \ell! (2m)^{3m-2c_2^{\max} - 4\hat{\mu}_2}.
\label{eq:string-count}\ee
Observe that this matches the combinatorial factor for the
leading-order term ($\cm=\hat{\mu}_2=\ell=m/2$) and then degrades
smoothly as $\cm,\hat{\mu}_2,\ell$ move away from $m/2$.

Finally, we need to evaluate the sum over permutations in
\eq{Eds-perm}.  Our choices for non-standard vertices are
substantially more complicated than the open or closed options we had
for the leading-order case.  Fortunately, it suffices to analyze only
whether each 2-cycle is present or absent.  Since a 2-cycle consists
of a pair of edges of the form $(i,j)$ and $(j,i)$, each such cycle
can independently be present or absent. Thus, while there are
$\mu_1!\cdots\mu_\ell!$ total elements of $\cS_\vsi$, we can break
the sum into $2^\cm$ different groups of
$(\mu_1!\cdots\mu_\ell!)/2^\cm$ permutations, each corresponding to a
different subset of present 2-cycles.  In other words, there are
exactly
$$\binom{\cm}{c} \frac{\mu_1!\cdots\mu_\ell!}{2^\cm}$$
choices of $\pi\in\cS_\vsi$ such that $c_2(\pi)=c$.  Using the fact
that $\cyc(C_m\pi)\leq (m+c_2(\pi))/3$, we have
$$E_d[\vsi] \leq \sum_{c=0}^{\cm}\binom{\cm}{c}
\frac{\mu_1!\cdots\mu_\ell!}{2^\cm}d^{\frac{m+c}{3}-m}
= \frac{\mu_1!\cdots\mu_\ell!}{2^\cm}d^{\frac{-2m+\cm}{3}}
\l(1 + d^{-\frac{1}{3}}\r)^{\cm}
$$
Finally, observe that $\mu_1!\cdots\mu_\ell!$ is a convex function of
$\mu_1,\ldots,\mu_\ell$ and thus is maximized when $\mu_1=m-2\ell+2$
and $\mu_2=\cdots=\mu_\ell=2$ (ignoring the fact that we have already
fixed $\hat{\mu}_2$).   Thus
\ba E_d[\vsi] &\leq (m - 2\ell+2)! 2^{\ell-1-\cm}
d^{\frac{-2m+\cm}{3}}\l(1+d^{-\frac{1}{3}}\r)^{\cm}\\
&\leq m^{m-2\ell}2^{\frac{m}{2}-\cm} 
d^{\frac{-2m+\cm}{3}}e^{\frac{m}{2d^{1/3}}},
\label{eq:E0-bound}\ea
where in the last step we used the facts that $2\leq\ell\leq m/2$ and
$\cm\leq m/2$.

We now combine \eq{E0-bound} with the combinatorial factor in
\eq{string-count} to obtain
\ba \sum_{\substack{\vs\in[p]^m \\ R(\vs)=\vs}} E_d[\vs]^k 
&\leq
\sum_{0\leq \cm \leq \frac{m}{2}}
\sum_{0\leq \ell \leq \frac{m}{2}}
\sum_{3\ell-m\leq \hat{\mu}_2 \leq \ell}
\frac{p^\ell}{\ell!}
\ell! (2m)^{3m-2c_2^{\max} - 4\hat{\mu}_2}
\l[m^{m-2\ell}2^{\frac{m}{2}-\cm}
d^{\frac{-2m+\cm}{3}}e^{\frac{m}{2d^{1/3}}}\r]^k
\\&= x^{\frac{m}{2}}e^{\frac{km}{2d^{1/3}}}
\sum_{\substack{0\leq \cm \leq \frac{m}{2} \\
{0\leq \ell \leq \frac{m}{2}}\\
{3\ell-m\leq \hat{\mu}_2 \leq \ell}}}
\frac{p^{\ell-\frac{m}{2}}}{d^{\frac{k}{3}\l(\frac{m}{2}-\cm\r)}}
(2m)^{(m-2c_2^{\max})+(2m - 4\hat{\mu}_2)}
m^{k(m-2\ell)}2^{k(\frac{m}{2}-\cm)}
\label{eq:irred-sum-almost-done}\ea
We can bound the sum over $\hat{\mu_2}$ by introducing
$\alpha=\ell-\hat{\mu}_2$, so that
\be\sum_{3\ell-m\leq \hat{\mu}_2 \leq \ell}
(2m)^{2(m - 2\hat{\mu}_2)} = 
(2m)^{2m-4\ell}\sum_{\alpha=0}^{m-2\ell} \l(2m\r)^{4\alpha}
=(2m)^{2m-4\ell}(1 + 16m^4)^{m-2\ell}
\leq \l(65m^6\r)^{m-2\ell}
\label{eq:mu2-sum}
\ee
Substituting \eq{mu2-sum} in \eq{irred-sum-almost-done} and
rearranging, we obtain  
\ba
\sum_{\substack{\vs\in[p]^m \\ R(\vs)=\vs}} E_d[\vs]^k 
 & \leq 
x^{\frac{m}{2}}e^{\frac{km}{2d^{1/3}}}
\sum_{0\leq \cm \leq \frac{m}{2}}
\sum_{0\leq \ell \leq \frac{m}{2}} 
\l(\frac{5000m^{2k+12}}{p}\r)^{\frac{m}{2}-\ell}
\l(\frac{2^{2+k}m^2}{d^{\frac{k}{3}}}\r)^{\frac{m}{2}-\cm}
\\ & \leq
\frac{e^{\frac{km}{2d^{1/3}}}}
{\l(1 - \frac{5000m^{2k+12}}{p}\r)
\l(1-\frac{2^{2+k}m^2}{d^{\frac{k}{3}}}\r)}
x^{\frac{m}{2}}
.\ea
In the last step we have assumed that both terms in the denominator
are positive.  This completes the proof of
\lemref{irred-strings}. \qed

\subsection{Bounding the sum of all strings}
\label{sec:all-strings}

For any string $\vs\in[p]^m$ we will repeatedly remove
repeats and unique letters until the remaining string is irreducible.
Each letter in the original string either (a) appears in the final
irreducible string, (b) is removed as a repeat of one of the letters
appearing in the final irreducible string, or (c) is removed as part
of a completely reducible substring.  Call the letters A, B or C
accordingly.   Assign a weight of $\sqrt{x}^ty^t$ to each run of $t$ A's, a
weight of $y^t$ to each run of $t$ B's and of $\sum_{t=0}^\infty 
\sum_{\ell=0}^{t}N(t,\ell)x^\ell y^t$ to each run of $t$ C's. Here $y$ is an
indeterminant, but we will see below that it can also be thought of as
a small number.  We will define $G(x,y)$ to be the sum over all finite strings  of
A's, B's and C's, weighted according to the above scheme.  Note that
$[y^m]G(x,y)$ (i.e. the coefficient of $y^m$ in $G(x,y)$) is the
contribution from strings of length $m$.  

We now relate $G(x,y)$ to the sum in \eq{mixed-sum}.
Define
$$A_0 = 
\frac{e^{\frac{km}{2d^{1/3}}}}
{\l(1 - \frac{5000m^{2k+12}}{p}\r)
\l(1-\frac{2^{2+k}m^2}{d^{\frac{k}{3}}}\r)}
$$
so that \lemref{irred-strings} implies that the contribution from all
irreducible strings of length $t$ is $\leq A_0\sqrt{x}^t$ as long as $1\leq t\leq m$.  We will treat the $t=0$ case separately in \lemref{complete-reduce}, but for simplicity allow it to contribute a $A_0\sqrt{x}^0$ term to the present sum.  Similarly, we ignore the fact that there are no irreducible strings of length 1, 2, 3 or 5, since we are only concerned with establishing an upper bound here.
Thus
\be \sum_{\substack{\vs\in[p]^m \\ R(\vs)\neq \emptyset}} E_d[\vs]^k
\leq  A_0 [y^m]G(x,y) \leq A_0 \frac{G(x,y_0)}{y_0^m},
\label{eq:mixed-sum-gf-bound}\ee
where the second inequality holds for any $y_0$ within the radius of convergence of $G$.    We will choose $y_0$ below, but first give a derivation of $G(x,y)$.

To properly count the contributions from completely reducible substrings (a.k.a.~C's), we recall that $F(x,y)$ counts all C strings of length $\geq 0$.  Thus, it will be convenient to model a general string as starting with a run of 0 or more C's, followed by one or more steps, each of which places either an A or a B, and then a run of 0 or more C's.  (We omit the case where the string consists entirely of C's, since this corresponds to completely reducible strings.)  Thus, 
\be G(x,y) = F(x,y) \cdot \sum_{n\geq 1} \l[ y(1+\sqrt{x})F(x,y)\r]^n 
= \frac{y(1+\sqrt{x})F^2(x,y)}{1-y(1+\sqrt{x})F(x,y)},\ee
which converges whenever $F$ converges and $y(1+\sqrt{x})F<1$.  However, since we are only interested in the coefficient of $y^m$ we can simplify our calculations by summing over only $n\leq m$.  We also omit the $n=0$ term, which corresponds to the case of completely reducible strings, which we treat separately.  Thus, we have
$$G_m(x,y) := F(x,y) \cdot \sum_{n=1}^m  \l[ y(1+\sqrt{x})F(x,y)\r]^n, $$
and $G_m(x,y)$ satisfies $[y^m] G_m(x,y) = [y^m]G(x,y)$.

Now define $y_0 = \lambda_+^{-1} = (1+\sqrt{x})^{-2}$.
Rewriting $F$ as $\frac{1}{2}\l(y^{-1}+1-x-\sqrt{(y^{-1}-(1+x))^2-4x}\r),$  we see that $F(x,y_0) = 1 + \sqrt{x}$.  Thus $y_0(1+\sqrt{x})F(x,y_0) = 1$ and $G_m(x,y_0) = m (1+\sqrt{x})$.

Substituting into \eq{mixed-sum-gf-bound} completes the proof of the Lemma.

\subsection{Alternate models}
We now use the formalism from \secref{irred-strings} to analyze some
closely related random matrix ensembles that have been suggested by
the information locking proposals of \cite{LW-locking}.  The first ensemble we
consider is one in which each $\ket{\varphi_s^j}$ is a random unit
vector in $A_j \ot B_j$, then the $B_j$ system is traced out.  Let
$d_A = \dim A_1 = \ldots = \dim A_k$ and $d_B = \dim B_1 = \ldots =
\dim B_k$.  The resulting matrix is 
$$M_{p,d_A[d_B],k} := \sum_{\vs\in [p]^m} \bigotimes_{j=1}^k \tr_{B_j}
\varphi_{s_j}^j.$$ 
If $d_B \ll d_A$ then we expect the states $\tr_B \varphi_s$ to be
nearly proportional to mutually orthogonal rank-$d_B$ projectors and
so we expect $M_{p,d_A[d_B],k}$ to be nearly isospectral to
$M_{p,d_A/d_B,k} \ot \tau_{d_B}^{\ot k}$, where $\tau_d := I_d / d$.
Indeed, if we define $E_{p,d_A[d_B],k}^m := \tr M_{p,d_A[d_B],k}^m$
then we have  
\begin{lemma}
$$E_{p,d_A[d_B],k}^m  \leq E_{p,d_A/d_B,k}^m e^{\frac{m(m+1)kd_B}{2d_A}} d_B^{k(1-m)}.$$
\end{lemma}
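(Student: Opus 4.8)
\emph{Proof plan.} The plan is to follow the permutation calculus of \secref{irred-strings} and reduce the statement to a per-string bound comparing the traced ensemble with the ordinary ensemble of dimension $D:=d_A/d_B$. (For $M_{p,d_A/d_B,k}$ to be a genuine matrix we take $D$ a positive integer, so implicitly $d_B\mid d_A$ and $d_B\le d_A$.) Exactly as in \eq{sum}, $M_{p,d_A[d_B],k}$ is a sum over $s\in[p]$ of $k$-fold tensor products whose $k$ factors are i.i.d., so $\bbE[\tr M_{p,d_A[d_B],k}^m]=\sum_{\vs\in[p]^m}\tilde E_{d_A[d_B]}[\vs]^k$, where $\tilde E_{d_A[d_B]}[\vs]:=\bbE[\tr((\tr_B\varphi_{s_1})\cdots(\tr_B\varphi_{s_m}))]$ and each $\ket{\varphi_{s_i}}$ is a Haar-random unit vector in $\bbC^{d_A}\ot\bbC^{d_B}$. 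Both $\tilde E_{d_A[d_B]}[\vs]$ and $E_D[\vs]$ will be seen to be nonnegative, so it suffices to establish the per-string inequality $\tilde E_{d_A[d_B]}[\vs]\le d_B^{1-m}\,e^{m(m-1)d_B/(2d_A)}\,E_D[\vs]$ for every $\vs$: raising both sides to the $k$th power, summing over $\vs$, using $\sum_\vs E_D[\vs]^k=E_{p,d_A/d_B,k}^m$, and noting $m(m-1)\le m(m+1)$ then gives the Lemma.

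\emph{Main step.} For the per-string bound I would first get a permutation formula for $\tilde E_{d_A[d_B]}[\vs]$ as in \eq{cyclic-trace}--\eq{Eds-perm}. Two applications of \eq{cyclic-trace} give $\tr((\tr_B\varphi_{s_1})\cdots(\tr_B\varphi_{s_m}))=\tr[(C_m\ot I_{B^{\ot m}})(\varphi_{s_1}\ot\cdots\ot\varphi_{s_m})]$, where $C_m$ is the cyclic shift acting on the $m$ copies of $A$ only. Taking the expectation, inserting at each repeated letter the joint-space form of \eq{Schur-average}, $\bbE[\varphi^{\ot t}]=\frac{\sum_{\rho\in\cS_t}\rho}{(d_Ad_B)(d_Ad_B+1)\cdots(d_Ad_B+t-1)}$, and using that a permutation $\pi$ of the $m$ copies of $A\ot B$ factors as $\pi_A\ot\pi_B$ so that by \eq{perm-trace} $\tr[(C_m\ot I)\pi]=\tr[C_m\pi_A]\,\tr[\pi_B]=d_A^{\cyc(C_m\pi)}d_B^{\cyc(\pi)}$, yields
$$\tilde E_{d_A[d_B]}[\vs]=\frac{\sum_{\pi\in\cS_\vs}d_A^{\cyc(C_m\pi)}d_B^{\cyc(\pi)}}{\prod_a (d_Ad_B)(d_Ad_B+1)\cdots(d_Ad_B+\mu_a-1)}\ \le\ \sum_{\pi\in\cS_\vs}d_A^{\cyc(C_m\pi)-m}d_B^{\cyc(\pi)-m},$$
where $\mu_a$ counts occurrences of letter $a$ in $\vs$, $\cS_\vs$ is the stabilizer of $\vs$ (as in \eq{Eds-perm}), and the inequality uses $\prod_a(d_Ad_B)\cdots(d_Ad_B+\mu_a-1)\ge(d_Ad_B)^m$. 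Writing $d_A=Dd_B$, each summand equals $D^{\cyc(C_m\pi)-m}d_B^{\cyc(\pi)+\cyc(C_m\pi)-2m}$; since $d_B\ge1$ and $\cyc(\pi)+\cyc(C_m\pi)\le m+1$, the $d_B$-power is $\le d_B^{1-m}$, so $\tilde E_{d_A[d_B]}[\vs]\le d_B^{1-m}\sum_{\pi\in\cS_\vs}D^{\cyc(C_m\pi)-m}$. The cycle inequality is the triangle inequality for the transposition metric $\delta(\alpha,\beta)=m-\cyc(\alpha^{-1}\beta)$ on $\cS_m$ applied to $\alpha=\pi$, $\beta=C_m\pi$ (since $\alpha^{-1}\beta=\pi^{-1}C_m\pi$ is an $m$-cycle), the same mechanism that produces the Narayana numbers in Sections~\ref{sec:complete-reduce}--\ref{sec:irred-strings}. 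Finally I compare with the exact identity from \eq{Eds-perm}, $E_D[\vs]=\frac{\sum_{\pi\in\cS_\vs}D^{\cyc(C_m\pi)}}{\prod_a D(D+1)\cdots(D+\mu_a-1)}$, and bound $\prod_a D(D+1)\cdots(D+\mu_a-1)\le D^m\prod_a\prod_{i=1}^{\mu_a-1}e^{i/D}\le D^m e^{m(m-1)/(2D)}$; with $1/D=d_B/d_A$ this gives $\sum_{\pi\in\cS_\vs}D^{\cyc(C_m\pi)-m}\le e^{m(m-1)d_B/(2d_A)}E_D[\vs]$, completing the per-string bound.

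\emph{Obstacle.} The argument is essentially bookkeeping; the one load-bearing input is the elementary cycle inequality $\cyc(\pi)+\cyc(C_m\pi)\le m+1$, which is standard in this setting. The only points needing care are (i) keeping straight that the partial trace over $B$ contributes $d_B^{\cyc(\pi)}$ (governed by $\pi$), not $d_B^{\cyc(C_m\pi)}$, when $\pi$ on $(A\ot B)^{\ot m}$ is split as $\pi_A\ot\pi_B$; and (ii) the implicit hypothesis $d_B\mid d_A$ in the statement. Incidentally the argument actually produces the stronger constant $e^{km(m-1)d_B/(2d_A)}$, so the stated $e^{km(m+1)d_B/(2d_A)}$ follows a fortiori.
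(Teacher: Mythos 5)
Your proposal is correct and follows essentially the same route as the paper's proof: the permutation formula with $\tr[(C_m\ot I)\pi]=d_A^{\cyc(C_m\pi)}d_B^{\cyc(\pi)}$, the cycle inequality $\cyc(\pi)+\cyc(C_m\pi)\leq m+1$ to extract the $d_B^{1-m}$ factor, and a comparison of the resulting sum with $E_{d_A/d_B}[\vs]$ via the denominator estimate (the paper cites \eq{gaussian-LB} for this last step and \cite{NR-free-prob} for the cycle inequality, but the content is identical). Your sharper constant $e^{km(m-1)d_B/(2d_A)}$ and the observation that $d_B \mid d_A$ is implicitly assumed are both accurate minor refinements.
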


\begin{proof}
Define $E_{d_A[d_B]}[\vs] = \tr (\tr_{B_1}(\varphi_{s_1}^1) \cdots \tr_{B_m}(\varphi_{s_m}^1)$.
Following the steps of \eq{Eds-perm}, we see that
\ba E_{d_A[d_B]}[\vs] 
&= \tr (C_m^{A^m} \ot I^{B^m}) \bbE (\varphi_{s_1} \ot \cdots \varphi_{s_m}) \\
& \leq \tr (C_m^{A^m} \ot I^{B^m}) \frac{\sum_{\pi\in\cS_{\vs}} \pi^{A^m} \ot \pi^{B_m}}{(d_Ad_B)^m} \\
& = \sum_{\pi\in\cS_{\vs}} d_A^{\cyc(C_m\pi)-m} d_B^{\cyc(\pi)-m}.\ea
Next, we use the fact (proved in \cite{NR-free-prob}) that for any $\pi\in\cS_m$, $\cyc(C_m\pi) + \cyc(\pi) \leq m+1$ to further bound
\be E_{d_A[d_B]}[\vs]  \leq \sum_{\pi\in\cS_{\vs}} 
d_A^{\cyc(C_m\pi)-m} d_B^{1 - \cyc(C_m\pi)}
= d_B^{1-m} \sum_{\pi\in\cS_{\vs}}  
\l(\frac{d_A}{d_B}\r)^{\cyc(C_m\pi)-m}.
\label{eq:prod-mixtures}\ee
On the other hand, if $\mu_1,\ldots,\mu_p$ are the letter frequencies of $\vs$ then \eq{Eds-perm} 
and \eq{gaussian-LB} yield
\be E_d[\vs] = \frac{\sum_{\pi\in\cS_{\vs}} d^{\cyc(C_m \pi)}}{\prod_{s=1}^p d(d+1)\cdots(d+\mu_s-1)} 
\geq  e^{-\frac{m(m+1)}{2d}} \sum_{\pi\in\cS_{\vs}} d^{\cyc(C_m \pi)-m} .
\label{eq:prod-mixtures2}\ee
Setting $d=d_A/d_B$ and combining \eq{prod-mixtures} and \eq{prod-mixtures2} yields the inequality
$$ E_{d_A[d_B]}[\vs] \leq E_{d_A/d_B}[\vs] e^{\frac{m(m+1)d_B}{2d_A}}.$$
We then raise both sides to the $k^{\text{th}}$ power and sum over $\vs$ to establish the Lemma.
\end{proof}

To avoid lengthy digressions, we avoid presenting any lower bounds for
$E_{p,d_A[d_B],k}^m$. 

Next, we also consider a model in which some of the random vectors are
repeated, which was again first proposed in \cite{LW-locking}.  Assume that  
$p^{1/k}$ is 
an
integer.  For $s=1,\ldots,p$ and $j = 1,\ldots,k$,
define 
$$s^{(j)} := \l \lceil\frac{s}{p^{1-\frac{j}{k}}} \r\rceil.$$
Note that as $s$ ranges from $1,\ldots,p$, $s^{(j)}$ ranges from
$1,\ldots,p^{j/k}$.  Define $\tilde{M}_{p,d,k} = \sum_{s=1}^p
\proj{\tilde{\varphi}_s}$, where $\ket{\tilde{\varphi}_s} =
\ket{\varphi_{s^{(1)}}^1} \ot \cdots \ot \ket{\varphi_{s^{(k)}}^k}$.  In \cite{LW-locking}, large-deviation arguments were used to show that for $x=o(1)$, $\|\tilde{M}_{p,d,k}\| = 1+o(1)$ with high probability.  Here we show that this can yield an alternate proof of our main result on the behavior of $\|M_{p,d,k}\|$, at least for small values of $x$.  In particular, we prove
\begin{corollary}\label{cor:comp-expect}
For all $m, p, d, k$,
$$ \tilde{E}_{p,d,k}^m \leq E_{p,d,k}^m.$$
\end{corollary}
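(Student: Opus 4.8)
\emph{On the form of the statement.} I will prove the inequality in the form $E_{p,d,k}^m \le \tilde E_{p,d,k}^m$, which is the form used in the circle of ideas of \cite{LW-locking}: combined with $\bbE[\|M_{p,d,k}\|^m]\le\bbE[\tr M_{p,d,k}^m]$ it lets a large‑deviation bound on $\|\tilde M_{p,d,k}\|$ control $\|M_{p,d,k}\|$. (The termwise comparison below in fact shows $\tilde E_{p,d,k}^m$ is always at least $E_{p,d,k}^m$ — e.g.\ $\tilde E_{4,2,2}^2=8>7=E_{4,2,2}^2$ — so this is the direction in which the bound holds.) The plan is to expand both moments over words and compare term by term. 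Since $\ket{\tilde\varphi_s}=\bigotimes_{j=1}^k\ket{\varphi^j_{s^{(j)}}}$ splits across the $k$ mutually independent subsystems, expanding $\bbE[\tr\tilde M_{p,d,k}^m]$ exactly as in \eq{sum} gives $\tilde E_{p,d,k}^m=\sum_{\vs\in[p]^m}\prod_{j=1}^k E_d[\vs^{(j)}]$, where $\vs^{(j)}:=(s_1^{(j)},\dots,s_m^{(j)})$; the standard model is the special case $s^{(j)}\equiv s$, so $E_{p,d,k}^m=\sum_{\vs\in[p]^m}E_d[\vs]^k$. Every $E_d[\cdot]\ge0$ by \eq{Eds-perm}, so it suffices to prove, for each fixed $\vs\in[p]^m$, that $\prod_{j=1}^k E_d[\vs^{(j)}]\ge E_d[\vs]^k$, and then sum over $\vs$.

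\emph{Two facts.} (i) By \eq{Eds-perm}, $E_d[\vec t]$ depends on $\vec t$ only through its pattern of equal letters; rewriting the second line of \eq{Eds-perm}, one has $E_d[\vec t]=\tr\!\bigl(C_m\,\Pi_{\vec t}\bigr)\big/\tr\,\Pi_{\vec t}$, where $\Pi_{\vec t}$ is the orthogonal projection of $(\bbC^d)^{\ot m}$ onto the subspace symmetric among the tensor factors within each block of equal letters. (ii) \emph{Monotonicity under coarsening:} if $\vec t\,'$ is obtained from $\vec t$ by merging two of its letters into a single letter — so that $\Pi_{\vec t\,'}\preceq\Pi_{\vec t}$ — then $E_d[\vec t\,']\ge E_d[\vec t]$; iterating, this holds whenever $\vec t\,'$ coarsens $\vec t$. (For instance $E_d[12]=\tfrac1d\le1=E_d[11]$ and $E_d[1234]=\tfrac1{d^3}\le\tfrac1d=E_d[1212]$.)

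\emph{Nesting and conclusion.} Let $q:=p^{1/k}$, an integer by hypothesis. The ceiling identity $\lceil a/(bc)\rceil=\lceil\lceil a/b\rceil/c\rceil$ for positive integers gives $s^{(j)}=\lceil s^{(j+1)}/q\rceil$, so $s^{(j)}$ is a function of $s^{(j+1)}$ (concretely, $s^{(j)}$ records the top $j$ base‑$q$ digits of $s$). Hence $s_a^{(j+1)}=s_b^{(j+1)}$ implies $s_a^{(j)}=s_b^{(j)}$, i.e.\ the equal‑letter pattern of $\vs^{(j)}$ coarsens that of $\vs^{(j+1)}$, and since $s^{(k)}=s$ the patterns of $\vs^{(1)},\dots,\vs^{(k)}=\vs$ form a chain with the pattern of $\vs$ finest. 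By the monotonicity of fact (ii), $E_d[\vs^{(j)}]\ge E_d[\vs]\ge0$ for every $j$, so $\prod_{j=1}^k E_d[\vs^{(j)}]\ge E_d[\vs]^k$; summing over $\vs\in[p]^m$ yields $\tilde E_{p,d,k}^m\ge E_{p,d,k}^m$, as required. (Equivalently, one may interpolate through the models $M^{(r)}$ whose first $r$ subsystems use the coarsened indexing $s\mapsto s^{(j)}$ and whose last $k-r$ use the full indexing, and show $\bbE[\tr(M^{(r)})^m]$ is nondecreasing in $r$ by the same termwise comparison.)

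\emph{Main obstacle.} Everything except fact (ii) is routine bookkeeping; the crux is the monotonicity of $E_d[\vec t]=\tr(C_m\Pi_{\vec t})/\tr\Pi_{\vec t}$ under enlarging the per‑block‑symmetric projector $\Pi_{\vec t}$. Two natural routes for the atomic case of merging two letters $a$ and $b$: (a) condition on the vectors of all other letters, write the integrand as $F(\varphi_a,\varphi_b)$, and deduce $\bbE_\psi F(\psi,\psi)\ge\bbE_{\psi,\psi'}F(\psi,\psi')$ (for independent Haar $\psi,\psi'$) from a ``variance $\ge0$'' argument, having first identified the object that remains once the other letters are averaged out as a Gram kernel $\langle h(\psi)\mid h(\psi')\rangle$; or (b) argue directly at the level of permutation sums that $\bigl(\sum_{\pi\in\cS_{\vec t}}d^{\cyc(C_m\pi)}\bigr)\big/\bigl(\sum_{\pi\in\cS_{\vec t}}d^{\cyc(\pi)}\bigr)$ increases as $\cS_{\vec t}$ runs through an ascending chain of Young subgroups of $\cS_m$. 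The delicate point in either route is handling the non‑Hermiticity of $C_m$ and of the partially‑averaged words, so that the sign of every intermediate quantity is under control.
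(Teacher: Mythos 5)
Your reduction is sound and is in fact the same skeleton the paper uses: expand $\tilde{E}_{p,d,k}^m=\sum_{\vs\in[p]^m}\prod_{j=1}^k E_d[\vs^{(j)}]$ by independence of the $k$ subsystems, note that each pattern $\vs^{(j)}$ coarsens $\vs=\vs^{(k)}$ (the ceiling identity is fine, though only coarsening of $\vs$ itself is needed), and compare term by term using nonnegativity of $E_d[\cdot]$. You are also right about the direction: as displayed in \cor{comp-expect} the inequality is reversed relative to what the paper's own merging Lemma proves and what the application immediately afterwards (transferring the Leung--Winter bound on $\|\tilde M_{p,d,k}\|$ to $\|M_{p,d,k}\|$ via $\Pr[\|M_{p,d,k}\|\geq\gamma]\leq d^k\Pr[\|\tilde M_{p,d,k}\|\geq\gamma]$) requires; the provable statement is $E_{p,d,k}^m\leq\tilde E_{p,d,k}^m$, and your counterexample $\tilde E_{4,2,2}^2=8>7=E_{4,2,2}^2$ is correct.

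The genuine gap is your ``fact (ii)'': the claim that merging two letters can only increase $E_d[\cdot]$ is the entire substantive content of the Corollary, and you do not prove it --- you state it, check two examples, and sketch two possible routes while conceding that controlling signs in the presence of the non-Hermitian $C_m$ is the delicate point. This is exactly the unlabeled Lemma the paper proves right after \cor{comp-expect}, and it is not a generic monotonicity of $\tr(C_m\Pi)/\tr\Pi$ under shrinking a projector $\Pi$ (for arbitrary nested projectors and a non-Hermitian $C_m$ such a statement can fail), so the Young-subgroup structure must actually be used. The paper closes it without any variance/Gram-kernel identification: reduce to a single merge of letters $a,b$; conjugate $C_m$ by a permutation so the $a$- and $b$-positions sit at the front; average over all letters other than $a,b$ using \eq{Schur-average}, which leaves a fixed operator of the form $\sum_{\pi\in\cS_m}c_\pi\,\pi$ with $c_\pi\geq 0$; then for each $\pi$ the corresponding term of $E_d[\vs]$ equals $c_\pi\,\bbE\bigl[|\braket{\varphi_a}{\varphi_b}|^{2f(\pi)}\bigr]$ for some $f(\pi)\geq 0$ (every cyclic word in the two rank-one projectors traces to a power of $|\braket{\varphi_a}{\varphi_b}|^2$, so each term is nonnegative), while the corresponding term for the merged string is $c_\pi$; since $\bbE\bigl[|\braket{\varphi_a}{\varphi_b}|^{2f}\bigr]\leq 1$, the termwise inequality follows and iterating over merges gives the Lemma. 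Until you supply an argument of this kind, or carry one of your two sketches through, your proposal proves the (correctly oriented) Corollary only conditionally on an unproven lemma.
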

This implies that if $\tilde{\lambda}$ is a randomly drawn eigenvalue of $\tilde{M}_{p,d,k}$,
 $\lambda$ is a randomly drawn eigenvalue of ${M}_{p,d,k}$ and $\gamma$ is a real number, then $\Pr[\lambda \geq \gamma] \leq \Pr[\tilde{\lambda}\geq \gamma]$.  In particular
  $$\Pr[\|M_{p,d,k}\|\geq \gamma] \leq d^k \Pr[\|\tilde{M}_{p,d,k}\| \geq \gamma].$$

The proof of \cor{comp-expect} is a direct consequence of the following Lemma, which may be of independent interest.
\begin{lemma}
If $s_i'=s_j'$ whenever $s_i=s_j$ for some strings $\vs,\vs'\in[p]^m$ then
$E_d[\vs] \leq E_d[\vs']$.
\end{lemma}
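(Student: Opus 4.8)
The plan is to reduce the claim to a combinatorial inequality for sums of $d^{\cyc}$ over Young subgroups of $\cS_m$, and then to prove that inequality by induction. Observe first that $E_d[\vs]$ depends on $\vs$ only through the set partition $P$ of $[m]$ into blocks of positions carrying equal letters, and that the hypothesis says precisely that the partition $P'$ of $\vs'$ coarsens $P$. In the partition lattice any such pair is joined by a chain of elementary merges (each replacing two blocks by their union), so it suffices to treat the case where $\vs'$ arises from $\vs$ by merging the position sets $\mathcal{A},\mathcal{B}$ of two distinct letters $a,b$ of $\vs$ into one block. Writing $\cS_\vs=\{\pi\in\cS_m:\sigma_i=\sigma_{\pi(i)}\ \forall i\}$ for the Young subgroup stabilizing $P$, we have $\cS_\vs=\cS_{\mathcal{A}}\times\cS_{\mathcal{B}}\times\cS_{\mathrm{rest}}$ and $\cS_{\vs'}=\cS_{\mathcal{A}\cup\mathcal{B}}\times\cS_{\mathrm{rest}}\supseteq\cS_\vs$. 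By Schur--Weyl duality together with $\tr\nu=d^{\cyc(\nu)}$, i.e., the equality in \eq{Eds-perm} combined with \eq{perm-trace},
\[
E_d[\vs]=\frac{\sum_{\pi\in\cS_\vs}d^{\cyc(C_m\pi)}}{\prod_c d(d+1)\cdots(d+\mu_c-1)},
\]
where $\mu_c$ is the multiplicity of letter $c$. The merge enlarges the numerator set, but it also enlarges the denominator --- the factors of $a$ and $b$ change from $\bigl(d(d+1)\cdots(d+|\mathcal{A}|-1)\bigr)\bigl(d(d+1)\cdots(d+|\mathcal{B}|-1)\bigr)$ to the larger $d(d+1)\cdots(d+|\mathcal{A}|+|\mathcal{B}|-1)$ --- so a direct comparison fails; for the Gaussian ensemble the denominator is just $d^m$ and the inequality is immediate from $\cS_\vs\subseteq\cS_{\vs'}$.

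Next, group the numerator sum by its $\cS_{\mathrm{rest}}$-component $\tau$ and set $h:=C_m\tau\in\cS_m$; using the classical identity $\sum_{\sigma\in\cS_n}d^{\cyc(\sigma)}=d(d+1)\cdots(d+n-1)$ to rewrite each denominator factor as such a sum, one sees that $E_d[\vs]\le E_d[\vs']$ follows term by term in $\tau$ from the inequality
\[
\frac{\sum_{\sigma\in\cS_{\mathcal{A}\cup\mathcal{B}}}d^{\cyc(h\sigma)}}{\sum_{\sigma\in\cS_{\mathcal{A}\cup\mathcal{B}}}d^{\cyc(\sigma)}}\ \ge\ \frac{\sum_{\sigma\in\cS_{\mathcal{A}}\times\cS_{\mathcal{B}}}d^{\cyc(h\sigma)}}{\sum_{\sigma\in\cS_{\mathcal{A}}\times\cS_{\mathcal{B}}}d^{\cyc(\sigma)}},
\]
which I want to establish for all $h\in\cS_m$ and all disjoint $\mathcal{A},\mathcal{B}\subseteq[m]$. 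Equivalently, the maximally mixed state on the symmetric subspace of the $\mathcal{A}\cup\mathcal{B}$ slots overlaps the permutation operator of $h$ at least as much as does the tensor product of the maximally mixed states on the symmetric subspaces of the $\mathcal{A}$ slots and of the $\mathcal{B}$ slots; these two states are incomparable in the operator order, which is why one cannot simply compare them as operators.

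To prove the displayed inequality I would induct on $|\mathcal{B}|$. For $|\mathcal{B}|=1$, write each $\sigma\in\cS_{\mathcal{A}\cup\{b\}}$ as $\sigma'$ or $\sigma'(i\,b)$ with $\sigma'\in\cS_{\mathcal{A}}$ and $i\in\mathcal{A}$; since right multiplication by a transposition changes the cycle count by exactly $\pm1$, we get $\cyc(h\sigma'(i\,b))\ge\cyc(h\sigma')-1$, hence $\sum_{\sigma\in\cS_{\mathcal{A}\cup\{b\}}}d^{\cyc(h\sigma)}\ge(1+|\mathcal{A}|/d)\sum_{\sigma'\in\cS_{\mathcal{A}}}d^{\cyc(h\sigma')}$, with equality when $h=e$ (there the transposition always merges two cycles, so the exponent drops by exactly one), and the inequality follows. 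For $|\mathcal{B}|\ge2$, peel off one element $b\in\mathcal{B}$ and apply the same insertion recursion to both $\cS_{\mathcal{A}\cup\mathcal{B}}$ and $\cS_{\mathcal{A}}\times\cS_{\mathcal{B}}$, invoking the induction hypothesis for the pair $(\mathcal{A},\mathcal{B}\setminus\{b\})$. The delicate point --- and the main obstacle --- is that inserting $b$ into $\cS_{\mathcal{A}\cup(\mathcal{B}\setminus\{b\})}$ allows it to be transposed with every element of $\mathcal{A}\cup(\mathcal{B}\setminus\{b\})$, whereas inserting $b$ into $\cS_{\mathcal{B}\setminus\{b\}}$ allows only transpositions with $\mathcal{B}\setminus\{b\}$, so the two recursions run over different index sets and the surplus ``mixing'' terms on the $\cS_{\mathcal{A}\cup\mathcal{B}}$ side must be shown to dominate the extra factor in the denominator. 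Handling this requires tracking how $\cyc(h\sigma'(i\,b))-\cyc(h\sigma')$ depends on the cycle of $h\sigma'$ containing $b$; I expect the $\pm1$ transposition identity together with a Chebyshev-type rearrangement of the resulting sums to suffice to close the induction.
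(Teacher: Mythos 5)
Your reduction is sound as far as it goes: passing to a single elementary merge of two letters $a,b$, writing $E_d[\vs]$ via \eq{Schur-average} and \eq{perm-trace} as a ratio of sums of $d^{\cyc}$ over the Young subgroup $\cS_\vs$, factoring out the $\cS_{\mathrm{rest}}$-component, and reducing to the displayed ratio inequality for each fixed $h=C_m\tau$ — all of that is correct (note only that the term-by-term claim for arbitrary $h$ is strictly stronger than the Lemma, which would only require the inequality after summing over $\tau$). The base case $|\mathcal{B}|=1$ is also correct. But the proof is not complete: the inductive step is exactly where the content of the Lemma lives, and you do not carry it out — you name the obstacle (the two insertion recursions run over different index sets, and the crude bound $\cyc(h\sigma(i\,b))\geq\cyc(h\sigma)-1$ only matches, rather than beats, the growth of the denominator, while the separate-letters numerator may grow faster than its denominator when transpositions split cycles) and then state that you ``expect'' a Chebyshev-type rearrangement to close it. As written this is a reduction plus a conjecture, not a proof.

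The paper's argument sidesteps the combinatorial inequality entirely, and it is worth internalizing because it also proves your ratio inequality for every $h$ in one line. Do not average over $\varphi_a$ and $\varphi_b$; only average the remaining letters, which produces a positive combination $\sum_\pi c_\pi \tr\bigl(\pi\,(\varphi_a^{\ot\mu_a}\ot\varphi_b^{\ot\mu_b}\ot I^{\ot(m-\mu_a-\mu_b)})\bigr)$ for $E_d[\vs]$ and the same combination with $\varphi_b$ replaced by $\varphi_a$ for $E_d[\vs']$. Decomposing each $\tr(\pi\,\cdot)$ over the cycles of $\pi$, every cycle contributes a factor of the form $|\braket{\varphi_a}{\varphi_b}|^{2g}$ (times a power of $d$ from all-identity cycles), so the $\pi$-term of $E_d[\vs]$ is $c_\pi\,\bbE[|\braket{\varphi_a}{\varphi_b}|^{2f(\pi)}]$ while the corresponding term of $E_d[\vs']$ is $c_\pi$; since $|\braket{\varphi_a}{\varphi_b}|\leq 1$ the inequality holds term by term. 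Applied with $\pi$ ranging over $h\sigma$, $\sigma\in\cS_{\mathcal{A}}\times\cS_{\mathcal{B}}$ versus $\sigma\in\cS_{\mathcal{A}\cup\mathcal{B}}$, this is precisely your missing inequality; if you want to keep your framework, substitute this argument for the unproven induction.
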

\begin{proof}
The hypothesis of the Lemma can be restated with no loss of generality
as saying that $\vs'$ is obtained from $\vs$ by a series of merges,
each of which replaces all instances of letters $a,b$ with the letter
$a$.  We will prove the inequality for a single such merge.  Next, we
rearrange $\vs$ so that the a's and b's are at the start of the
string.  This rearrangement corresponds to a permutation $\pi_0$, so
we have  
$E_d[\vs]= \tr \pi_0^\dag C_m \pi_0 \bbE [\varphi_a^{\ot \mu_a} \ot \varphi_b^{\ot \mu_b} \ot \omega]$ and
$E_d[\vs'] = \tr \pi_0^\dag C_m \pi_0 \bbE [\varphi_a^{\ot
  \mu_a+\mu_b} \ot  \omega]$, where $\omega$ is a tensor product of
various $\varphi_s$, with $s\not\in\{a,b\}$.  Taking the expectation
over $\omega$ yields a positive linear combination of various
permutations, which we absorb into the $\pi_0^\dag C_m \pi_0$ term by
using the cyclic property of the trace.  Thus we find 
\ba E_d[\vs] & = \sum_{\pi\in S_m} c_\pi \tr \pi \bbE [\varphi_a^{\ot \mu_a} \ot \varphi_b^{\ot \mu_b} \ot 
I^{m-\mu_a-\mu_b}] \\
E_d[\vs'] & = \sum_{\pi\in S_m} c_\pi \tr \pi \bbE [\varphi_a^{\ot \mu_a+\mu_b} \ot
I^{m-\mu_a-\mu_b}] \ea

for some $c_\pi \geq 0$.  A single term in the $E_d[\vs]$ sum has the
form $c_\pi \bbE[|\braket{\varphi_a}{\varphi_b}|^{2f(\pi)}]$ for some
$f(\pi)\geq 0$, while for $E_d[\vs']$, the corresponding term is
simply $c_\pi$.  Since
$\bbE[|\braket{\varphi_a}{\varphi_b}|^{2f(\pi)}]\leq 1$, this
establishes the desired inequality.
\end{proof}

\section{Approach 3: Schwinger-Dyson equations}
\label{sec:andris}
\subsection{Overview}
\label{sec:overview}

The final method we present uses the Schwinger-Dyson
equations\cite{Hastings-expander2} to evaluate traces of products of random pure
states.  First, we show how the expectation of a product of traces may be expressed as an expectation of a similar product involving fewer traces.  This will allow us to simplify $E_d[\vs]^k$, and thus to obtain a recurrence relation for $e_{p,d,k}^m$.


\subsection{Expressions involving traces of random matrices}

\subsubsection{Eliminating one ${\varphi}$: Haar random case}

We start by considering the case when $k=1$ (i.e. $\ket{\varphi_i}$ are 
just Haar-random, without a tensor product structure).
Let ${\varphi}$ be a density matrix of a Haar-random state over $\bbC^d$.

Let $A_1, \ldots, A_j$ be matrix-valued 
random variables that are independent of ${\varphi}$ (but there may be 
dependencies
between $A_i$). We would like to express 
\[ \bbE[\tr({\varphi}A_1{\varphi}A_2\ldots {\varphi}A_j)], \] 
by an expression that depends only on $A_1, \ldots, A_j$.
First, if ${\varphi}=\ket{\varphi}\bra{\varphi}$, then
\ba
\tr({\varphi}A_1{\varphi} \ldots {\varphi} A_i) \tr({\varphi} A_{i+1} {\varphi} \ldots {\varphi} A_j) 
&=
\bra{\varphi} A_1{\varphi}\ldots {\varphi} A_i \ket{\varphi} \bra{\varphi} A_{i+1} {\varphi} \ldots {\varphi} 
A_j 
\ket{\varphi}  \nn
\label{eq:merge}
&=
\bra{\varphi} A_1 {\varphi} \ldots A_i {\varphi} A_{i+1} \ldots {\varphi} A_j \ket{\varphi} =
\tr({\varphi} A_1 \ldots {\varphi} A_j) .
\ea
This allows to merge expressions that involve the same matrix ${\varphi}$.

Second, observe that ${\varphi}=U \ket{0}\bra{0}U^{\dagger}$, 
where $U$ is a random unitary 
and $\ket{0}$ is a fixed state. 
By applying eq. (19) from \cite{Hastings-expander2}, we get
\[ \bbE[\tr({\varphi}A_1{\varphi}A_2\ldots {\varphi}A_j)]  = - \frac{1}{d} \sum_{i=1}^{j-1} 
\bbE[\tr({\varphi}A_1\ldots {\varphi}A_i) 
\tr({\varphi}A_{i+1}\ldots {\varphi}A_j)] \] \[ + \frac{1}{d} \sum_{i=1}^{j} 
\bbE[\tr({\varphi}A_1\ldots A_{i-1}{\varphi}) 
\tr(A_i {\varphi}A_{i+1}\ldots {\varphi}A_j)] .\]
\comment{Since 
\[ \tr({\varphi}A_1\ldots {\varphi}A_i) \tr({\varphi}A_{i+1}\ldots {\varphi}A_p) = \tr({\varphi}A_1{\varphi}A_2\ldots 
{\varphi}A_p),\]}
Because of (\ref{eq:merge}), we can replace each term in the first sum
by $\bbE[\tr({\varphi} A_1 \ldots {\varphi} A_j)]$. Moving those terms to the left hand side
and multiplying everything by $\frac{d}{d+j-1}$ gives
\begin{equation}
\label{eq:rewrite}
 \bbE[\tr({\varphi}A_1{\varphi}A_2\ldots {\varphi}A_j)] = 
\frac{1}{d+j-1} \sum_{i=1}^{j} 
\bbE[\tr({\varphi}A_1\ldots A_{i-1}{\varphi}) 
\tr(A_i {\varphi}A_{i+1}\ldots {\varphi}A_j)] .
\end{equation}
For $i=j$, we have
\begin{equation}
\label{eq:rewrite1}
\tr({\varphi}A_1\ldots A_{j-1}{\varphi})
\tr(A_j) = \tr({\varphi}A_1\ldots A_{j-1}) \tr(A_j) .
\end{equation}
Here, we have applied $\tr(AB)=\tr(BA)$ and ${\varphi}^2={\varphi}$.
For $i<j$, we can rewrite 
\ba \tr({\varphi}A_1\ldots A_{i-1}{\varphi}) \tr(A_i {\varphi}A_{i+1}\ldots {\varphi}A_j) 
&= \tr({\varphi}A_1\ldots A_{i-1}) \tr({\varphi}A_{i+1}\ldots {\varphi}A_j A_i) 
\nn &= \label{eq:rewrite2}
\tr({\varphi}A_1\ldots A_{i-1} {\varphi}A_{i+1}\ldots {\varphi}A_j A_i).
\ea
By combining (\ref{eq:rewrite}), (\ref{eq:rewrite1}) and 
(\ref{eq:rewrite2}), we have
\ba
\bbE[\tr({\varphi}A_1{\varphi}A_2\ldots {\varphi}A_j)] 
 &= \frac{1}{d+j-1} 
 \left( \bbE[\tr({\varphi}A_1\ldots {\varphi}A_{j-1}) \tr(A_j)] +
\sum_{i=1}^{j-1} \bbE[\tr({\varphi}A_1\ldots A_{i-1} {\varphi}A_{i+1}\ldots {\varphi}A_j A_i)] 
\right)  \label{eq:onestep-exact}\\
&\leq \frac{1}{d}
 \left( \bbE[\tr({\varphi}A_1\ldots {\varphi}A_{j-1}) \tr(A_j)] +
\sum_{i=1}^{j-1} \bbE[\tr({\varphi}A_1\ldots A_{i-1} {\varphi}A_{i+1}\ldots {\varphi}A_j A_i)] 
\right) .
\label{eq:onestep}
\ea

\subsubsection{Consequences}

Consider $\bbE[\tr({\varphi}_1\ldots {\varphi}_m)]$ with ${\varphi}_i$ as described in section 
\ref{sec:overview}. Let $Y_1, \ldots, Y_l$ be the different matrix valued 
random variables that occur among ${\varphi}_1, \ldots, {\varphi}_m$.
We can use the procedure described above to eliminate all
occurrences of $Y_1$. Then, we can apply it again to eliminate all 
occurrences of $Y_2$, 
$\ldots$, $Y_{l-1}$, obtaining an expression that depends only on 
$\tr(Y_l)$.
Since $\tr(Y_l)=1$, we can then evaluate the expression. 

Each application of (\ref{eq:onestep-exact}) generates a sum of trace 
expressions with
positive real coefficients. Therefore, the final expression in $\tr(Y_l)$ 
is also a sum of
terms that involve $\tr(Y_l)$ with positive real coefficients. 
This means that $\bbE[\tr({\varphi}_1\ldots {\varphi}_m)]$ is always a positive real.  

\subsubsection{Eliminating one ${\varphi}$: the tensor product case}

We claim
\begin{lemma}
\label{lem:tensor}
Let ${\varphi}$ be a tensor product of $k$ Haar-random states in $d$ dimensions 
and $A_1, \ldots, A_j$ be matrix-valued random variables
which are independent from ${\varphi}$ and whose values are tensor products
of matrices in $d$ dimensions. Then,
\[ \bbE[\tr({\varphi}A_1{\varphi}A_2\ldots {\varphi}A_j)] \leq \\ \frac{1+j^k d^{-1/k}}{d} 
\bbE[\tr({\varphi}A_1\ldots {\varphi}A_{j-1}) \tr(A_j)] + \frac{j^k}{d^{1/k}} \sum_{i=1}^{j-1} 
\bbE[\tr({\varphi}A_1\ldots 
A_{i-1} {\varphi}A_{i+1}\ldots {\varphi}A_j A_i)]  .
\]
\end{lemma}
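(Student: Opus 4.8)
\emph{Proposal.} The plan is to reduce the claim to the $k=1$ identity \eqref{eq:onestep-exact}, applied once to each of the $k$ tensor factors. Since $\varphi=\varphi^1\otimes\cdots\otimes\varphi^k$ and each value of $A_i$ is a tensor product $A_i=A_i^1\otimes\cdots\otimes A_i^k$, the trace factorizes as
\[\tr(\varphi A_1\varphi A_2\cdots\varphi A_j)=\prod_{a=1}^{k}\tr\big(\varphi^a A_1^a\varphi^a A_2^a\cdots\varphi^a A_j^a\big),\]
and, because the $\varphi^a$ are mutually independent and independent of the $A_i$, one can condition on the $A_i$ and take the expectation over $\varphi^1,\ldots,\varphi^k$ one factor at a time. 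In the $a$-th factor $\varphi^a$ appears $j$ times, so \eqref{eq:onestep-exact} (which is exactly the setting: $\varphi^a$ Haar on $\bbC^d$, the $A_i^a$ fixed matrices on $\bbC^d$) rewrites $\bbE_{\varphi^a}[\tr(\varphi^a A_1^a\cdots\varphi^a A_j^a)]$ as $\tfrac1{d+j-1}$ times a sum of $j$ terms: one ``detach'' term $\tr(\varphi^a A_1^a\cdots\varphi^a A_{j-1}^a)\tr(A_j^a)$, and $j-1$ ``move'' terms $\tr(\varphi^a A_1^a\cdots A_{i-1}^a\varphi^a A_{i+1}^a\cdots\varphi^a A_j^a A_i^a)$, $i=1,\ldots,j-1$. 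Doing this for every $a$ expresses the left-hand side as $(d+j-1)^{-k}$ times a sum over the $j^k$ ways of choosing, independently for each factor, the detach option or one of the move options.

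Next I would sort these $j^k$ terms into ``diagonal'' and ``cross'' contributions. If every factor chooses ``detach'', then $\prod_a\tr(A_j^a)=\tr(A_j)$ and, re-tensoring the remaining factors (legitimate since the $\varphi^a$ are independent and $\varphi=\bigotimes_a\varphi^a$, $A_\ell=\bigotimes_a A_\ell^a$), the product recombines exactly into $\bbE[\tr(\varphi A_1\cdots\varphi A_{j-1})\tr(A_j)]$. Likewise, if every factor chooses ``move $i$'' for one common $i$, the product recombines into $\bbE[\tr(\varphi A_1\cdots A_{i-1}\varphi A_{i+1}\cdots\varphi A_j A_i)]$. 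With the prefactor $(d+j-1)^{-k}\le d^{-k}\le\min\big(\tfrac1d,\tfrac{j^k}{d^{1/k}}\big)$, these $1+(j-1)$ diagonal terms are already dominated by the right-hand side of the lemma, with slack to spare. Every other choice mixes incompatible shapes across the factors and does not reassemble into a single trace on $(\bbC^d)^{\otimes k}$; there are fewer than $j^k$ such ``cross'' terms, each still carrying the prefactor $(d+j-1)^{-k}$.

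The remaining, and main, task is to bound the cross terms and fit them into the slack. Here I would use that (for the intended application) the $A_i$ are products of density matrices, so every individual factor-trace is nonnegative, and that a trace $\tr(\varphi^a B_1\varphi^a B_2\cdots)$ with $\varphi^a$ a rank-one projector $|v\rangle\!\langle v|$ equals a product of numbers $\langle v|B_\ell|v\rangle$, each at most $\|B_\ell\|$. Replacing the $\varphi^a$ that survive in a cross term by the identity, and estimating the freed traces crudely, should show that each cross term is at most a power of $d$ times the corresponding diagonal move/detach term; combined with the prefactor $d^{-k}$ and the count $\le j^k$, this leaves a residual factor of order $j^k d^{-1/k}$ — precisely what the coefficients $\tfrac{1+j^kd^{-1/k}}{d}$ and $\tfrac{j^k}{d^{1/k}}$ absorb. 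I expect this crude power-of-$d$ estimate on the cross terms, together with making the bookkeeping of ``which factor chose what'' fully precise, to be the delicate part; the rest is assembly.

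As a consistency check, when $k=1$ there are no cross terms and the statement is just \eqref{eq:onestep-exact} followed by $\tfrac1{d+j-1}\le\min\big(\tfrac{1+j/d}{d},\tfrac{j}{d}\big)$. Should the cross-term estimate prove awkward, an alternative is to expand $\bbE[\varphi^{\otimes j}]=\bigotimes_{a=1}^k\frac{\sum_{\pi\in\cS_j}\pi}{d(d+1)\cdots(d+j-1)}$ directly via \eqref{eq:Schur-average} and extract a comparable recursion from the resulting permutation sum; but the one-factor-at-a-time reduction stays closest to the $k=1$ analysis already in place.
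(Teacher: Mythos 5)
Your setup coincides exactly with the paper's: factorize the trace over the $k$ tensor slots, apply the one-factor identity \eq{onestep} to each slot, and expand the product into $j^k$ terms indexed by which of the $j$ options (one ``detach'', $j-1$ ``moves'') each slot chooses. Writing $C_0,\ldots,C_{j-1}$ for the per-slot expectations (which are equal across slots because the subsystems are i.i.d.), the diagonal terms $(C_i)^k$ recombine into the traces on the right-hand side of the lemma, just as you say. The gap is in your treatment of the cross terms, which is the entire content of the lemma. You propose to show that ``each cross term is at most a power of $d$ times the corresponding diagonal move/detach term,'' but a mixed product $C_{i_1}\cdots C_{i_k}$ with not all $i_l$ equal cannot in general be dominated by any power of $d$ times a \emph{single} diagonal term: if, say, $C_0$ is large and $C_1$ is small, then $C_0^{k-1}C_1$ is not $O(\mathrm{poly}(d))\cdot C_1^{k}$, and replacing surviving $\varphi^a$'s by the identity and invoking operator-norm bounds does not repair this, since it produces quantities with no controlled relation to the $C_i$'s.

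The paper's resolution is a weighted arithmetic--geometric mean inequality: using that each $C_i\geq 0$ (established earlier via the Schwinger--Dyson recursion), one bounds $C_{i_1}\cdots C_{i_k}\leq \frac{1}{k}\sum_l x_l$ with $x_l = d^{-1/k}(C_{i_l})^k$ when $i_l=0$ and $x_l = d^{r/((k-r)k)}(C_{i_l})^k$ when $i_l\neq 0$ ($r$ being the number of zero indices), the weights chosen so that their product is $1$ and hence $\sqrt[k]{\prod_l x_l}=\prod_l C_{i_l}$. Each cross term is thereby converted into a \emph{convex combination} of diagonal $k$-th powers; summing the coefficients over the $\leq j^k$ terms and multiplying by the prefactor $d^{-k}$ yields exactly $\frac{1+j^kd^{-1/k}}{d}$ on $(C_0)^k$ and $\frac{j^k}{d^{1/k}}$ on each $(C_i)^k$. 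This AM--GM step is the missing idea; without it (or an equivalent device for trading mixed products for diagonal powers), the argument does not close.
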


\proof
Because of the tensor product structure, we can express 
\[ {\varphi}={\varphi}^1\otimes {\varphi}^2\otimes \ldots {\varphi}^k, \]
\[ A_i = A^1_i \otimes A^2_i \otimes \ldots A^k_i .\]
We have 
\[ \bbE[\tr({\varphi}A_1{\varphi}A_2\ldots {\varphi}A_j)] = \prod_{l=1}^k \bbE[\tr({\varphi}^lA_1^l{\varphi}^l \ldots 
{\varphi}^lA_j^l)] .\]
We expand each of terms in the product according to \eq{onestep}.
Let $C_0=\bbE[\tr({\varphi}^l A^l_1\ldots {\varphi}^l A^l_{j-1}) \tr(A^l_j)]$ and 
\[ C_i=\bbE[\tr({\varphi}^l A^l_1\ldots 
A^l_{i-1} {\varphi}^l A^l_{i+1}\ldots {\varphi}^l A^l_j A^l_i)]\] 
for $i\in\{1, 2, \ldots, j-1\}$.
(Since each of $k$ subsystems has equal dimension $d$ and are identically distributed, 
the expectations $C_0, \ldots, C_{j-1}$ are independent of $l$.)
Then, from \eq{onestep}, we get
\[ \bbE[\tr({\varphi}A_1{\varphi}A_2\ldots {\varphi}A_j)] 
\leq
\frac{1}{d^k} \prod_{l=1}^k (C_0+C_1+\ldots+C_{j-1}) 
= \frac{1}{d^k} \sum_{i_1=0}^{j-1} \ldots \sum_{i_k=0}^{j-1}
C_{i_1} \cdots C_{i_k}.\]
Consider one term in this sum. Let $r$ be the number of $l$ for which 
$i_l=0$.
We apply the arithmetic-geometric mean inequality
\[ \frac{x_1 + x_2 + \cdots + x_k}{k} \geq \sqrt[k]{x_1 x_2 \cdots x_k} \]
to 
\[ x_l=\begin{cases} d^{-\frac{1}{k}} (C_{i_l})^k & \text{if } i_l=0 \cr
d^{\frac{r}{(k-r)k}}(C_{i_l})^k & \text{if } i_l\neq 0 \cr
\end{cases}.\]
(In cases if $r=0$ or $r=k$, we just define $x_l=C_{i_l}$ for all 
$l\in\{1, 2, \ldots, k\}$.) We now upper-bound the coefficients
of $(C_0)^k$ in the resulting sum. For $(C_0)^k$, we have
a contribution of 1 from the term which has $i_1=\ldots=i_k=0$ and
a contribution of at most $d^{-1/k}$ from every other term. 
Since there are at most $j^k$ terms, the coefficient of $(C_0)^k$
is at most 
\[ 1+ j^k d^{-1/k} .\]
The coefficient of $(C_j)^k$ in each term is at most
$d^{\frac{r}{(k-r)k}}$. Since $r\leq k-1$ (because the $r=k$ terms only 
contain
$C_0$'s), we have $d^{\frac{r}{(k-r)k}}\leq d^{\frac{k-1}{k}}$.
The Lemma now follows from there being at most $j^k$ terms.
\qed

\subsection{Main results}

\subsubsection{Haar random case}

We would like to upper-bound
\[ e_{p, d,1}^m =  \frac{1}{d} \sum_{s_1=1}^p \ldots \sum_{s_m=1}^p \bbE[\tr 
(\varphi_{s_1} \ldots \varphi_{s_m})] .\]

\begin{lemma}
\label{lem:main}
\be 
e_{p,d,1}^{m}\leq 
\sum_{l=0}^{m-2} e_{p, d,1}^{l} e_{p, d,1}^{m-l-1} + \frac{p+m^3}{d}e_{p, d,1}^{m-1} .
\ee
\end{lemma}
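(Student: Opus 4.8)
The plan is to apply the one-step reduction \eq{onestep} exactly once to every term of
$e_{p,d,1}^{m}=\frac1d\sum_{\vs\in[p]^m}\bbE[\tr(\varphi_{s_1}\cdots\varphi_{s_m})]$, choosing $\varphi=\varphi_{s_1}$ (the first letter of $\vs$) and letting $j$ be the number of occurrences of $s_1$ in $\vs$. Writing $\vs=s_1 A_1 s_1 A_2\cdots s_1 A_j$, where each block $A_i$ consists of letters $\neq s_1$ and is therefore independent of $\varphi_{s_1}$, the inequality \eq{onestep} bounds $\bbE[\tr(\varphi_{s_1}\cdots\varphi_{s_m})]$ by $\frac1d$ times the sum of a ``disconnected'' term $\bbE[\tr(X_\vs)\tr(Y_\vs)]$, where $X_\vs$ is the string $s_1 A_1 s_1\cdots s_1 A_{j-1}$ of length $m-1-|A_j|$ and $Y_\vs:=A_j$ contains no $s_1$, together with $j-1$ further terms, each of which is the trace of a single string of length $m-1$. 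I would then sort the resulting pieces according to which of the three summands of the bound in \lemref{main} they contribute to, using throughout the fact (proven in the ``Consequences'' subsection above) that every $E_d[\,\cdot\,]$ is a nonnegative real.

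The disconnected terms feed the $\sum_{l=0}^{m-2}e^{l}e^{m-1-l}$ piece. When $j\ge 2$ and $X_\vs,Y_\vs$ share no letter they involve disjoint, hence independent, sets of states, so $\bbE[\tr(X_\vs)\tr(Y_\vs)]=E_d[X_\vs]\,E_d[Y_\vs]$, a product whose string lengths $m-1-l$ and $l:=|Y_\vs|$ satisfy $0\le l\le m-2$. The key point is that $\vs\mapsto(X_\vs,Y_\vs)$ is injective on such $\vs$: one reads off $s_1$ as the first letter of $X_\vs$, recovers $A_1,\dots,A_{j-1}$ by cutting $X_\vs$ at its occurrences of $s_1$, and then $\vs=X_\vs\,s_1\,Y_\vs$. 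Hence the total of these terms is at most $\frac1{d^2}\sum_{l=0}^{m-2}\bigl(\sum_{\vsi_a\in[p]^{m-1-l}}E_d[\vsi_a]\bigr)\bigl(\sum_{\vsi_b\in[p]^{l}}E_d[\vsi_b]\bigr)=\sum_{l=0}^{m-2}e^{l}e^{m-1-l}$, where positivity of $E_d$ lets us pass to the full sums. When instead $j=1$, the disconnected term degenerates, via $\bbE[\varphi_{s_1}]=I/d$, to $\frac1d\bbE[\tr Y_\vs]$ with $Y_\vs$ of length $m-1$; summing over the at most $p$ admissible choices of the unique letter $s_1$ gives at most $\frac{p}{d}e^{m-1}$, the $p/d$ part of the bound.

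The remaining contributions are all single traces of strings of length $m-1$: the $j-1$ rearranged terms (for $j\ge2$), and the disconnected terms with $j\ge2$ in which $X_\vs$ and $Y_\vs$ share some letter $t\neq s_1$, which one collapses to a single length-$(m-1)$ trace using the merge identity \eq{merge}, $\tr(\varphi_t X')\tr(\varphi_t Y')=\tr(\varphi_t X'\varphi_t Y')$. For these I would again use positivity: every string $\vsi''\in[p]^{m-1}$ arises in this way from at most $m^3$ source configurations --- at most $m$ choices for which letter plays the role of $s_1$, at most $m$ for where a copy of $s_1$ (or the shared letter $t$) is reinserted, and at most $m$ for the cyclic cut point introduced by \eq{onestep} --- so the combined total is at most $\frac1{d^2}\cdot m^3\sum_{\vsi''\in[p]^{m-1}}E_d[\vsi'']=\frac{m^3}{d}e^{m-1}$. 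Adding the three bounds gives \lemref{main}.

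The hard part is precisely this last count: to make the constant $m^3$ honest one must carefully track how the cyclic rotation built into \eq{onestep}, together with the possible merge, acts on the string, and then argue that the map sending $(\vs,i)$ (or $(\vs,t)$) to its reduced length-$(m-1)$ string has at most $m^3$ preimages above each target. Everything else --- the injectivity and length bookkeeping for the independent disconnected terms, the $j=1$ degeneration, and the invocations of \eq{onestep} and \eq{merge} --- is routine.
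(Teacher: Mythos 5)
Your proposal is correct and follows essentially the same route as the paper's proof: apply \eq{onestep} once with $\varphi=\varphi_{s_1}$, send the disconnected terms with disjoint letter sets to the convolution sum $\sum_{l}e_{p,d,1}^{l}e_{p,d,1}^{m-1-l}$ (the paper's Case 1), send the $j=1$ terms to $\frac{p}{d}e_{p,d,1}^{m-1}$, and control everything else by a preimage count against $E_{p,d,1}^{m-1}$. The only place your sketch misallocates the budget is in the count you yourself flag as the hard part: the identity of $s_1$ costs nothing (it is the first letter of every target string), but undoing the merge of \eq{merge} genuinely requires \emph{three} independent position parameters --- the two occurrences of the shared letter $t$ that delimit the spliced-in copy of $Y_j$, plus the internal cut point recording how $Y_j$ was cyclically rotated before insertion --- whereas the $j-1$ rearranged terms need only one parameter (the location where the deleted copy of $\varphi_{s_1}$ is reinserted). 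The paper accordingly budgets $m$ for the rearranged terms and $(m-1)^3$ for the merged ones, and uses $m+(m-1)^3\leq m^3$; with that reallocation your argument closes.
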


\proof
In section \ref{sec:main-proof}.
\qed

Using $e_{p, d,1}^0=\tr(I)/d=1$, we can state \lemref{main} equivalently as
\be 
e_{p,d,1}^{m}\leq 
\sum_{l=0}^{m-1} e_{p, d,1}^{l} e_{p, d,1}^{m-l-1} + \l(\frac{p+m^3}{d}-1\r)
e_{p, d,1}^{m-1} . 
\label{eq:SD-recur1}
\ee
Define $\tilde{x}=(p+m^3)/d$ (and note that it is not exactly the same as the variable of the same name in \secref{matt}).  Then \eq{SD-recur1}
 matches the recurrence for the Narayana coefficients in \eq{Nara-recur}.   Thus we have
\begin{corollary}
\be e_{p,d,1}^m \leq \sum_{\ell=1}^m N(m,\ell) \tilde{x}^\ell = \beta_m(\tilde{x})
\leq (1+\sqrt{\tilde{x}})^{2m} \ee
\end{corollary}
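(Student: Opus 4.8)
The plan is to deduce the Corollary from \lemref{main} by induction on $m$, matching the resulting recurrence against the Narayana recurrence \eq{Nara-recur} and then invoking \lemref{beta-comp}.

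Write $\tilde{x}_m := (p+m^3)/d$, so that $m \mapsto \tilde{x}_m$ is nondecreasing. Since all terms in \lemref{main} carry nonnegative coefficients (the coefficient of $e_{p,d,1}^{m-1}$ is $\tilde{x}_m \geq 0$) and all of $e_{p,d,1}^0, e_{p,d,1}^1, \ldots$ are nonnegative — indeed positive, as observed in the positivity discussion following \eq{onestep-exact} — we may substitute any valid upper bounds on the lower-order moments into the right-hand side. I claim that $e_{p,d,1}^m \leq \beta_m(\tilde{x}_m)$ for all $m \geq 0$, with the convention $\beta_0 \equiv 1$ (consistent with $N(0,0) = 1$). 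The cases $m = 0$ ($e_{p,d,1}^0 = \tr(I)/d = 1$) and $m = 1$ ($e_{p,d,1}^1 = p/d \leq (p+1)/d = \tilde{x}_1 = \beta_1(\tilde{x}_1)$) are immediate. For $m \geq 2$, assume the bound for all smaller indices. Because $\tilde{x}_l \leq \tilde{x}_m$ for $l \leq m$ and $\beta_l$ has nonnegative coefficients, the inductive hypothesis upgrades to $e_{p,d,1}^l \leq \beta_l(\tilde{x}_l) \leq \beta_l(\tilde{x}_m)$ for every $l < m$. Feeding this into \lemref{main} gives
$$ e_{p,d,1}^m \leq \sum_{l=0}^{m-2} \beta_l(\tilde{x}_m)\,\beta_{m-l-1}(\tilde{x}_m) + \tilde{x}_m\,\beta_{m-1}(\tilde{x}_m). $$

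It remains to recognize the right-hand side as $\beta_m(\tilde{x}_m)$. Extracting the coefficient of $y^m$ from the generating-function identity \eq{Nara-gf-recur}, $F = 1 + xyF + y(F^2 - F)$, and peeling off the $l = m-1$ term $\beta_{m-1}\beta_0 = \beta_{m-1}$ from the convolution, one gets, for every $m \geq 1$,
$$ \beta_m(x) = \sum_{l=0}^{m-2} \beta_l(x)\,\beta_{m-l-1}(x) + x\,\beta_{m-1}(x), $$
which is just \eq{Nara-recur} rewritten. Comparing the two displays yields $e_{p,d,1}^m \leq \beta_m(\tilde{x}_m)$, closing the induction; \lemref{beta-comp} then supplies $\beta_m(\tilde{x}_m) \leq (1+\sqrt{\tilde{x}_m})^{2m}$, which completes the proof.

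The only point requiring any care is the $m$-dependence of $\tilde{x}$, which is dispatched by the monotonicity of $\beta_l$; I do not anticipate a genuine obstacle here. It is worth noting that this is exactly why \lemref{main} is phrased with the truncated sum $\sum_{l=0}^{m-2}$ and the clean coefficient $\frac{p+m^3}{d}$, rather than the mathematically equivalent form \eq{SD-recur1}, whose $(\tilde{x}-1)$ coefficient can be negative: with every coefficient nonnegative, substituting upper bounds for the lower-order moments is unconditionally valid.
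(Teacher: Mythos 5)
Your proof is correct and follows essentially the same route as the paper: it matches the recurrence of \lemref{main} against the Narayana recurrence \eq{Nara-recur} and closes with \lemref{beta-comp}. You are in fact more careful than the paper's one-line assertion, correctly handling both the $m$-dependence of $\tilde{x}$ (via monotonicity of $\beta_\ell$ in its argument) and the sign issue that makes the all-nonnegative form of \lemref{main}, rather than \eq{SD-recur1}, the right one to induct on.
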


Similar arguments (which we omit) establish the lower bound $e_{p,d,1}^m \geq \sum_\ell N(m,\ell)(p)_\ell / (d+m)^\ell$, which is only slightly weaker than the bound stated in \thm{trace} and proved in \lemref{complete-reduce}.

\subsubsection{Tensor product case}

The counterpart of Lemma \ref{lem:main} is

\begin{lemma}
\label{lem:main-tensor}
\ba
e_{p, d,k}^{m} & \leq 
\l(1+\frac{m^k}{d^{1/k}}\r)
\sum_{l=0}^{m-2} e_{p,d,k}^{l} e_{p,d,k}^{m-l-1} + 
\left( \frac{p}{d^k} + \frac{3 m^{k+3}}{d^{1/k}} \right) e_{p,d,k}^{m-1}  \\
 & = \l(1+\frac{m^k}{d^{1/k}}\r)
\sum_{l=0}^{m-1} e_{p,d,k}^{l} e_{p,d,k}^{m-l-1} + 
\left( \frac{p}{d^k} + 3\frac{m^{k+3}}{d^{1/k}}  - 
\l(1 + \frac{m^k}{d^{1/k}}\r) \right) e_{p,d,k}^{m-1}  \\
\ea
\end{lemma}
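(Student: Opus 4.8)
The plan is to run the proof of \lemref{main} with the tensor-product elimination \lemref{tensor} in place of the $k=1$ identity \eq{onestep}; the overall structure of the argument is unchanged, so I indicate only where \lemref{tensor} enters and how it affects the constants. Start from $e_{p,d,k}^m=d^{-k}\sum_{\vs\in[p]^m}\bbE[\tr(\varphi_{s_1}\cdots\varphi_{s_m})]$ and, in each summand, single out the letter $a:=s_1$, of multiplicity $j=j(\vs)\le m$ in $\vs$. Using $\varphi_a^2=\varphi_a$ and cyclicity of the trace, rewrite $\tr(\varphi_{s_1}\cdots\varphi_{s_m})=\tr(\varphi_aB_1\varphi_aB_2\cdots\varphi_aB_j)$, where $B_1,\dots,B_j$ are the blocks of the remaining states lying between successive occurrences of $a$; these are tensor products of $d$-dimensional matrices and are independent of $\varphi_a$, so \lemref{tensor} applies. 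When $j=1$ this is just the ``remove a unique letter'' step $\bbE[\varphi_a]=I/d^k$.

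\lemref{tensor} produces a ``split'' term with coefficient $\frac{1+j^kd^{-1/k}}{d}$ and at most $j-1<m$ ``merge'' terms with coefficient $\frac{j^k}{d^{1/k}}$; bounding $j^k\le m^k$ these become $\frac1d(1+m^kd^{-1/k})$ and $\frac{m^k}{d^{1/k}}$, to be compared with the $k=1$ coefficients $\frac{1}{d+j-1}\le\frac1d$ and $\frac1d$. Then, exactly as in \lemref{main}: (i) the $j=1$ contributions, summed over the $\le p$ choices of $a$ and over all other letters, total at most $\frac{p}{d^k}e_{p,d,k}^{m-1}$; (ii) the split terms, after re-indexing the cyclic word at the last occurrence of $a$ (prefix of length $m-1-l\ge1$, hence $l\le m-2$, and suffix of length $l$) and after the same positivity and letter-merging-monotonicity estimates used in the $k=1$ proof to pass from the correlated product $\bbE[\tr(\varphi_aB_1\cdots\varphi_aB_{j-1})\tr(B_j)]$ to the convolution, reassemble into at most $(1+m^kd^{-1/k})\sum_{l=0}^{m-2}e_{p,d,k}^{l}e_{p,d,k}^{m-l-1}$; (iii) the merge terms each equal $\bbE[\tr(\cdot)]$ for a single length-$(m-1)$ string ($\vs$ with one copy of $a$ deleted), and since at most $m^3$ of the eliminating operations feed any fixed such string (the count from \lemref{main}), their total is at most $\frac{3m^{k+3}}{d^{1/k}}e_{p,d,k}^{m-1}$. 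Summing (i)--(iii) gives the first displayed inequality, and the second is obtained by moving the $l=m-1$ term of the convolution into the $e_{p,d,k}^{m-1}$ coefficient using $e_{p,d,k}^0=1$.

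The structurally new feature relative to \lemref{main} is the multiplicity factor $j^k$ in \lemref{tensor}: since $j$ can be as large as $m$, the main thing to verify is that it costs only the stated $m^k$-type factors and turns $1/d$ into $1/d^{1/k}$, uniformly in $j$. Beyond that, the real work is the bookkeeping inherited from the $k=1$ case: checking that summing the correlated product $\bbE[\tr(\varphi_aB_1\cdots\varphi_aB_{j-1})\tr(B_j)]$ over all $\vs$ genuinely reassembles into $\sum_l e_{p,d,k}^{l}e_{p,d,k}^{m-l-1}$ --- this is where the relabelling of the suffix and, crucially, the direction of the merging-monotonicity inequality must be handled with care --- and confirming that the combinatorial multiplicities appearing in the merge step stay polynomial in $m$ and do not compound with the $1+j^kd^{-1/k}$ prefactors.
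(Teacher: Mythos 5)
Your overall route is exactly the paper's: rerun the proof of \lemref{main} with \lemref{tensor} substituted for \eq{onestep}, track how each of the four terms in \eq{almosm-final} gets rescaled by the new coefficients $\frac{1+j^kd^{-1/k}}{d^k}$ and $\frac{j^k}{d^{1/k}}$ with $j^k\leq m^k$, and then fold the leftover terms into the coefficient of $e_{p,d,k}^{m-1}$. Your item (i), the bound $j-1<m$ on the number of merge terms, and the final algebraic step producing the second displayed line all match the paper.

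The one place your accounting goes wrong is the boundary between (ii) and (iii). The split term $\bbE[\tr(\varphi A_1\cdots\varphi A_{j-1})\tr(A_j)]$ factors into a product of expectations --- and hence reassembles into the convolution $\sum_l e_{p,d,k}^{l}e_{p,d,k}^{m-l-1}$ --- \emph{only} when no letter of $Y_j$ occurs among $Y_1,\ldots,Y_{j-1}$ (the paper's Case 1, where monotonicity in $p$ gives $E_{p-o-1,d,k}^{l}\leq E_{p,d,k}^{l}$). When a letter is shared (the paper's Case 2), the expectation of the product of traces does not factor and no monotonicity estimate passes it to the convolution; instead the paper merges the two traces back into a single trace of a length-$(m-1)$ string and observes that each such string arises in at most $(m-1)^3$ ways (three parameters: the two positions of the shared letter and the split point of $Y_j$). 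That $(m-1)^3$ --- not the merge-term multiplicity, which is only $m$ (one choice of boundary location) --- is the true source of the $m^3$ in the final $3m^{k+3}/d^{1/k}$. So as written, (ii) asserts something false for the correlated split terms, and (iii) attributes the $m^3$ count to the wrong family of terms. The fix is to keep the correlated split terms as a separate fourth family contributing $\l(1+\frac{m^k}{d^{1/k}}\r)\frac{(m-1)^3}{d^k}e_{p,d,k}^{m-1}$, which together with the genuine merge terms' $\frac{m^{k+1}}{d^{1/k}}e_{p,d,k}^{m-1}$ is still at most $\frac{3m^{k+3}}{d^{1/k}}e_{p,d,k}^{m-1}$, so the stated bound survives.
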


This time we set $\tilde{x}_k = \frac{p}{d^k} + 3\frac{m^{k+3}}{d^{1/k}}$.  Also define $\gamma = m^k / d^{1/k}$.  Then \lemref{main-tensor} implies that $e_{p,d,k}^m \leq (1+\gamma)^m [y^m] \tilde{F}(\tilde{x}_k,y)$, where $\tilde{F}$ satisfies the recurrence
\be \tilde{F} = 1 + y \tilde{F}^2 + y\l(\frac{\tilde{x}_k}{1+\gamma} - 1\r) \tilde{F}.
\label{eq:SD-gf-recur}\ee
Thus we obtain
\begin{corollary}
\ba e_{p,d,k}^m &\leq (1+\gamma)^m \beta_m\l(\frac{\tilde{x}_k}{1+\gamma}\r)
\label{eq:SD-penultimate}\\
&\leq \l( \frac{\tilde{x}_k}{x}\r)^m \beta_m(x) 
\leq \exp\l(\frac{3m^{k+4}}{xd^{1/k}}\r)\beta_m(x) 
\label{eq:SD-final-ineq}
\ea
\end{corollary}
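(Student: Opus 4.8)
The plan is to recognize the recursion in \lemref{main-tensor} as a mildly perturbed Narayana recursion and read off the closed-form bound. Write $\gamma = m^k/d^{1/k}$, $\tilde x_k = p/d^k + 3m^{k+3}/d^{1/k}$, and set $f_m := e_{p,d,k}^m/(1+\gamma)^m$. Dividing the first line of \lemref{main-tensor} by $(1+\gamma)^m$ turns it into the homogeneous recursion
\be f_m \;\leq\; \sum_{l=0}^{m-2} f_l\,f_{m-l-1} \;+\; \hat x\, f_{m-1}, \qquad \hat x := \frac{\tilde x_k}{1+\gamma},\ee
with $f_0 = e_{p,d,k}^0 = \tr(I)/d^k = 1$. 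It helps to keep the convolution running only to $m-2$ rather than to $m-1$ (where the linear coefficient would instead be $\hat x-1$, which can be negative): here the coefficient $\hat x$ is manifestly positive, which is what makes the ensuing induction monotone.

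Next I would show by induction on $m$ that $f_m \leq \beta_m(\hat x)$. Collecting the coefficient of $y^m$ in the generating-function identity \eq{Nara-gf-recur} and using $\beta_0=1$ shows that the Narayana polynomials satisfy $\beta_m(\hat x) = \sum_{l=0}^{m-2}\beta_l(\hat x)\beta_{m-l-1}(\hat x) + \hat x\,\beta_{m-1}(\hat x)$ \emph{with equality}. So, assuming $f_l \leq \beta_l(\hat x)$ for all $l<m$, nonnegativity of all the $f_l$ and $\beta_l$ together with $\hat x>0$ gives $f_m\leq\beta_m(\hat x)$ immediately. Undoing the substitution yields $e_{p,d,k}^m \leq (1+\gamma)^m\beta_m(\hat x)$, which is \eq{SD-penultimate}; equivalently, this is the assertion $e_{p,d,k}^m \leq (1+\gamma)^m[y^m]\tilde F(\tilde x_k,y)$ with $\tilde F$ as in \eq{SD-gf-recur}.

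For \eq{SD-final-ineq} I would compare coefficient by coefficient. Since $\beta_m$ has nonnegative coefficients and degree at most $m$, and $(1+\gamma)\hat x = \tilde x_k$, the degree-$\ell$ term of $(1+\gamma)^m\beta_m(\hat x)$ is $N(m,\ell)(1+\gamma)^{m-\ell}\tilde x_k^{\ell}$, while that of $(\tilde x_k/x)^m\beta_m(x)$ is $N(m,\ell)(\tilde x_k/x)^{m-\ell}\tilde x_k^{\ell}$; since $m-\ell\geq 0$ it suffices that $1+\gamma\leq \tilde x_k/x$, i.e. $xm^k\leq 3m^{k+3}$, which holds whenever $x\leq 3m^3$ (in particular for every $m\geq 1$ in the regime $x=\Theta(1)$ of primary interest; the finitely many small $m$ with $x>3m^3$ can be handled directly, or by one of the other two proofs). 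The second inequality of \eq{SD-final-ineq} is then immediate from $\tilde x_k/x = 1 + 3m^{k+3}/(xd^{1/k})$ and $(1+t)^m\leq e^{mt}$. Finally, fixing $k$ and $x$ and letting $d\to\infty$, both $\gamma$ and $3m^{k+4}/(xd^{1/k})$ tend to $0$, so combining this upper bound with the matching lower bound from \lemref{complete-reduce} recovers $\lim_{d\to\infty}e_{p,d,k}^m=\beta_m(x)$.

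The genuinely substantive input is \lemref{main-tensor} itself --- the tensor-product Schwinger--Dyson elimination step and the arithmetic--geometric-mean bookkeeping producing the $1+\gamma$ prefactor and the $3m^{k+3}/d^{1/k}$ correction --- and that is proved separately, so I would assume it. Given it, the only places above needing care are lining up the index ranges so the perturbed recursion matches \eq{Nara-recur} exactly, tracking the sign of the linear coefficient so the induction stays monotone, and checking the mild condition $x\leq 3m^3$ in the coefficient-by-coefficient comparison; none of these is a serious obstacle.
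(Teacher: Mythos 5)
Your proof is correct and follows essentially the same route as the paper: divide the recursion of \lemref{main-tensor} by $(1+\gamma)^m$ so that it matches the Narayana recurrence \eq{Nara-recur}, conclude $e_{p,d,k}^m \leq (1+\gamma)^m\beta_m(\tilde x_k/(1+\gamma))$ by induction (the paper phrases this via the generating function $\tilde F$ of \eq{SD-gf-recur}), and then compare with $(\tilde x_k/x)^m\beta_m(x)$ coefficient by coefficient. You are in fact more careful than the paper here: the middle inequality of \eq{SD-final-ineq} does require $1+\gamma\leq\tilde x_k/x$, i.e.\ $x\leq 3m^3$, a condition that the paper's one-line justification ($\beta_m(x(1+\eps))\leq(1+\eps)^m\beta_m(x)$ for $\eps\geq 0$, applied with $x(1+\eps)=\tilde x_k/(1+\gamma)$) also needs but does not state.
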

\begin{proof}
\eq{SD-penultimate} follows from the preceding discussion as well as the relation between $\beta_m$ and the recurrence \eq{SD-gf-recur}, which was discussed in \secref{complete-reduce}  and in \cite{Nara1, Nara2}.
The first inequality in \eq{SD-final-ineq} is because $\beta_m(x(1+\eps)) \leq (1+\eps)^m\beta_m(x)$ for any $\eps\geq 0$, which in turn follows from the fact that $\beta_m(x)$ is a degree-$m$ polynomial in $x$ with nonnegative coefficients.  The second inequality follows from the inequality $1+\eps\leq e^{\eps}$.
\end{proof}
\subsection{Proofs}

\subsubsection{Proof of Lemma \ref{lem:main}}
\label{sec:main-proof}

We divide the terms $\bbE[\tr({\varphi}_{s_1}\ldots {\varphi}_{s_m})]$ into several
types.

First, we consider terms for which $s_1\notin\{s_2, \ldots, s_m\}$.
Then, ${\varphi}_{s_1}$ is independent from ${\varphi}_{s_2}\ldots {\varphi}_{s_m}$.
Because of linearity of expectation , we have
\[ \bbE[\tr({\varphi}_{s_1}\ldots {\varphi}_{s_m})]=
\tr(\bbE[{\varphi}_{s_1}] \bbE[{\varphi}_{s_2}\ldots {\varphi}_{s_m}]) = \tr \left( \frac{I}{d} 
\bbE[{\varphi}_{s_2}\ldots {\varphi}_{s_m}] \right) = \frac{1}{d} \bbE[\tr({\varphi}_{s_2} \ldots 
{\varphi}_{s_m})] .\]
By summing over all possible $s_1\in[p]$, the sum of all 
terms of this type is $\frac{p}{d}$ times
the sum of all possible $\bbE[\tr({\varphi}_{s_2} \ldots {\varphi}_{s_m})]$ with $s_1\notin 
\{s_2, \ldots, s_m\}$, i.e., $\frac{p}{d}$ times $E_{p-1, d,1}^{m-1}$.

For the other terms, we can express them as 
\begin{equation}
\label{eq:type2} 
\bbE[\tr({\varphi}_{s_1} Y_1 {\varphi}_{s_1} Y_2\ldots {\varphi}_{s_1} Y_j)] 
\end{equation}
with $Y_1, \ldots, Y_j$ being products of ${\varphi}_i$ for $i\neq s_1$.
(Some of those products may be empty, i.e. equal to $I$.)

To simplify the notation, we denote ${\varphi}={\varphi}_{s_1}$.
Because of (\ref{eq:onestep}), (\ref{eq:type2}) is less than or equal to
\begin{equation}
\label{eq:newsum}
 \frac{1}{d} \left( \sum_{i=1}^{j-1} \bbE[\tr({\varphi} Y_1 {\varphi}  
\ldots Y_{i-1} Y_i {\varphi} \ldots {\varphi} Y_j] +
 \bbE[\tr({\varphi} Y_1 {\varphi} Y_2\ldots {\varphi} Y_{j-1}) \tr(Y_j)] \right) 
\end{equation}
We handle each of the two terms in \eq{newsum} separately.
For each the term in the sum, we will upper-bound the sum of them all
(over all $\bbE[\tr({\varphi} Y_1 {\varphi} Y_2\ldots {\varphi} Y_j)]$)
by $\frac{1}{d}E_{p,d,1}^{m-1}$ times the maximum number of times
the same term can appear in the sum.

Therefore, we have to answer the question: given a term
$\bbE[\tr(Z_1 \ldots Z_{m-1})]$, what is the maximum number of ways
how this term can be generated as $\bbE[\tr({\varphi} Y_1 {\varphi}
\ldots Y_{i-1} Y_i {\varphi} \ldots {\varphi} Y_j]$?

Observe that ${\varphi}=Z_1$. Thus, given $Z_1 \ldots Z_{m-1}$,
${\varphi}$ is uniquely determined.  Furthermore, there are at most
$m$ locations in $Z_1 \ldots Z_{m-1}$ which could be the boundary
between $Y_{i-1}$ and $Y_i$. The original term 
$\bbE[\tr({\varphi} Y_1 \ldots {\varphi} Y_i)]$ can then be recovered by adding 
${\varphi}$ in that location. Thus, each term can be generated in at most $m$ 
ways
and the sum of them all is at most $\frac{m}{d} E_{p, d,1}^{m-1}$.

It remains to handle the terms of the form
\begin{equation} 
\label{eq:lasm-term}
\bbE[\tr({\varphi} Y_1 {\varphi} Y_2\ldots {\varphi} Y_{j-1}) \tr(Y_j)] .
\end{equation}
We consider two cases:

{\bf Case 1:}
There is no ${\varphi}_i$ which occurs both in $Y_j$ and in at least one
of $Y_1, \ldots, Y_{j-1}$. 
Then, the matrix valued random variables ${\varphi} Y_1 {\varphi} Y_2\ldots {\varphi} Y_{j-1}$ 
and $Y_j$ are independent.
Therefore, we can rewrite (\ref{eq:lasm-term}) as
\begin{equation}
\label{eq:lasm-term1} 
\bbE[\tr({\varphi} Y_1 {\varphi} Y_2\ldots {\varphi} Y_{j-1})] \bbE[\tr(Y_j)] .
\end{equation}
Fix $Y_1, \ldots, Y_{j-1}$.
Let $l$ be the length of $Y_j$ and let $o$ be the number of different 
${\varphi}_i$ that occur in $Y_1\ldots Y_{j-1}$.
Then, there are $p-o-1$ different ${\varphi}_i$s which can occur in $Y_j$
(i.e., all $p$ possible ${\varphi}_i$s, except for ${\varphi}_{s_1}$ and those $o$
which occur in $Y_1\ldots Y_{j-1}$).

Therefore, 
the sum of $\bbE[\tr(Y_j)]$ over all possible $Y_j$ is 
exactly $E_{p-o-1, d,1}^{l}$.
We have $E_{p-t, d,1}^{l}\leq E_{p-o-1, d,1}^{l}\leq E_{p, d,1}^{l}$.
Therefore, the sum of all terms (\ref{eq:lasm-term1}) in which $Y_j$ is 
of length $l$ is lower-bounded by the sum of all
\[ E_{p-t, d,1}^{l} \bbE[\tr({\varphi} Y_1 {\varphi} Y_2\ldots {\varphi} Y_{j-1})] \]
which is equal to $E_{p-t, d,1}^{l} E_{p, d,1}^{m-l-1}$.
Similarly, it is upper-bounded by $E_{p, d,1}^{l} E_{p,d,1}^{m-l-1}$.

{\bf Case 2:}
There exists ${\varphi}_i$ which occurs both in $Y_j$ and in 
some $Y_l$, $l\in\{1, \ldots, j-1\}$.

We express $Y_j=Z {\varphi}_i W$ and $Y_l=Z'{\varphi}_iW'$.
Then, (\ref{eq:lasm-term}) is equal to
\[ \bbE[\tr({\varphi} Y_1 \ldots Y_{l-1} {\varphi} Z {\varphi}_i W {\varphi} Y_{l+1} \ldots {\varphi} Y_{j-1} )
\tr(Z'{\varphi}_iW') =\]
\[ \bbE[\tr({\varphi} Y_1 \ldots Y_{l-1} {\varphi} Z {\varphi}_i W' Z' {\varphi}_i W {\varphi}
Y_{l+1} \ldots {\varphi} Y_{j-1} ) ] .\]
In how many different ways could this give us the same term
$\bbE[\tr(Z_1 \ldots Z_{m-1})]$?

Given $Z_1, \ldots, Z_{m-1}$, we know ${\varphi}=Z_1$. 
Furthermore, we can recover $Y_j$ by specifying the location of the first 
${\varphi}_i$, the second ${\varphi}_i$ and the location where $W'$ ends and $Z'$ 
begins.
There are at most $m-1$ choices for each of those three parameters.
Once we specify them all, we can recover the original term
(\ref{eq:lasm-term}). Therefore, the sum of all terms
(\ref{eq:lasm-term}) in this case is at most $(m-1)^3$ times the sum
of all $\bbE[\tr(Z_1 \ldots Z_{m-1})]$, which is equal to $E_{p,d,1}^{m-1}$.

Overall, we get
\begin{equation}
\label{eq:almosm-final} 
E_{p, d,1}^{m}\leq \frac{p}{d} E_{p-1, d,1}^{m-1}+\frac{m}{d} 
E_{p, d,1}^{m-1} + \sum_{l=0}^{m-2} E_{p, d,1}^{l} E_{p, d,1}^{m-l-1} 
+ \frac{(m-1)^3}{d} E_{p, d,1}^{m-1} ,
\end{equation}
with the first term coming from the terms where $s_1\notin\{s_2, \ldots, 
s_k\}$, the second term coming from the  bound on the sum in 
(\ref{eq:newsum}) and the third and the fourth terms coming from
Cases 1 and 2.
By combining the terms, we can rewrite (\ref{eq:almosm-final}) as
\be E_{p,d,1}^{m}\leq \frac{p+m^3}{d} E_{p,d,1}^{m-1} + \frac{1}{d} 
\sum_{l=0}^{m-2} E_{p,d,1}^{l} E_{p,d,1}^{m-l-1} .
\label{eq:SD-ub-pf}\ee
Dividing \eq{SD-ub-pf} by $d$  completes the proof.

We remark as well that these techniques can yield a lower bound for $E_{p,d,1}$.   To do so, we apply the inequality $1/(d+j-1)\geq 1 /(d+m)$ to \eq{onestep-exact}, and then combine the lower bounds from
the $s_1\notin\{s_2, \ldots, s_k\}$ case and Case 1.
(For the other cases, we can use 0 as the lower bound,
since we know that the expectation of any product of traces is positive.)  This yields
\be  \frac{d}{d+m}\l(
\sum_{l=0}^{m-2} e_{p, d,1}^{l} e_{p-t, d,1}^{m-l-1} + \frac{p}{d} e_{p-1,d,1}^{m-1} \r)
\leq e_{p,d,1}^{m} .
\label{eq:SD-lb-pf}\ee

\subsubsection{Proof of Lemma \ref{lem:main-tensor}}

The proof is the same as for Lemma \ref{lem:main-tensor}, except that, 
instead
of (\ref{eq:onestep}) we use Lemma \ref{lem:tensor}.

The first term in (\ref{eq:almosm-final}), $\frac{p}{d^k} E_{p-1, d,k}^{m-1}$, 
remains unchanged. The terms $\bbE[\tr ({\varphi} Y_1 \ldots Y_{i-1} Y_i {\varphi} \ldots {\varphi} 
Y_j)]$
in (\ref{eq:newsum}) 
are now multiplied by $\frac{j^k}{d^{1/k}}$ instead of $\frac{1}{d}$. 
We have $\frac{j^k}{d^{1/k}} \leq \frac{m^k}{d^{1/k}}$.
Therefore, the second term in (\ref{eq:almosm-final}) changes from $
\frac{m}{d} E_{p, d,1}^{m-1}$ to
$\frac{m^{k+1}}{d^{1/k}} E_{p,d,k}^{m-1}$. 

The terms $\bbE[\tr({\varphi} Y_1 \ldots \varphi Y_{j-1}) \tr(Y_j)]$ in (\ref{eq:newsum})
acquire an additional
factor of $1+\frac{j^k}{d^{1/k}}\leq 1+\frac{m^k}{d^{1/k}}$.
This factor is then acquired by the third and the fourth terms in 
(\ref{eq:almosm-final}).
Thus, we get 
\[ E_{p,d,k}^{m}\leq \frac{p}{d^k} E_{p-1,d,k}^{m-1}+\frac{m^{k+1}}{d^{1/k}} 
E_{p,d,k}^{m-1} + \left(1+\frac{m^k}{d^{1/k}}\right) \sum_{l=0}^{m-2} 
E_{p,d,k}^{l} E_{p, d,1}^{m-l-1} 
+ \left(1+\frac{m^k}{d^{1/k}}\right)  \frac{(m-1)^3}{d} E_{p,d,k}^{m-1} .\]

The lemma now follows from merging the second term with the fourth term. 

\subsection{Relation to combinatorial approach}
The recursive approach of this section appears on its face to be quite different from the diagrammatic and combinatorial methods discussed earlier.  However, the key recursive step in \eq{rewrite} (or equivalently \eq{onestep-exact}) can be interpreted in terms of the sorts of sums over permutations seen in \secref{aram}.

Consider an expression of the form  $X= \bbE[\tr(\varphi A_1 \varphi A_2 \ldots \varphi A_j)]$.  For the purposes of this argument, we will ignore the fact that $A_1,\ldots, A_j$ are random variables.  Letting $C_j$ denote the $j$-cycle, we can rewrite $X$ as
$$X=\tr (C_{j} \bbE[\varphi^{\ot j}] (A_1\ot  A_2 \ot \cdots \ot A_j))
= \tr (\bbE[\varphi^{\ot j}] (A_1\ot  A_2 \ot \cdots \ot A_j)),$$
since $C_j\ket{\varphi}^{\ot j} = \ket{\varphi}^{\ot j}$.
 Next we apply
\eq{Schur-average} and obtain
$$X  =
\frac{\sum_{\pi\in \cS_j} \tr (\pi(A_1\ot  A_2 \ot \cdots \ot A_j))}{d(d+1)\cdots(d+j-1)}. $$
We will depart here from the approach in \secref{aram} by rewriting
the sum over $\cS_j$.  For $1\leq i\leq j$, let $(i,j)$ denote the
permutation that exchanges positions $i$ and $j$, with $(j,j)=e$
standing for the identity permutation.  We also define
$\cS_{j-1}\subset \cS_j$ to be the subgroup of permutations of the
first $j-1$ positions.  Since $(1,j), \ldots, (j-1,j),(j,j)$ are a
complete set of coset representatives for $\cS_{j-1}$, it follows that
any $\pi\in \cS_j$ can be uniquely expressed in the form $(i,j) \pi'$
with $1\leq j$ and $\pi'\in \cS_{j-1}$.  Our expression for $X$ then
becomes 
\ban X &= \frac{1}{d+j-1} \tr\l( \sum_{i=1}^{j} (i,j) \frac{\sum_{\pi'\in \cS_{j-1}} \pi'
(A_1\ot  A_2 \ot \cdots \ot A_j)}
{d(d+1)\cdots(d+j-2)}\r). \\
& = 
\frac{1}{d+j-1} \bbE\l[\tr\l( \sum_{i=1}^{j} (i,j) (\varphi^{\ot j-1} \ot I)
(A_1\ot  A_2 \ot \cdots \ot A_j) \r)\r]
\\ & = 
\frac{1}{d+j-1} \bbE\l[\tr\l(\varphi A_1 \varphi A_2 \cdots \varphi A_{j-1}\r) \tr(A_j)
 + \sum_{i=1}^{j-1} \tr(\varphi A_1)\cdots\tr(\varphi A_{i-1})\tr(A_j\varphi A_i)
 \tr(\varphi A_{i+1})\cdots\tr(\varphi A_{j-1})\r],
\ean
which matches the expression in \eq{rewrite}, or equivalently, \eq{onestep-exact}.

The difference in approaches can then be seen as stemming from the different ways of summing over $\pi\in \cS_j$.  In \secref{aram} (and to some extent, \secref{matt}), we analyzed the entire sum by identifying leading-order terms and deriving a perturbative expansion that accounted for all the other terms.  By contrast, the approach of this section is based on reducing the sum over $\cS_j$ to a similar sum over $\cS_{j-1}$. 

\section{Lower bounds on the spectrum}
The bulk of our paper has been concerned with showing that $\|M\|$ is
unlikely to be too large (\cor{eig-LD}).  Since we give asymptotically
sharp bounds on $d^{-k} \bbE[\tr M^m]$, we in fact obtain
asymptotically convergent estimates of the eigenvalue density of $M$
(\cor{measureconverge}).  However, this does not rule out the
possibility that a single eigenvalue of $M$ might be smaller than
$(1-\sqrt{x})^2$; rather, it states that the expected number of such
eigenvalues is $o(d^k)$. 

In fact, our method was successful in proving asymptotically sharp
estimates on the largest eigenvalue of $M$.  We now turn to proving
bounds on the smallest eigenvalue of $M$.  To use the trace method to
show that w.h.p. there are {\em no} small eigenvalues, one would like
to upper bound expressions such as $\bbE[\tr(M-\lambda I)^{2m}]$, for
an appropriate choice of $\lambda$.  If we succeed in bounding such an
expression then the $\lambda_{\min}$ (the smallest eigenvalue of $M$) is
lower bounded by
\be
\bbE[(\lambda-\lambda_{\min})^2] \leq \Bigl( \bbE[\tr(M-\lambda I)^{2m}] \Bigr)^{1/m},
\ee
and hence
\be
\bbE[\lambda_{\min}] \geq  \lambda-\Bigl( \bbE[\tr(M-\lambda I)^{2m}] \Bigr)^{1/2m}.
\ee

Let us first describe a failed attempt to bound this result, before giving the correct approach.
To bound
$\bbE[\tr(M-\lambda I)^{2m}]$, the natural first attempt is to use the expansion
\be
\bbE[\tr(M-\lambda I)^{2m}]=\sum_{n=0}^{2m}  \binom{2m}{n} \bbE[\tr(M^n)]
\Bigl(-\lambda\Bigr)^{2m-n}.
\label{eq:alt-exp}
\ee
One might then attempt to estimate each term in the above expansion in turn.  Unfortunately, what happens is the following: the leading
order (rainbow) terms for $\bbE[\tr(M^n)]$ can be summed directly over $n$.  One may show that this sum contributes a result to
$\bbE[\tr(M-\lambda I)^{2m}]$ which grows roughly as ${\rm max}\{((\sqrt{x}-1)^2-\lambda)^{2m},((\sqrt{x}+1)^2-\lambda)^{2m}\}$.
That is, it is dominated by either the largest or smallest eigenvalue of the limiting distribution, depending on the value of
$\lambda$.  However, we are unable to control the corrections to this result.  While they are suppressed in powers of $1/d$, they grow
rapidly with $m$ due to the binomial factor, causing this attempt to fail.

We now describe a simple alternate approach.  Let us work within the
Feynman diagram framework.  By \eq{M-op-ineq}, the spectrum of
$M_{p,d,k}$ is close to that of $\hat{M}_{p,d,k}$ with high
probability, so we can translate bounds on $\lambda_{\min}$ in the  
Gaussian ensemble to bounds on the smallest eigenvalue in the
normalized  ensemble.

Having reduced to the Gaussian ensemble, we now construct a diagrammatic series for
$\bbE[\tr(\hat M-\lambda I)^{2m}]$.  
One way to construct such a diagrammatic series is to add in extra diagrams, in which rather
than having $m$ pairs of vertices, we instead have $n$ pairs of vertices, interspered with $m-n$ ``identity operators'', where nothing happens:
the solid lines simply proceed straight through.
However, there already is a particular contraction in our existing diagrammatic series in which
solid lines proceed straight through.  This is a particular contraction of neighboring vertices, in which
a dashed line connects the two vertices and {\it all}
vertical lines leaving the two vertices are connected to each other.  
So, we can obtain the same result by using our original diagrammatic expansion, but with a change in the rules for weighting diagrams.
If a a  diagram has a certain number, $c$, of pairs of
neighboring vertices contracted in the given way, then we adjust the weight of the diagram by
\be
\Bigl( \frac{d^{-k} p-\lambda}{d^{-k} p} \Bigr)^c
 = \frac{x-\lambda}{x}.
\ee
If $d^{-k}p-\lambda\geq 0$, then this new series consists only of positive terms and we can use our previous techniques
for estimating the series, bounding it by the sum of rainbow diagrams, plus higher order corrections.  The sum of rainbow diagrams
changes in this approach.  One could use a new set of generating functionals to evaluate the new sum of rainbow diagrams, but we
can in fact find the result more directly: we can directly use the fact that this sum is bounded by
$d^k \max\{((\sqrt{x}-1)^2-\lambda)^{2m},((\sqrt{x}+1)^2-\lambda)^{2m}\}$.  The corrections remain small.
Taking the smallest value of $\lambda$ such that
$x-\lambda\geq 0$, we have $\lambda=x$, and so we find that, for $x>1$, the sum of these diagrams is bounded by
$d^k (2\sqrt{x}+1)^{2m}$.  This gives us a bound that, for any $\epsilon>0$,
the expectation value for the smallest eigenvalue is asymptotically greater than
\be
x-2\sqrt{x}-1-\epsilon,
\ee
and hence using concentration of measure arguments and the above reduction to the Gaussian ensemble,
we can then show that, for any $\epsilon>0$, with high probability, the smallest eigenvalue of a matrix chosen randomly from the uniform ensemble is greater
than or equal to
$x-2\sqrt{x}-1-\epsilon$.

On the other hand, if $x<1$, then we will need to instead consider
$\bbE[{\rm tr}(\hat M'-\lambda I)^{2m}]$ where $\hat M'$ is the Gram matrix
of the ensemble.  Since $\hat M'$ has the same spectrum as $\hat M$
but is only $p\times p$, all of the terms in \eq{alt-exp} are
identical except that $\tr I$ equals $p$ instead of $d^k$.  We can use
a similar diagrammatic technique to incorporating the identity terms.
Now each term of $\hat{M}$ contributes the pair of vertices from
\fig{diagram}(a), but in the opposite order.  Along the horizontal,
the solid lines are the internal lines and the dashed lines are
external.  Now the identity diagrams correspond to the case when the
{\em dashed} lines proceed straight through.  These components of a
diagram initially had a contribution of 1 (with $k$ closed solid loops
canceling the natural $d^{-k}$ contribution from each pair of
vertices).  Thus, adding in the $-\lambda I$ terms results in a
multiplicative factor of $(1-\lambda)$ for each vertex pair with the
configuration where the dashed lines go straight through.  Now we can
choose $\lambda$ to be as large as 1 and still have each diagram be
nonnegative.  The resulting bound on $\bbE[\tr (\hat M' - \lambda
I)^{2m}]$ is $p
\max\{((\sqrt{x}-1)^2-1)^{2m},((\sqrt{x}+1)^2-1)^{2m}\}$ plus small
corrections.  We find that the smallest eigenvalue is $\geq
1-2\sqrt{x} - x - \eps$ with high probability.

Combining these bounds, we find that the smallest eigenvalue is
asymptotically no lower than $(1-\sqrt{x})^2 - 2\min(1,x)$.  This is
within a $1-o(1)$ factor of the unproven-but-true value of
$(1-\sqrt{x})^2$ in the limits $x\ra 0$ and $x\ra \infty$.

We believe that it should be possible to improve this result to get an
asymptotic lower bound of $(1-\sqrt x)^2$, staying within the
framework of trace methods, using any of the three techniques we have
used.  This will require a more careful estimate of the negative terms
to show that our methods remain valid.
We leave the solution of this problem to future work.


\section*{Acknowledgments}
We are grateful to Guillaume Aubrun for bringing \cite{Rudelson,
  ALPT09a,ALPT09b} to our attention, for telling us about his conjecture, and for many helpful conversations on convex geometry.  AA
was supported by University of Latvia Research Grant and Marie Curie
grant QAQC (FP7-224886).  MBH was supported by U. S. DOE Contract
No. DE-AC52-06NA25396.  AWH was supported by U.S. ARO under grant
W9111NF-05-1-0294, the European Commission under Marie Curie grants
ASTQIT (FP6-022194) and QAP (IST-2005-15848), and the U.K. Engineering
and Physical Science Research Council through ``QIP IRC.''  MBH and
AWH thank the KITP for hospitality at the workshop on ```Quantum
Information Science''.

\end{document}